%!TEX TS-program = pdflatex
%!TEX encoding = UTF-8 Unicode
%!TeX spellcheck = en-US
%!TEX root = .\SPR-SODA24.tex
\documentclass[11pt]{article}
\usepackage[hmargin=1in,vmargin=1in]{geometry}
\lineskip=0pt

\usepackage[dvipsnames,usenames,table]{xcolor}
\usepackage{makecell} % L: added this for the glossary in Sec 3.1
\usepackage{hyperref}
\hypersetup{colorlinks=true, urlcolor=Blue, citecolor=Green, linkcolor=BrickRed, breaklinks, unicode}

\usepackage{hchang}
\usepackage{thm-restate}
\usepackage{cleveref}
\usepackage{comment}
\usepackage{soul}

% Weird that the * symbol does not display, thus the fix
\makeatletter
\renewcommand*{\@fnsymbol}[1]{\ensuremath{\ifcase#1\or *\or \dagger\or \ddagger\or
   \mathsection\or \mathparagraph\or \|\or **\or \dagger\dagger
   \or \ddagger\ddagger \else\@ctrerr\fi}}
\makeatother

%------------------------------------------
%  Paper-specific definitions
%------------------------------------------
% \theoremstyle{definition}
% \newtheorem{definition}{Definition}[section]

\newtheorem{theorem}{Theorem}[section]
\newtheorem{lemma}[theorem]{Lemma}
\newtheorem{claim}[theorem]{Claim}
\newtheorem{observation}[theorem]{Observation}
\newtheorem{problem}[theorem]{Problem}
\newtheorem{remark}[theorem]{Remark}
\newtheorem{corollary}[theorem]{Corollary}

\newtheorem{definition}[theorem]{Definition}
\newtheorem{question}[theorem]{Question}
\newtheorem{invariant}[theorem]{Invariant}

\newcommand{\len}[1]{{\norm{#1}}}
\def\eps{\e}
\newcommand{\weight}{w}
\DeclareMathOperator\dist{\delta}
\DeclareMathOperator{\cost}{cost}
\DeclareMathOperator{\tw}{tw}
\DeclareMathOperator{\level}{\mathit{lv}}
\newcommand{\dom}{\ensuremath{\mathrm{dom}}}
\newcommand{\bag}[1]{\ensuremath{B_{#1}}}
\newcommand{\bdry}{\partial\!}
\newcommand{\buff}{\mathcal{N}}
\newcommand{\seen}[1]{\ensuremath{\cS|_{#1}}}
\newcommand{\seenhat}[1]{\ensuremath{\hat{\cS}|_{#1}}}

\def\cX{\ensuremath{\mathcal{X}}}
\def\cS{\ensuremath{\mathcal{S}}}
\def\cT{\ensuremath{\mathcal{T}}}
\def\cC{\ensuremath{\mathcal{C}}}
\def\cF{\ensuremath{\mathcal{F}}}

% \renewcommand{\note}[1]{}
% \renewcommand{\red}[1]{}

%------------------------------------------
\begin{document}

\begin{titlepage}

\title{Shortcut Partitions in Minor-Free Graphs: \\Steiner Point Removal, Distance Oracles, Tree Covers, and More}

\author{%
Hsien-Chih Chang%
\thanks{Department of Computer Science, Dartmouth College. Email: {\tt hsien-chih.chang@dartmouth.edu}.} 
\and 
Jonathan Conroy%
\thanks{Department of Computer Science, Dartmouth College. Email: {\tt jonathan.conroy.gr@dartmouth.edu}}  
\and 
Hung Le%
\thanks{Manning CICS, UMass Amherst. Email: {\tt hungle@cs.umass.edu}}  
\and
Lazar Milenkovic%
\thanks{Tel Aviv University}  
\and
Shay Solomon%
\thanks{Tel Aviv University}  
\and
Cuong Than%
\thanks{Manning CICS, UMass Amherst. Email: {\tt cthan@cs.umass.edu}}  
}

\date{\today}

\maketitle
	
\thispagestyle{empty}

\begin{abstract}

The notion of \emph{shortcut partition}, introduced recently by Chang, Conroy, Le, Milenković, Solomon, and Than~\cite{CCLMST23}, is a new type of graph partition into low-diameter clusters.  
Roughly speaking, the shortcut partition guarantees that for every two vertices $u$ and $v$ in the graph, there exists a path
between $u$ and $v$
that intersects only a few clusters. 
They proved that any planar graph admits a shortcut partition and gave several applications, including a construction of tree cover for arbitrary planar graphs with stretch $1+\e$ and $O(1)$ many trees for any fixed $\e\in (0,1)$. 
However, the construction heavily exploits planarity in multiple steps,
and is thus inherently limited to planar~graphs.

In this work, we breach the ``planarity barrier'' to construct a shortcut partition for $K_r$-minor-free graphs for any $r$. To this end, we take a completely different approach 
---  our key contribution is a novel deterministic variant of the \emph{cop decomposition} in minor-free graphs \cite{andreae1986pursuit,agg14}. 
Our shortcut partition for $K_r$-minor-free graphs yields several direct applications.
Most notably, we construct the first \emph{optimal} distance oracle for $K_r$-minor-free graphs, with $1+\e$ stretch, linear space, and constant query time for any fixed $\e \in (0,1)$. 
The previous best distance oracle~\cite{AG06} uses $O(n\log n)$ space and $O(\log n)$ query time, and its construction relies on Robertson-Seymour structural theorem
and other sophisticated tools. 
We also obtain the first tree cover of $O(1)$ size for minor-free graphs with stretch $1+\eps$, while the previous best $(1+\eps)$-tree cover has size $O(\log^2 n)$ \cite{BFN19Ramsey}.  

As a highlight of our work, we employ our shortcut partition to resolve a major open problem
--- the \emph{Steiner point removal (SPR)} problem: Given any set $K$ of \emph{terminals} in an arbitrary edge-weighted planar graph $G$, is it possible to construct a minor $M$ of $G$ whose vertex set is~$K$, which preserves the shortest-path distances between all pairs of terminals in $G$ up to a \emph{constant} factor?  
Positive answers to the SPR problem were only known for very restricted classes of planar graphs: trees~\cite{Gupta01}, outerplanar graphs~\cite{BG08}, and series-parallel graphs~\cite{HL22}.   
We resolve the SPR problem in the affirmative for any planar graph, and more generally for any $K_r$-minor-free graph for any fixed $r$. 
To achieve this result, we prove the following general reduction and combine it with our new shortcut partition: 
For any graph family closed under taking subgraphs, the existence of a shortcut partition yields a positive solution to the SPR problem. 
\end{abstract}

\end{titlepage}

\section{Introduction}
\label{sec:Intro}

Partitioning a graph into clusters is a fundamental primitive for designing algorithms. 
Perhaps the most basic requirement of a partition is that every cluster would have a small diameter. 
However, to be useful, most partitions require one or more additional constraints, and achieving these constraints is the key to the power of those partitions.  
For example, \emph{probabilistic partition}~\cite{Bar96}, a principal tool in the metric embedding literature, guarantees that the probability of any two vertices being placed into different clusters is proportional to their distance. 
\emph{Sparse partition}~\cite{AP90,JLNR05} guarantees that each cluster has neighbors in only a few other clusters. 
\emph{Scattering partition}~\cite{Filtser20B} guarantees that each shortest path up to a certain length
only intersects a small number of clusters.  
These partitions have found a plethora of applications in a wide variety of areas, such as metric embeddings, distributed computing, routing, and algorithms for network design problems, to name a few. 

Recently, Chang, Conroy, Le, Milenković, Solomon, and Than~\cite{CCLMST23} introduced a new notion of partition called \emph{shortcut partition}.
Roughly speaking, a shortcut partition guarantees that for every two vertices $u$ and $v$ in the graph, there exists a low-hop path in the cluster graph between $C_u$ and $C_v$, where $C_u$ and $C_v$ are the clusters containing $u$ and $v$, respectively. 
More formally, a \EMPH{clustering} of a graph $G$ is a partition of the vertices of $G$ into connected \EMPH{clusters}.   
The \EMPH{cluster graph $\check{G}$} of a clustering $\mathcal{C}$ of $G$ is the graph where each vertex of $\check{G}$ corresponds to a cluster in $\mathcal{C}$, and there is an edge between two vertices in~$\check{G}$ if there is an edge in $G$ whose endpoints are in the two corresponding clusters. 

\begin{definition} \label{def:shortcut-partition}
An \EMPH{$(\e,h)$-shortcut partition} is a clustering $\mathcal{C} = \set{C_1, \ldots, C_m}$ of $G$ such that:
\begin{itemize}
    \item \textnormal{[Diameter.]}  the \emph{strong}%
    \footnote{The \EMPH{strong} diameter of cluster $C$ is the one of induced subgraph $G[C]$. 
    In contrast, the \EMPH{weak} diameter of $C$ is $\max_{u,v\in C} \dist_G(u,v)$. Here, and throughout this paper, \EMPH{$\dist_G(u,v)$} denotes the distance between $u$ and $v$ in graph $G$.}
    diameter of each cluster $C_i$ is at most $\e \cdot \diam(G)$;

    \item \textnormal{[Low-hop.]}  for any vertices $u$ and $v$ in $G$,
    there is a path $\check{\pi}$ in the cluster graph $\check{G}$ between the clusters containing $u$ and $v$ such that:
    \begin{enumerate}
        \item[(1)]  $\check{\pi}$ has hop-length at most $\e h\cdot \Ceil{ \frac{\dist_G(u,v)}{\eps \cdot \diam(G)} }$, 

        \item[(2)]   $\check{\pi}$ only contains a subset of clusters that have nontrivial intersections with a given shortest path in $G$ between $u$ and $v$.
    \end{enumerate}
\end{itemize}
The hop-length of $\check{\pi}$ is measured in \emph{units of $\Delta \coloneqq\eps \cdot \diam(G)$}; if $u$ and $v$ have distance at most $\alpha\cdot\Delta$, then the hop-length of $\check{\pi}$ is $\alpha\cdot\e h$, where $\alpha$ ranges between $0$ and $1/\e$.
In particular, the hop-length of $\check{\pi}$ is always at most $O(h)$ as $\dist_G(u,v)$ is at most $\diam(G)$.
\end{definition}

The shortcut partition is similar to the scattering partition introduced by Filtser~\cite{Filtser20B}. 
A key difference is that in a scattering partition, \emph{every} shortest path of length $\alpha\e \cdot \diam(G)$ intersects at most $O(\alpha)$ clusters, while in a shortcut partition, it only requires that there is a low-hop path in the cluster graph between the two clusters containing the path's endpoints. 
The fact that scattering partition requires a stronger guarantee on shortest paths makes it very difficult to construct; it remains an open problem whether scattering partition for every planar graph exists~\cite[Conjecture~1]{Filtser20B}.  
Although shortcut partition provides a weaker guarantee, 
it is already sufficient for many applications as shown in previous work~\cite{CCLMST23}, including the first tree cover in planar graphs with stretch $1+\e$ using $O(1)$ many trees for any fixed $\e\in (0,1)$, a simpler proof to the existence of a $+\e\cdot\diam(G)$ additive embedding of planar graph into bounded-treewidth graph, distance oracles, labeling schemes, (hop-)emulators, and more. 

For any given $\e\in (0,1)$, the authors of \cite{CCLMST23} constructed an $(\e,O(\e^{-2}))$-shortcut partition for any planar graph. 
This naturally motivates the question of constructing a shortcut partition for broader classes of graphs, specifically $K_r$-minor-free graphs.%
\footnote{We sometimes drop the prefix ``$K_r$-'' in $K_r$-minor-free graphs when the clique minor has constant size $r$.}
This will open the door to seamlessly extend algorithmic results from planar graphs to $K_r$-minor-free graphs. 
However, the construction of
\cite{CCLMST23}
heavily exploits planarity in multiple steps. It starts from the outerface of $G$, and works toward the interior of $G$ in a recursive manner, similar in spirit to Busch, LaFortune, and Tirthapura~\cite{BLT14}.  
Specifically, the construction first finds a collection of subgraphs of $G$ call \emph{columns} such that every vertex near the outerface of $G$ belongs to one of the columns.
The construction then recurs on subgraphs induced by vertices that are not in any of the columns.  
The overall construction produces a structure called the \emph{grid-tree hierarchy}, which is then used to construct a shortcut partition.  
The construction relies on the fact that each column contains a shortest path between two vertices on the outer face, which splits the graph into two subgraphs using Jordan curve theorem.
As a result, constructing a shortcut partition for $K_r$-minor-free graphs requires breaking away from the planarity-exploiting framework of \cite{CCLMST23}. 

\medskip
In this work, we overcome this barrier and construct a shortcut partition for $K_r$-minor-free graphs. 

\begin{restatable}{theorem}{ShortcutMinor}
\label{thm:shortcut-minor} 
Any edge-weighted $K_r$-minor-free graph admits an $(\e,2^{O(r \log r)}/\e)$-shortcut partition for any $\eps\in (0,1)$.
\end{restatable}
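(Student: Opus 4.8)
The plan is to derive the shortcut partition from a deterministic \emph{cop decomposition} of the $K_r$-minor-free graph $G$, following the spirit of the planar construction in \cite{CCLMST23} but replacing the planarity-driven ``grid-tree hierarchy'' with a hierarchy coming from the cop strategy. Recall that the cop decomposition of Andreae~\cite{andreae1986pursuit} and Abraham–Gupta–Gupta~\cite{agg14} places $g(r) = 2^{O(r\log r)}$ cops on shortest paths, so that each connected component of the remaining graph is ``shielded'' by the union of a bounded number of shortest paths; recursing on these components yields a laminar family of clusters organized into a decomposition tree of bounded degree. First I would set $\Delta \coloneqq \e\cdot\diam(G)$ and run a \emph{scale-aware, deterministic} version of this process: at each node of the decomposition tree we cut every shortest path guarded by a cop into consecutive subpaths of length roughly $\Delta / g(r)$, so that every piece we ever commit to a cluster has strong diameter $O(\Delta / g(r)) = O(\e\cdot\diam(G))$, and then we recurse on the components obtained after deleting these guarded paths. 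The clusters of the partition $\mathcal C$ are exactly these $\Theta(\Delta/g(r))$-length path-pieces (with the remaining ``unguarded'' interior eventually consumed at the leaves); deterministic choices of where to cut and which component to enter guarantee the diameter bound of \Cref{def:shortcut-partition}.

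The heart of the argument is the low-hop property. Fix $u,v$ and a shortest path $\pi$ between them in $G$, with $\ell \coloneqq \dist_G(u,v)$. I would trace $\pi$ through the decomposition tree: whenever $\pi$ currently lives inside a component handled at some node, it either stays inside one child component, or it crosses into the guarded shortest paths placed at that node. Because at most $g(r)$ cops act at each node and each cop's path contributes only $O(1)$ clusters \emph{per unit of $\Delta$} (each guarded path is chopped into $\Theta(\Delta/g(r))$-pieces, and $\pi$ of length $\ell$ can overlap a single shortest path in a sub-curve of length at most $\ell$, hence cross $O(g(r)\cdot \ell/\Delta + 1)$ of its pieces), the total number of clusters that $\pi$ enters at any single level is $O(g(r))\cdot O(g(r)\cdot\ell/\Delta+1)$. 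The subtle point, and what controls the depth of the recursion along $\pi$, is that once $\pi$ passes through a guarded path at a node it is ``captured'' there and does not descend further at that branch; combined with the fact that the guarded paths at each node separate the children, this shows $\pi$'s trace through the tree has only $O(1)$ relevant levels' worth of clusters it actually touches — giving a total of $2^{O(r\log r)}\cdot(\ell/\Delta+1)$ clusters. Ordering these clusters along $\pi$ and using that consecutive ones share a $G$-edge (they are incident along $\pi$) produces the desired path $\check\pi$ in the cluster graph of hop-length $\e h\cdot\lceil \ell/\Delta\rceil$ with $h = 2^{O(r\log r)}/\e$, and property~(2) holds by construction since every cluster on $\check\pi$ meets $\pi$.

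The main obstacle I anticipate is making the cop decomposition \emph{deterministic} and \emph{scale-calibrated} simultaneously, while keeping the hop bound free of any $\log n$ or $\diam(G)$ dependence. The randomized/oblivious versions of cop decomposition used for padded decompositions in \cite{agg14} tolerate a $\log$-factor; here we must argue that cutting each guarded shortest path at deterministic offsets (say at multiples of $\Delta/g(r)$) does not let a short path $\pi$ oscillate across cuts more than $O(1)$ times per unit length — this is where the choice to cut \emph{shortest} paths (so any subpath of $\pi$ that re-enters a guarded path must have made genuine progress, by the triangle inequality) is essential. A secondary technical point is bounding the number of distinct components of $G$ minus the guarded paths that $\pi$ can visit \emph{at one node}: since these components are separated by the $\le g(r)$ guarded paths and $\pi$ is a shortest path, it visits a given component as a contiguous subpath, so it visits at most $g(r)+1$ of them before being captured, which is what keeps the branching along $\pi$ under control. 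Assembling these observations and carefully accounting for the $g(r) = 2^{O(r\log r)}$ factors at each of the $O(1)$ relevant levels yields the stated $(\e, 2^{O(r\log r)}/\e)$-shortcut partition.
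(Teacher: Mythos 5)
There is a genuine gap, and it sits exactly at the sentence where you claim that the trace of the shortest path $\pi$ through the decomposition tree ``has only $O(1)$ relevant levels' worth of clusters'' because $\pi$ is ``captured'' once it meets a guarded path. Nothing in a plain (deterministic, scale-calibrated) cop decomposition prevents the skeletons at consecutive levels of the recursion from nesting arbitrarily tightly: a path of tiny length can cross the guarded paths of level $1$, then level $2$, then level $3$, and so on, descending through (and hence intersecting clusters of) $\Omega(\mathrm{depth})$ many levels, where the depth of the recursion tree is unbounded in terms of $r$ and $\e$. In that situation every cluster-graph path between the clusters of the endpoints that uses only clusters meeting $\pi$ must hop through roughly one cluster per level, so the hop bound you need simply fails. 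This worst case is precisely what the randomized construction of Abraham et al.\ controls only \emph{in expectation}, and it is the reason the paper does not use the cop decomposition off the shelf: its main technical contribution is a \emph{deterministic buffer property}. Whenever a supernode $X$ gets cut off from a component, the construction explicitly reassigns all unassigned vertices within distance $\Delta/r$ of the relevant boundary to existing supernodes (the \textsc{GrowBuffer} step), and a separate argument shows this does not blow up the supernode radius, because each supernode can be forced to expand at most once for each of the at most $r-2$ supernodes it can see. With the buffer in hand, the hop analysis is driven not by the depth of the recursion tree but by the \emph{bag width}: any path of length less than $\Delta/r$ contained in the domain of its topmost supernode $\eta$ can only intersect supernodes adjacent to $\eta$, so an induction on a ``$k$-constrained'' parameter bounded by $r-1$ gives cost $r^{O(r)}$ per length-$(\Delta/r)$ subpath, and the theorem follows by chopping an arbitrary path into such subpaths.

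Two smaller points. First, your clusters are pieces of the guarded shortest paths of length $\Theta(\Delta/g(r))$ with the ``unguarded interior consumed at the leaves''; the paper instead keeps each supernode as a radius-$\Delta$ neighborhood of an SSSP skeleton with at most $r-1$ leaves and partitions it via a $\Delta$-net on the skeleton, which is what makes the per-supernode cluster count $O(r)$ rather than $g(r)\cdot\ell/\Delta$ and keeps the final bound at $2^{O(r\log r)}/\e$ after the $r-1$ levels of the induction. Second, your triangle-inequality argument that $\pi$ cannot oscillate across the deterministic cut points of a single guarded path is fine as far as it goes, but it only controls behavior \emph{within one level}; it does not address the inter-level descent described above, which is the actual obstruction. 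In short, your high-level direction (replace the planar grid-tree hierarchy by a cop-style hierarchy) matches the paper, but the proposal is missing the buffered cop decomposition and the radius-control argument that make the hop bound deterministic, and without them the central ``$O(1)$ levels'' step is false as stated.
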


\noindent\textbf{Remark.} \,\,
{\sl
\Cref{def:shortcut-partition} is slightly stronger than the corresponding definition of shortcut partition for planar graphs in \cite{CCLMST23} (Definition 2.1 in their paper). Specifically, their definition states that the hop-length of $\check{\pi}$ is at most $h$, regardless of $\dist_G(u,v)$, while our definition allows smaller hop-lengths for smaller distances. 
(For example, when $\dist_G(u,v) = \e \cdot \diam(G)$, the hop-length of $\check{\pi}$ is $O(\eps h)$ instead of $h$.)
Another difference is that, in the current definition, the nontrivial intersections of clusters
contained by $\check{\pi}$ 
stated in condition (2) of the ``low-hop'' property are with respect to a shortest path in the graph, whereas in \cite{CCLMST23} they are with respect to an approximate shortest path; 
we discuss this point further in \Cref{sec:other-apps}.
In particular, the shortcut partition provided by 
\Cref{thm:shortcut-minor} for minor-free graphs subsumes the  one in \cite{CCLMST23} for planar graphs. 

The hop length of $2^{O(r \log r)}/\e$ of the shortcut partition in \Cref{thm:shortcut-minor} is \emph{optimal} for every constant~$r$ up to a constant factor: any shortcut partition of a path
would have hop length $1/\eps$ between the two endpoints of the path. 
Also, in the particular case of planar graphs, our shortcut partition in fact improves over \cite{CCLMST23}; the hop length of their partition is $O(\eps^{-2})$.
}

\begin{wrapfigure}{r}{0.33\textwidth}
\centering
    \def\svgwidth{0.5\textwidth}
        
    %% Creator: Inkscape 1.1.1 (3bf5ae0d25, 2021-09-20), www.inkscape.org
%% PDF/EPS/PS + LaTeX output extension by Johan Engelen, 2010
%% Accompanies image file 'cut-off.pdf' (pdf, eps, ps)
%%
%% To include the image in your LaTeX document, write
%%   \input{<filename>.pdf_tex}
%%  instead of
%%   \includegraphics{<filename>.pdf}
%% To scale the image, write
%%   \def\svgwidth{<desired width>}
%%   \input{<filename>.pdf_tex}
%%  instead of
%%   \includegraphics[width=<desired width>]{<filename>.pdf}
%%
%% Images with a different path to the parent latex file can
%% be accessed with the `import' package (which may need to be
%% installed) using
%%   \usepackage{import}
%% in the preamble, and then including the image with
%%   \import{<path to file>}{<filename>.pdf_tex}
%% Alternatively, one can specify
%%   \graphicspath{{<path to file>/}}
%% 
%% For more information, please see info/svg-inkscape on CTAN:
%%   http://tug.ctan.org/tex-archive/info/svg-inkscape
%%
\begingroup%
  \makeatletter%
  \providecommand\color[2][]{%
    \errmessage{(Inkscape) Color is used for the text in Inkscape, but the package 'color.sty' is not loaded}%
    \renewcommand\color[2][]{}%
  }%
  \providecommand\transparent[1]{%
    \errmessage{(Inkscape) Transparency is used (non-zero) for the text in Inkscape, but the package 'transparent.sty' is not loaded}%
    \renewcommand\transparent[1]{}%
  }%
  \providecommand\rotatebox[2]{#2}%
  \newcommand*\fsize{\dimexpr\f@size pt\relax}%
  \newcommand*\lineheight[1]{\fontsize{\fsize}{#1\fsize}\selectfont}%
  \ifx\svgwidth\undefined%
    \setlength{\unitlength}{566.6554372bp}%
    \ifx\svgscale\undefined%
      \relax%
    \else%
      \setlength{\unitlength}{\unitlength * \real{\svgscale}}%
    \fi%
  \else%
    \setlength{\unitlength}{\svgwidth}%
  \fi%
  \global\let\svgwidth\undefined%
  \global\let\svgscale\undefined%
  \makeatother%
  \begin{picture}(1,0.46294149)%
    \lineheight{1}%
    \setlength\tabcolsep{0pt}%
    \put(0,0){\includegraphics[width=\unitlength,page=1]{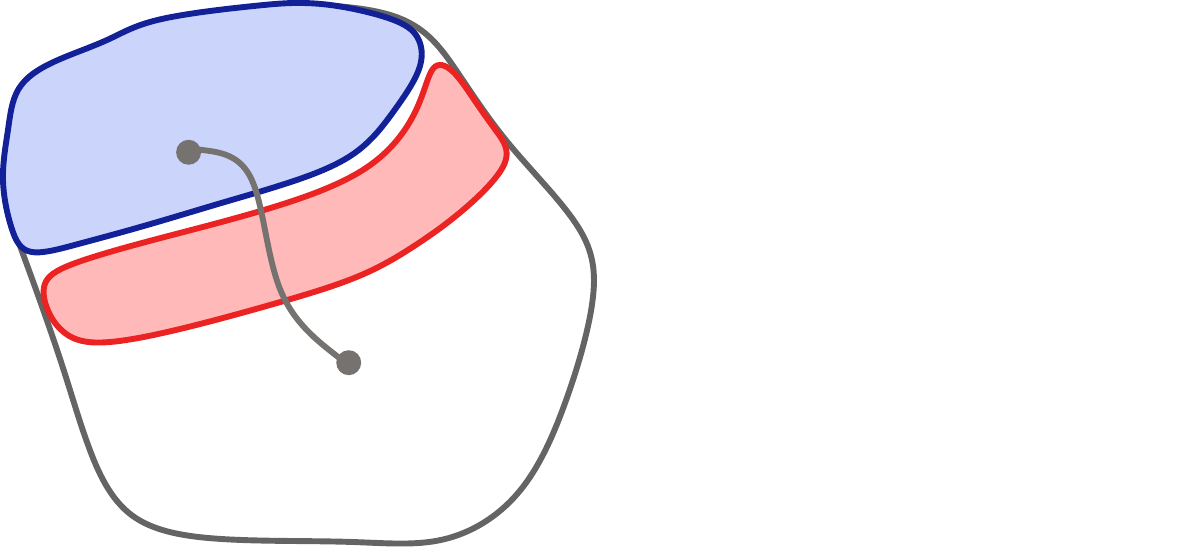}}%
    \put(0.18051346,0.22092152){\makebox(0,0)[lt]{\lineheight{1.25}\smash{\begin{tabular}[t]{l}{\textit{P}}\end{tabular}}}}%
    \put(0.46657258,0.36517687){\makebox(0,0)[lt]{\lineheight{1.25}\smash{\begin{tabular}[t]{l}{\textit{$\len{P}\ge\gamma$}}\textbf{ }\end{tabular}}}}%
    \put(0.0188658,0.01325153){\makebox(0,0)[lt]{\lineheight{1.25}\smash{\begin{tabular}[t]{l}{\textit{G}}\end{tabular}}}}%
    \put(0.25621727,0.39932323){\makebox(0,0)[lt]{\lineheight{1.25}\smash{\begin{tabular}[t]{l}{\textit{X}}\end{tabular}}}}%
    \put(0.36098438,0.31692986){\makebox(0,0)[lt]{\lineheight{1.25}\smash{\begin{tabular}[t]{l}{\textit{$\eta$}}\end{tabular}}}}%
  \end{picture}%
\endgroup%

    \caption{A cluster $X$ ``cut off'' by $\eta$ from part of graph $G$. 
    There is a buffer of width~$\gamma$ between $X$ and the part of the graph that it is cut off from.}
    \label{fig:cut-off}
\end{wrapfigure} 

\paragraph{Techniques.}
We base our construction on a modified \emph{cop decompo-\allowbreak sition} for minor-free graphs, first introduced by Andreas~\cite{andreae1986pursuit} in the context of the \emph{cops-and-robbers} game. Loosely speaking, a cop decomposition is a rooted tree decomposition where vertices of every bag belong to at most $r-2$ single-source shortest path (SSSP) trees, called \emph{skeletons}. 
Abraham, Gavoille, Gupta, Neiman, and Talwar~\cite{agg14}, in their construction of a padded decomposition for $K_r$-minor-free graphs, adapted the cop decomposition by allowing each bag to contain up to $r-2$ clusters%
\footnote{Later in the paper, we rename the clusters as \emph{supernodes}; we reserve the former term for the actual clusters in the shortcut partition to be constructed.}, each of which is the set of vertices within radius $\sigma \cdot \Delta$ from a skeleton in the bag; here $\sigma$ is a parameter in $(0,1)$ randomly sampled from a truncated exponential distribution, and $\Delta \coloneqq \e \cdot \diam(G)$.  
One of their goals is to ensure that for every vertex $v$ in the graph, the process of constructing the cop decomposition guarantees that the number of (random) clusters that could contain $v$ is small \emph{in expectation}; this is the bulk of their analysis,  through clever and sophisticated  use of potential function and setting up a (sub)martingale. Their result can be interpreted as guaranteeing what we call a \EMPH{buffer property}\footnote{There is a technical difference between our buffer property and that of \cite{AGGM06}, which we will clarify in \Cref{SS:buffered-cop}.}:

\begin{quote}
If one cluster $X$ is ``cut off'' from a piece of the graph by another cluster, then any path from~$X$ to that piece has length at least $\gamma$, which we call the \emph{buffer width}. 
\end{quote}

\noindent In particular, the construction of \cite{agg14} intuitively implies that the \emph{expected} buffer width is about~$\gamma$. (Their end goal is a stochastic partition, and hence they could afford the buffer property in expectation.) Their expected bound on the buffer width is insufficient for our shortcut partition, as well as for all other applications considered in our paper. One either has to guarantee that the buffer width holds with high probability or, ideally, holds deterministically. 

In this work, we achieve the buffer property \emph{deterministically}. 
To do this, we add a layer of recursion on top of the cop decomposition by~\cite{agg14} to directly fix the buffer property whenever it is violated, thereby bypassing the need for the complicated analysis of the potential function. In more detail, we build a cop decomposition by iteratively creating clusters.  
At each point in the construction, we create an SSSP tree connecting to some existing clusters,
and initialize a new cluster with that tree as the cluster's skeleton. 
Our idea to enforce the buffer property is natural: we (recursively) assign those vertices that violate the property to join previously-created clusters. 
Specifically, whenever a cluster $X$ is cut off by some cluster $\eta$, we assign every vertex within distance $\gamma$ of $X$ to be a part of some existing clusters. 
However, enforcing the buffer property directly comes at the cost of increasing the radius of some existing clusters --- recall that we want all points in a cluster to be at most $O(\Delta)$ distance away from its skeleton. Therefore, our implementation of vertex assignment is very delicate; otherwise, the diameter of a cluster could continue growing, passing the diameter bound prescribed by the shortcut partition. Our key insight is to show that during the course of our construction,  each cluster can only be expanded a single time for each of the $O(r)$ clusters that it can ``see''.
This lets us achieve a deterministic buffer width of $\gamma = O(\Delta/r)$.

\subsection{Steiner Point Removal Problem}

In the Steiner Point Removal (SPR) problem, we are given an undirected weighted graph $G = (V,E,\weight)$ with vertex set $V$, edge set $E$,  nonnegative weight function $\weight$ over the edges, and a subset $K$ of $V$. The vertices in $K$ are called \EMPH{terminals} and the vertices in $V \setminus K$ are called \EMPH{non-terminal} or \EMPH{Steiner} vertices. 
The goal in the SPR problem is to find a graph \emph{minor} $M$ of $G$ such that $V(M) = K$, and for every pair $t_1, t_2$ 
of terminals in $K$, $\dist_M(t_1, t_2) \leq \alpha \cdot\dist_G(t_1, t_2)$, for some {constant} $\alpha \geq 1$; such a graph minor $M$ of $G$ is called a \EMPH{distance-preserving minor} of $G$ with \EMPH{distortion~$\alpha$}.%
\footnote{In the literature~\cite{KNZ14} the term \emph{distance-preserving minor} allows the existence of Steiner vertices as well, with the goal to minimize their usage.  For our purpose we do not allow any Steiner vertices.} 

Gupta~\cite{Gupta01} was the first to consider the problem of removing Steiner points to preserve all terminal distances. 
He showed that every (weighted) \emph{tree} can be replaced by another tree on the terminals where the shortest path distances are preserved up to a factor of 8;
another proof of this result is given by~\cite{FKT19}. 
Chan, Xia, Konjevod, and Richa~\cite{CXKR06} observed that Gupta's construction in fact produces a distance-preserving minor of the input tree,
and showed a matching lower bound: 
there exists a tree and a set of terminals, such that any distance-preserving minor of that tree must have distortion at least $8(1-o(1))$. 
Both Chan \etal~\cite{CXKR06} and Basu and Gupta~\cite{BG08} considered the following question:

\begin{question} \label{q:main}
Does every $K_r$-minor-free graph for any fixed $r$ admit a distance-preserving minor with \emph{constant distortion}?
\end{question} 

Question~\ref{q:main} has attracted significant research attention over the years, and numerous works have attempted to attack it from different angles. 
Some introduced new frameworks~\cite{FKT19,Filtser19,Filtser20} that simplify known results; others considered the problem for general graphs, establishing the distortion bound of $O(\log |K|)$ after a sequence of works~\cite{KNZ15,Cheung2018,Filtser19}; there are also variants where Steiner points are allowed, but their number should be minimized~\cite{KNZ14,CGH16,CRT22}; and yet another achieved a constant \emph{expected} distortion~\cite{EGKRTT14}.

Nevertheless, Question~\ref{q:main} remains widely open: a positive solution for $K_r$-minor-free graphs is not known for any $r\geq 5$. Gupta's result for trees~\cite{Gupta01} can be seen as providing a solution for $K_3$-minor-free graphs.  
Basu and Gupta~\cite{BG08} gave a positive answer for outerplanar graphs (which is $(K_{2,3},K_4)$-minor-free). 
Recently, Hershkowitz and Li~\cite{HL22} provided a solution for $K_4$-minor-free graphs, also known as  \emph{series-parallel graphs}.  
Even for planar graphs, a subclass of $K_5$-minor-free graphs, the answer is not known. Both outerplanar and series-parallel graphs are very restricted classes of planar graphs: they have treewidth at most $2$. For slightly larger graph classes, such as treewidth-$3$ planar graphs or $k$-outerplanar graphs for any constant $k$, the SPR problem has remained open to date.  

\medskip
We resolve  Question~\ref{q:main} in the affirmative, thus solving the SPR problem for minor-free graphs in its full generality:
\begin{theorem}
    \label{thm:spr-sol}
    Let $G = (V, E, \weight)$ be an arbitrary edge-weighted $K_r$-minor-free graph and let $K \subseteq V$ be an arbitrary set of terminals. 
    Then, there is a solution to the SPR problem on $G$ with distortion $2^{O(r \log r)}$.
\end{theorem}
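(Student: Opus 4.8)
The plan is to reduce the SPR problem to the existence of a shortcut partition, and then invoke \Cref{thm:shortcut-minor}. The abstract already announces this strategy: ``For any graph family closed under taking subgraphs, the existence of a shortcut partition yields a positive solution to the SPR problem.'' Since the class of $K_r$-minor-free graphs is closed under taking subgraphs (indeed under taking minors), and by \Cref{thm:shortcut-minor} every $K_r$-minor-free graph admits an $(\e, 2^{O(r\log r)}/\e)$-shortcut partition, the theorem will follow once the reduction is in place with the right quantitative bounds. So the bulk of the work is the reduction itself.

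First I would set up the standard iterative/recursive framework for SPR (in the spirit of \cite{Filtser19}): maintain a minor of $G$ on a shrinking vertex set that always contains $K$, and repeatedly ``contract away'' Steiner vertices in controlled batches, charging the distortion incurred at each step to a geometrically decreasing scale. Concretely, fix a scale $\Delta$, apply the shortcut partition to the current graph (or to appropriate subgraphs at that scale) with parameter $\e$ chosen as a small constant depending only on $r$; each cluster has strong diameter $\le \e\cdot\diam$, and contracting each cluster to a single representative vertex yields a minor. The low-hop property is exactly what controls how distances blow up: a shortest path between two terminals at distance $d$ passes through only $O(h) = 2^{O(r\log r)}/\e$ clusters per unit of $\Delta$, so after contracting clusters the path length grows by at most an additive $O(h\cdot\e\cdot\diam)$ per scale — and summing the geometric series over all scales gives a total multiplicative distortion of $2^{O(r\log r)}$ once $\e$ is tuned. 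The key structural point is condition~(2) of the low-hop property: the clusters on $\check\pi$ all nontrivially intersect an actual shortest path between $u$ and $v$, which is what lets us contract clusters without the contracted graph ``wandering'' far from $G$ — this is what makes the result a genuine minor with bounded distortion rather than merely a low-distortion metric.

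The steps, in order: (i) formalize the scale decomposition and the invariant that the current graph is a minor of $G$ containing $K$ with controlled distortion so far; (ii) at each scale, partition into low-diameter pieces, apply \Cref{thm:shortcut-minor} within each piece (using that subgraphs of $K_r$-minor-free graphs are $K_r$-minor-free), contract each cluster to one vertex, and argue the result is still a minor; (iii) bound the per-scale distortion using the low-hop property, being careful that the hop-length is measured in units of $\Delta$ so that short distances incur only proportionally small error; (iv) sum over $O(\log(\text{aspect ratio}))$ scales and verify the geometric series converges to $2^{O(r\log r)}$; (v) handle the endgame when only terminals remain. One subtlety to watch is ensuring that clusters are merged/assigned so that each terminal ends up as its own representative (or we pick terminal representatives when available), so that $V(M) = K$ exactly with no leftover Steiner vertices.

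The main obstacle I anticipate is step (ii)–(iii): guaranteeing that the iterated contraction produces a \emph{single} minor of the original $G$ (not just a sequence of metric approximations) while keeping the distortion multiplicative. The danger is that contracting a cluster at one scale can distort the diameter estimates used at the next scale, so one must recompute diameters/partitions on the \emph{contracted} graph and argue that distances there are still faithful up to the distortion accumulated so far; this requires that the shortcut partition's guarantees compose cleanly across scales, which in turn leans on condition~(2) (intersection with a true shortest path) so that contracted edges correspond to short genuine subpaths. Making this bookkeeping airtight — in particular showing the buffer/diameter slack does not compound badly — is where the real care is needed; the final constant is $2^{O(r\log r)}$, inherited directly from the hop-length in \Cref{thm:shortcut-minor}.
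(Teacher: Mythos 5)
Your high-level plan (reduce SPR to the shortcut partition of \Cref{thm:shortcut-minor}) matches the paper's strategy, but the reduction you sketch is not the paper's and, as described, has a genuine gap precisely at steps (iii)--(v). You propose contracting shortcut-partition clusters scale by scale, with scales anchored to the global diameter, and charging an additive error of $O(h\cdot\e\cdot\diam)$ per scale. Note that $h\cdot\e\cdot\diam = 2^{O(r\log r)}\cdot\diam(G)$: this error is proportional to the \emph{global} diameter, not to the distance of the terminal pair under consideration, so for terminals $t,t'$ with $\dist_G(t,t')\ll\diam(G)$ the ``geometric series'' does not convert into a multiplicative bound. The only way to rescue such a charging argument is to ensure that each pair $(t,t')$ is affected only by contractions at scales comparable to $\dist_G(t,t')$, which requires processing vertices at a scale proportional to their distance to the terminal set $K$ (a vertex on a near-shortest $t$--$t'$ path has $\dist_G(v,K)\le\dist_G(t,t')$), not at global diameter scales. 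Relatedly, your scheme never specifies the mechanism by which Steiner vertices disappear: contracting a cluster to an arbitrary representative leaves a Steiner vertex unless the cluster is merged into a terminal's region, clusters containing two terminals must never be contracted (that would collapse a terminal distance to $0$), and SPR requires that each terminal's preimage be a \emph{connected} set covering all Steiner vertices. ``Handle the endgame when only terminals remain'' assumes exactly what has to be proved.

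For comparison, the paper's route is Filtser-style terminal assignment rather than iterated contraction: it defines an assignment $f:V\to K$ computed in iterations $i$, where the vertices with $\zeta^{i-1}\le\dist_G(v,K)<\zeta^i$ are assigned using a \emph{$\beta$-approximate $(\tau,\zeta^{i-1})$-scattering partition} (\Cref{def:scatter}) of the subgraph induced on still-unassigned vertices; such partitions are obtained from the shortcut partition via \Cref{lem:scatter}. Clusters are attached to already-assigned vertices through ``linking vertices,'' which guarantees connectivity of $f^{-1}(t)$, and the distortion analysis (\Cref{clm:assign-itr}, \Cref{clm:real-dist}, and the interval/detour decomposition in the proof of \Cref{lem:scattering-spr}) is a double induction that charges every error along a terminal--terminal shortest path to local distances to $K$; this is where the constant $2^{O(r\log r)}$ actually comes from, together with the subtlety that the approximate scattered paths are only $\beta$-approximate relative to the scale $\Delta$, not relative to the endpoints' distance. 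None of this bookkeeping is present in your proposal, and without it the per-scale/per-pair charging you invoke does not go through.
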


We prove \Cref{thm:spr-sol} by devising a general reduction from SPR to shortcut partition. Specifically:

\begin{theorem}
\label{thm:shortcut-to-SPR} 
If every subgraph of $G$ admits an $(\e,f(r)/\e)$-shortcut partition for every $\e \in (0,1)$,
then $G$ admits a solution to the SPR problem with distortion $O(f(r)^{13})$.    
\end{theorem}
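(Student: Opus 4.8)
The plan is to build the distance-preserving minor $M$ by a recursive, top-down clustering scheme driven by the shortcut partition, in the spirit of the classical approaches for trees and series-parallel graphs but adapted to the weaker ``low-hop'' guarantee. At the top level I would apply the hypothesized $(\e,f(r)/\e)$-shortcut partition to $G$ (with $\e$ a small constant, to be tuned), obtaining clusters $C_1,\dots,C_m$ of strong diameter at most $\e\cdot\diam(G)$. Each cluster is a connected subgraph, so I can recurse on $G[C_i]$ with the terminal set $K\cap C_i$; the recursion bottoms out when a cluster contains at most one terminal (or none). The output minor is assembled by taking, for each cluster, a recursively-constructed minor on its terminals, and then stitching these together across cluster boundaries: for each pair of adjacent clusters I would pick a representative terminal in each and connect them (or rather, identify the inter-cluster structure via contracting the appropriate connecting subpaths). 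The crucial point is that contracting a cluster $C_i$ down toward a terminal is a legal minor operation precisely because $G[C_i]$ is connected, and the low-diameter guarantee controls how much a terminal-to-terminal distance is inflated by such a contraction.

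The analysis of the distortion is where the low-hop property does the real work. Fix two terminals $t_1,t_2$ and a shortest path $\pi$ between them in $G$. By the shortcut-partition guarantee there is a path $\check\pi$ in the cluster graph of hop-length $O(f(r)/\e)\cdot\lceil \dist_G(t_1,t_2)/(\e\diam(G))\rceil$, using only clusters that genuinely intersect $\pi$. Walking along $\check\pi$, every ``hop'' corresponds to crossing from one cluster to an adjacent one, and inside each visited cluster we pay at most its diameter $\e\cdot\diam(G)$ (plus the cost of the recursively-built minor inside it, which we bound inductively). I would charge the total cost as (number of hops)$\times$(per-cluster detour), and observe that the number of hops, when $\dist_G(t_1,t_2)$ is on the order of $\alpha\cdot\e\diam(G)$, is $O(\alpha f(r))$, so the detour incurred at this level of recursion is $O(\alpha f(r))\cdot \e\diam(G)=O(f(r))\cdot\dist_G(t_1,t_2)$ up to the rounding. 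Crucially, because condition (2) forces $\check\pi$ to stay inside clusters touching $\pi$, we can localize the argument: the recursion ``inside'' those clusters handles sub-pairs of terminals whose distances are an $\e$-fraction smaller, and the recursion tree has depth $O(\log_{1/\e}(\text{aspect ratio}))$ — but we must prevent the distortion from compounding multiplicatively over the depth, which would be fatal.

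The main obstacle, and where most of the care will go, is exactly that: a naive recursion multiplies the per-level blowup $O(f(r))$ across $\Theta(\log(\text{aspect ratio}))$ levels, giving a bound that is not even bounded independent of $n$, let alone $O(f(r)^{13})$. The fix — and this is the heart of why the theorem states a fixed polynomial power $f(r)^{13}$ — must be a ``telescoping'' or ``amortization'' argument: the detour a terminal pair suffers at a given level is proportional to $\e\cdot\diam$ of the current subgraph, and as we descend the diameters shrink geometrically, so the \emph{sum} over all levels of a single root-to-leaf branch telescopes to a geometric series that converges to $O(f(r))$ times the base distance, not $f(r)^{\text{depth}}$. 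Making this rigorous requires that when two terminals are ``separated'' at level $j$ (placed in different clusters there), the relevant distance scale is comparable to $\e^j\diam(G)$, and that a terminal pair is charged a nontrivial detour at only $O(\log_{1/\e}(1/\e)) = O(1)$ consecutive levels before being separated — so the apparently large exponent $13$ absorbs: the constant number of recursion stages at which a pair is ``active'', the factor from converting weak/strong diameter, the factor from the $\lceil\cdot\rceil$ rounding when $\dist_G(t_1,t_2)$ is small compared to $\e\diam(G)$, the factor from choosing representative terminals inside clusters (which can be as far as $\e\diam$ from the ideal crossing point), and the stitching overhead across cluster boundaries. I would also need to verify the minor is well-formed: the cluster subgraphs $G[C_i]$ are vertex-disjoint and connected, each recursive minor on $K\cap C_i$ is a minor of $G[C_i]$, and the gluing operations (contracting connecting paths between adjacent clusters) are disjoint from the interiors, so the union is a legitimate minor of $G$ on vertex set $K$. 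The requirement that the shortcut partition exists for \emph{every subgraph} of $G$ (not just $G$ itself) is used precisely because the recursion descends into induced subgraphs $G[C_i]$, which are themselves subgraphs of $G$ and hence $K_r$-minor-free, closing the induction.
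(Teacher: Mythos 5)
Your proposal has a genuine gap, in fact two related ones, and they sit exactly where the real difficulty of SPR lies. First, the output minor must have vertex set exactly $K$, so every Steiner vertex --- in particular every cluster of your partition that contains \emph{no} terminal --- must be absorbed into some terminal's connected region. Your scheme (recurse on $G[C_i]$ with $K\cap C_i$, then ``stitch'' adjacent clusters via representative terminals) never specifies this absorption: the clusters along a low-hop path $\check\pi$ between two terminals may contain no terminals at all, so there are no representatives to route through, and deciding \emph{which} terminal each such Steiner region gets contracted into is precisely the crux of the problem, not a detail of the gluing. Second, the telescoping argument rests on the claim that a pair separated at level $j$ has distance comparable to $\e^j\diam(G)$. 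A shortcut partition gives no such ``padding'' guarantee --- it is not a separating or padded decomposition --- so two terminals at distance $1$ can lie in different top-level clusters (indeed at every level). For such a pair, any route through cluster representatives or through detours of length proportional to the current cluster diameter $\e\cdot\diam(G)$ yields distortion on the order of $\e\diam(G)/\dist_G(t_1,t_2)$, which is unbounded; nothing in your recursion rescues these pairs, and this is exactly the failure mode that makes the naive top-down approach break.

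The paper takes a different route that is designed around these two issues. It first converts the shortcut partition into an \emph{approximate scattering partition} (Lemma~\ref{lem:scatter}): for every pair at distance at most $\Delta$ there is an approximate path, of length $O(f(r))\cdot\Delta$, intersecting only $O(f(r))$ clusters. It then adapts Filtser's reduction from SPR to scattering partitions (Lemma~\ref{lem:scattering-spr}): vertices are assigned to terminals \emph{bottom-up}, in iterations indexed by their distance to the terminal set $K$ at geometric scales $\zeta^i$ with $\zeta=\Theta(\beta\tau)$, applying a fresh scattering partition at scale $\zeta^{i-1}$ to the subgraph of still-unassigned vertices and propagating assignments along linking edges between low-level and high-level clusters. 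The key point is that the detour any vertex $v$ suffers is charged against $\dist_G(v,K)$ (Claim~\ref{clm:real-dist}), and a terminal--terminal shortest path is cut into intervals whose lengths are proportional to their distance to $K$, so close pairs are automatically protected --- no padding assumption is needed. The exponent $13$ then falls out as $O(\tau^8\beta^5)$ with $\beta=\tau=O(f(r))$, not from the bookkeeping of recursion stages you describe. To repair your approach you would essentially have to reinvent this distance-to-$K$-driven assignment of Steiner vertices, at which point you are running the paper's argument rather than a top-down diameter recursion.
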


The proof of \Cref{thm:shortcut-to-SPR} builds on a reduction by Filtser~\cite{Filtser20B}, from the SPR problem to that of finding \emph{scattering partitions}, which require every shortest path between two vertices to intersect only a small number of clusters. 
We introduce an inherently relaxed notion which we call the \emph{approximate scattering partition} (Definition~\ref{def:scatter}) ---
which among other changes uses \emph{approximate} shortest paths rather than exact shortest paths ---
and adapt Filtser's reduction using the new notion. 
The first challenge underlying this adaptation is that, unlike shortest paths, an approximate shortest path does not have the optimal substructure property (any subpath of a shortest path is also a shortest path).  
The second and perhaps more significant challenge stems from the fact that the partition only guarantees the existence of \emph{some} low-hop path in the cluster graph, and the distortion to its length is \emph{not} with respect to  the distance between the two endpoints.
We explain the differences in detail in Section \ref{sec:SPR-from-Shortcut}.
Consequently, we have to make some crucial changes in the reduction, and more so in its analysis.

\medskip \noindent
We observe that \Cref{thm:shortcut-to-SPR} together with a shortcut partition in \Cref{thm:shortcut-minor} gives us a solution to the SPR problem with $O(1)$ distortion in $K_r$-minor-free graphs, since in this case, $f(r) = r^{O(r)}$ and $r$ is fixed.

\subsection{Other Applications of Our Results}\label{subsec:app-intro}

\paragraph{Distance oracle.} 
An \EMPH{$\alpha$-approximate distance oracle} is a compact data structure for graph $G$ that given any two vertices $u$ and $v$, return the distance between $u$ and $v$ in $G$ up to a factor of $\alpha$. 
In constructing a distance oracle, we would like to minimize the \emph{distortion} parameter $\alpha$, the \emph{space} usage, and the \emph{time} it takes to answer a query; there is often a tradeoff between the three parameters. 

Constructing $(1+\e)$-approximate distance oracles for planar graphs has been extensively studied. A long line of work~\cite{Thorup04,Klein02,KKS11,WulffNilsen16,GX19,CS19} recently culminated in an optimal distance oracle with linear space and constant query time by  Le and Wulff-Nilsen~\cite{LW21}. On the other hand, the only known $(1+\e)$-approximate distance oracle for $K_r$-minor-free graphs achieving $O(n\log n)$ space and $O(\log n)$ query time (for any constant $\e\in (0,1)$ and constant $r$) was by Abraham and Gavoille~\cite{AG06}. 
The main reason is that the topology of $K_r$-minor-free graphs is much more complicated, and many techniques from planar graphs --- such as reduction to additive distance oracles~\cite{KKS11,LW21} or more sophisticated use of planar shortest path separators~\cite{WulffNilsen16} --- do not extend to $K_r$-minor-free graphs. 
Even the shortest path separator~\cite{AG06} in $K_r$-minor-free graphs does not behave as well as its planar counterpart~\cite{GKR01,Thorup04}: each path in the separator in $K_r$-minor-free graphs is not a shortest path of the \emph{input graph}, but a shortest path of its \emph{subgraph} after some previous paths were removed. 
As a result, despite significant recent progress on approximate distance oracles for planar graphs, the following problem remains open:

\begin{problem}\label{problem:dist-oracle}
    Design a $(1+\e)$-approximate distance oracle for $K_r$-minor-free graphs with linear space and constant query time for fixed $\e$ and $r$.
\end{problem}

In this work, we resolve \Cref{thm:app-distance-oracle} affirmatively. Our oracle can also be implemented in the pointer-machine model, matching the best-known results for planar graphs~\cite{CCLMST23}. 

\begin{theorem} \label{thm:app-distance-oracle} Given any parameter $\eps \in (0,1)$, and any edge-weighted undirected $K_r$-minor-free graphs with $n$ vertices, we can design a $(1+\eps)$-approximate distance oracle with the following guarantees:
\begin{itemize}\itemsep=0pt
    \item Our distance oracle has space $n \cdot 2^{r^{O(r)}/\e}$ and query time $2^{r^{O(r)}/\e}$ in the word RAM model with word size $\Omega(\log n)$. 
    Consequently, for fixed $\eps$ and $r$, the space is $O(n)$ and query time is $O(1)$. 
    \item Our distance oracle has space $O(n\cdot 2^{r^{O(r)}/\e})$ and query time $O(\log \log n\cdot 2^{r^{O(r)}/\e})$ in the pointer machine model. 
\end{itemize}
\end{theorem}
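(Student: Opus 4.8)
The plan is to reduce the construction to a tree cover. The first step is to use the shortcut partition of \Cref{thm:shortcut-minor} to build, for the input $K_r$-minor-free graph $G$ on $n$ vertices, a family $\mathcal{T}$ of $N = 2^{r^{O(r)}/\e}$ trees, each on $O(n)$ vertices and \emph{dominating} $G$ (so $\dist_T(u,v)\ge\dist_G(u,v)$ for all $u,v$ and all $T\in\mathcal{T}$), such that for every pair $u,v$ some $T\in\mathcal{T}$ satisfies $\dist_T(u,v)\le(1+\e)\dist_G(u,v)$. Since subgraphs and minors of a $K_r$-minor-free graph are again $K_r$-minor-free, \Cref{thm:shortcut-minor} can be applied recursively --- partition $G$ into clusters of small strong diameter, recurse inside each (induced, hence still minor-free) cluster, and so on --- to produce a hierarchy of clusterings at geometrically shrinking scales, where the clusters at a given scale have strong diameter at most that scale and carry the low-hop guarantee of \Cref{def:shortcut-partition} with $h = 2^{O(r\log r)}/\e$. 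Inside each cluster one then places a net of granularity a $\Theta(\e/h)$-fraction of the cluster's diameter. For any pair $u,v$, at an appropriate scale (matching $\dist_G(u,v)$) the low-hop property supplies a path in the corresponding cluster graph of hop-length $O(h)$ that uses only clusters meeting some shortest $u$--$v$ path; detouring along this path through the net points of the $O(h)$ traversed clusters inflates the distance by at most a $1 + O(h)\cdot\Theta(\e/h) = 1 + O(\e)$ factor. Assembling these cluster-graph routes into genuine trees follows the template of \cite{CCLMST23} for planar graphs, with the deterministic buffer property behind \Cref{thm:shortcut-minor} playing the structural role that planarity plays there; because the number $O(h)$ of clusters relevant to any pair, and the number $O(r)$ of clusters any one cluster can ``see'', are both bounded independently of the number of scales, a bounded family of tree structures covers all scales simultaneously, and one arrives at $N = 2^{r^{O(r)}/\e}$ trees with $N$ independent of $n$ and the aspect ratio.

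The second step converts $\mathcal{T}$ into the oracle by a standard black-box reduction. For each tree $T_j\in\mathcal{T}$ I would root it arbitrarily, store the weighted root-distances $d_j(\cdot)$ using $O(n)$ words, and equip $T_j$ with a linear-space lowest-common-ancestor (LCA) structure, so that $\dist_{T_j}(u,v)=d_j(u)+d_j(v)-2\,d_j(\mathrm{lca}_j(u,v))$ is computable within the LCA structure's query time; a query $(u,v)$ is answered by returning $\min_j\dist_{T_j}(u,v)$, which is a $(1+\e)$-approximation to $\dist_G(u,v)$ by the tree-cover guarantee. In the word-RAM model with $\Omega(\log n)$-bit words, LCA admits an $O(n)$-space, $O(1)$-time structure, so the oracle has space $n\cdot N = n\cdot 2^{r^{O(r)}/\e}$ and query time $N = 2^{r^{O(r)}/\e}$, i.e.\ $O(n)$ space and $O(1)$ query for fixed $\e$ and $r$. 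In the pointer-machine model, where constant-time LCA is unavailable, the same reduction with the pointer-machine tree-distance structures used for the planar oracle of \cite{CCLMST23} gives space $O(n\cdot 2^{r^{O(r)}/\e})$ and query time $O(\log\log n\cdot 2^{r^{O(r)}/\e})$, matching the planar bounds.

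I expect the main obstacle to be the first step: extracting a genuinely $O(1)$-size (for fixed $\e,r$), stretch-$(1+\e)$ tree cover from the shortcut partition without relying on planarity. The planar construction of \cite{CCLMST23} uses planarity-specific structure in several places, so the work is to re-derive the tree-cover machinery from the abstract $(\e,h)$-shortcut-partition interface alone, with the deterministic buffer property standing in for planarity. The delicate quantitative point is controlling how the per-scale net errors and the per-scale branching accumulate over the $\Theta(\log\Phi)$ scales ($\Phi$ the aspect ratio), so that the final stretch stays $1+\e$ while $N$ remains independent of $n$ and of $\Phi$; here the normalization of hop-length ``in units of $\Delta$'' together with the ceiling $\lceil\dist_G(u,v)/(\e\cdot\diam(G))\rceil$ in the low-hop bound of \Cref{def:shortcut-partition} --- which is exactly what makes the low-hop guarantee useful at \emph{every} scale, not only at the top one --- are what make this telescoping go through. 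The second step is standard and poses no difficulty.
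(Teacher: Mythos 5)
Your overall route is the paper's: construct a $(1+\e)$-tree cover of size $2^{r^{O(r)}/\e}$ from the shortcut partition of \Cref{thm:shortcut-minor}, then answer a query by minimizing over the trees, computing each tree distance from stored root distances plus an LCA query ($O(n)$ space and $O(1)$ time in word RAM, $O(\log\log n)$ time on a pointer machine); this second step is literally the paper's proof of \Cref{thm:app-distance-oracle}. Where you diverge is the first step: you propose to re-derive the tree-cover machinery from scratch, on the premise that the tree-cover construction of \cite{CCLMST23} is planarity-bound and that the deterministic buffer property must ``stand in for planarity.'' The paper instead observes that only the improved, planar-specific construction in \cite{CCLMST23} exploits planarity; their generic two-step reduction (multiplicative-to-additive, then shortcut-partition-to-additive tree cover), quoted here as \Cref{lem:cclmst-treecover}, already applies to arbitrary minor-free graphs whenever every subgraph admits an $(\e,h)$-shortcut partition, and the present definition of shortcut partition subsumes theirs, so \Cref{thm:tree-cover} follows by a black-box plug-in of \Cref{thm:shortcut-minor} with $h = 2^{O(r\log r)}/\e$. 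In particular, the buffer property is internal to the buffered cop decomposition of \Cref{SS:buffered-cop} used to build the shortcut partition; it plays no role in the partition-to-tree-cover reduction, so the multi-scale, net-based, telescoping construction you flag as ``the main obstacle'' is precisely the part you need not redo. As a standalone argument your tree-cover step is only a plan (assembling cluster-graph routes into trees and bounding the number of trees independently of $n$ and the aspect ratio is asserted, not proved), so there is a gap there; but it closes immediately by invoking \Cref{thm:tree-cover} (equivalently, \Cref{lem:cclmst-treecover} applied to \Cref{thm:shortcut-minor}), which is exactly what the paper does, after which your LCA-based reduction yields both claimed space/query trade-offs.
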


\noindent Our oracle is constructed via tree covers, which we will discuss next.

\paragraph{Tree cover.} 
An \EMPH{$\alpha$-tree cover $\mathcal{T}$} of a metric space $(X,\delta_X)$ for some $\alpha \geq 1$ is a collection of trees such that: 
(1) every tree $T\in \mathcal{T}$ has $X\subseteq V(T)$ and $d_T(x,y)\geq \delta_X(x,y)$ for every two points $x,y\in X$, and 
(2) for every two points $x,y\in X$, there exists a tree $T\in \mathcal{T}$ such that $d_{T}(x,y)\leq \alpha\cdot \delta_X(x,y)$. We call $\alpha$ the \emph{distortion} of the tree cover $\mathcal{T}$. The size of the tree cover is the number of trees in $\mathcal{T}$. 

Tree covers have been extensively studied for many different metric spaces~\cite{AP92,AKP94,ADMSS95,GKR01,BFN19Ramsey,FL22,KLMS22}.  
Gupta, Kumar, and Rastogi~\cite{GKR01} showed among other things that planar metrics admit a tree cover of distortion 3 and size $O(\log n)$. 
Bartal, Fandina, and Neiman~\cite{BFN19Ramsey} reduced the distortion to $1+\e$ for any fixed $\e \in (0,1)$ at the cost of a higher number of trees, $O(\log^2 n)$. 
Their result also holds for any $K_r$-minor-free graphs with a fixed $r$;
however, because of the usage of shortest path separator~\cite{AG06}, the final tree cover size contains a hidden dependency on $r$ which is the Robertson-Seymour constant~\cite{RS03}, known to be bigger than the tower function of $r$. 
Their work left several questions open: (a) Can we construct a $(1+\e)$ tree cover of $O(1)$ size for planar graphs, and more generally $K_r$-minor-free graphs? (b)  Can we avoid Robertson-Seymour decomposition and achieve a more practical construction? 

The shortcut partition introduced by Chang \etal~\cite{CCLMST23} partially resolved the first question: 
they constructed a $(1+\e)$-tree cover for \emph{planar graphs} of $O(1)$ size.  Using our new shortcut partition in \Cref{thm:shortcut-minor}, we resolve the question of Bartal \etal\ for all $K_r$-minor-free graphs. 
As our construction is rooted in the cop decomposition, the construction might behave reasonably well even when the graph is not strictly $K_r$-minor-free, as the performance ultimately depends on the width of the buffer and the number of times a cluster can expand.
This provides a more practical alternative to the Robertson-Seymour decomposition.  

\begin{theorem}\label{thm:tree-cover}  Let $G$ be any edge-weighted undirected $K_r$-minor-free graph with $n$ vertices.  For any parameter $\eps \in (0,1)$, there is a $(1+\eps)$-tree cover  for the shortest path metric of $G$ using $2^{r^{O(r)}/\eps}$~trees.
\end{theorem}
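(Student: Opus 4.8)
The plan is to derive the tree cover from the shortcut partition of \Cref{thm:shortcut-minor} by following the same recursive scheme used in \cite{CCLMST23} for planar graphs. First I would fix a scale $\Delta = \varepsilon \cdot \diam(G)$ and apply \Cref{thm:shortcut-minor} (with an appropriately rescaled $\varepsilon' = \Theta(\varepsilon / h)$ where $h = 2^{O(r\log r)}/\varepsilon$ is the hop bound, so that the hop-length between any two clusters is $O(1/\varepsilon)$ while each cluster has strong diameter at most $\varepsilon \cdot \diam(G)$). This gives a clustering $\mathcal{C}$ with strong-diameter-$O(\varepsilon\,\diam(G))$ clusters such that any two vertices $u,v$ are joined by a path in the cluster graph $\check G$ touching only $O(1/\varepsilon)$ clusters. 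The key structural object is then the \emph{cluster graph} $\check G$: because the hop-length is $O(1/\varepsilon)$, one can build an $O(1)$-size tree cover of $\check G$ (e.g.\ via a low-diameter decomposition / net-tree argument on the bounded-hop metric, or recursively), and then ``expand'' each cluster-graph tree into an actual tree on $V(G)$ by attaching, inside each cluster, a shortest-path tree (a spanning tree of the induced subgraph $G[C]$ of diameter $O(\varepsilon\,\diam(G))$).

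Next I would set up the recursion on scales. Restricting to a single cluster $C$, its induced subgraph $G[C]$ is again $K_r$-minor-free (minor-freeness is closed under subgraphs), has diameter a constant factor times $\varepsilon \cdot \diam(G)$, and we recurse on it with the same $\varepsilon$; after $O(\log_{1/\varepsilon}(\text{aspect ratio}))$ levels the diameters shrink to the point where trivial trees suffice, but to avoid a $\log$ dependence one uses the standard trick of only recursing $O(1/\varepsilon)$-many scales down and exploiting that a pair $u,v$ at distance $d$ is ``handled'' at the scale where $d \approx \varepsilon\,\diam$ of the current subgraph --- here the hop guarantee (condition (1) of \Cref{def:shortcut-partition}, which scales with $\lceil \dist_G(u,v)/(\varepsilon\diam(G))\rceil$) is exactly what lets the same $O(1)$ trees serve all pairs whose distance is comparable to the current scale. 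The number of trees multiplies by an $O(1)$ factor (the tree-cover size of the bounded-hop cluster graph, which is $2^{O(1/\varepsilon)}$ or so) per level, and by $f(r)/\varepsilon$ from the branching into clusters; carefully bookkeeping the recursion depth and branching gives the claimed $2^{r^{O(r)}/\varepsilon}$ bound.

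For the stretch bound $1+\varepsilon$: given $u,v$, take the scale where $\dist_G(u,v) \in [\varepsilon\,\diam, \diam]$ of the relevant subgraph, take the $O(1/\varepsilon)$-hop cluster path $\check\pi$ from the shortcut partition, lift it to the chosen cluster-graph tree $\check T$ (incurring $O(1)$ blow-up in hop count, hence an additive error of $O(1/\varepsilon)$ clusters each of diameter $O(\varepsilon^2 \diam)$ after the $\varepsilon'$-rescaling, i.e.\ additive $O(\varepsilon\,\diam(G)) = O(\varepsilon \cdot \dist_G(u,v))$), then within each cluster route through its spanning tree (again additive $O(\varepsilon^2\diam)$ per cluster, summing to $O(\varepsilon\,\dist_G(u,v))$). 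Condition (2) of \Cref{def:shortcut-partition} --- that $\check\pi$ only uses clusters meeting a genuine shortest $u$–$v$ path --- guarantees these clusters are ``close to the geodesic'', so the total detour telescopes to $O(\varepsilon\,\dist_G(u,v))$; rescaling $\varepsilon \mapsto \varepsilon/c$ at the start makes this exactly $\varepsilon$. The domination property $d_T(x,y) \ge \delta_X(x,y)$ is automatic since every $T$ we build is a subgraph-metric-dominating spanning structure (we only add shortest-path edges, or take $T$ to be a dominating tree of a subgraph).

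The main obstacle I anticipate is making the recursion on scales terminate with an $O(1)$-size blow-up rather than $O(\log n)$: naively recursing into every cluster at every scale gives a tree-cover size that grows with the number of scales, i.e.\ with $\log(\text{aspect ratio})$, reproducing the $O(\log^2 n)$ of \cite{BFN19Ramsey}. The resolution --- and the delicate part of the argument --- is the ``interval'' or ``scale-selection'' technique: one argues that each pair $u,v$ only needs the trees built over a \emph{bounded window} of $O(1/\varepsilon)$ consecutive scales around $\log_{1/\varepsilon}(\dist_G(u,v))$, so the trees can be organized into $O(1/\varepsilon)$ families indexed by residue class of the scale, each family being a single tree (or $O(1)$ trees) obtained by taking a disjoint union over all clusters at scales in one residue class. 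This is exactly the step where the \emph{strong} diameter guarantee and the precise scaling of the hop-length in \Cref{def:shortcut-partition} (smaller hop-length for smaller distances) are essential, and it is where one must be most careful to keep the analysis tight; the rest of the construction is a fairly mechanical adaptation of the planar argument in \cite{CCLMST23}, now justified because every subgraph of a $K_r$-minor-free graph is $K_r$-minor-free and thus admits the shortcut partition of \Cref{thm:shortcut-minor}.
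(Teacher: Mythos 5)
There is a genuine gap, and it is worth noting first how the paper itself handles this theorem: the proof is essentially two lines. The paper observes that \Cref{def:shortcut-partition} subsumes the (strictly weaker) notion of shortcut partition used in \cite{CCLMST23}, so their reduction --- restated here as \Cref{lem:cclmst-treecover}, which says that if every subgraph admits an $(\varepsilon,h)$-shortcut partition then $G$ has a $(1+\varepsilon)$-tree cover of size $2^{O(h)}$ --- applies as a black box; plugging in $h = r^{O(r)}/\varepsilon$ from \Cref{thm:shortcut-minor} (and closure of minor-freeness under subgraphs) gives the bound. You instead attempt to re-derive that reduction from scratch, which would be fine if the re-derivation were sound, but it breaks at the stretch analysis.

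Two concrete problems. First, your rescaling is backwards: setting $\varepsilon' = \Theta(\varepsilon/h)$ shrinks the cluster diameter but \emph{increases} the hop bound to $r^{O(r)}/\varepsilon'$; no choice of $\varepsilon'$ can give clusters of strong diameter $\varepsilon'\cdot\mathrm{diam}(G)$ while keeping the hop-length between far pairs at $O(1/\varepsilon)$, since a pair at distance $d$ must cross at least about $d/(\varepsilon'\,\mathrm{diam}(G))$ clusters. Second, and more fundamentally, your routing scheme --- lift the cluster path $\check{\pi}$ to a tree of the cluster graph and pay one cluster diameter inside each cluster visited --- incurs an additive overhead proportional to (number of hops) times (cluster diameter), which for a pair at distance $d$ is $\Omega(d)$, and with the hop bound of \Cref{thm:shortcut-minor} can be as large as $r^{O(r)}\, d$; this is independent of $\varepsilon'$, so it yields only a constant-factor (for fixed $r$) stretch, not $1+\varepsilon$. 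Condition (2) of \Cref{def:shortcut-partition} says the clusters of $\check{\pi}$ meet a shortest $u$--$v$ path, but that does not produce any ``telescoping'' that recovers the lost factor: you still enter and leave each such cluster through its spanning star and pay its diameter. Avoiding this per-hop cost is exactly the nontrivial content of the \cite{CCLMST23} reduction, visible in this paper as the root-preservation property of \Cref{lem:refined-treecover}: among the $2^{O(h)}$ star-expansion forests there is one tree in which $u$ and $v$ hang off a common root with total additive error only $+\varepsilon\Delta$, not $+h\varepsilon\Delta$. Your sketch supplies no substitute for that mechanism, so the claimed $1+\varepsilon$ stretch is unsupported. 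By contrast, the multi-scale/residue-class bookkeeping you identify as the delicate part is the easier half --- it is the $O(\log(1/\varepsilon))$-loss reduction from multiplicative to additive distortion already packaged inside \Cref{lem:cclmst-treecover}.
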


Given a tree cover $\mathcal{T}$ in \Cref{thm:tree-cover}, we can obtain a $(1+\e)$-approximate distance oracle in \Cref{thm:app-distance-oracle} as follows. 
The distance oracle consists of  $\mathcal{T}$ and an LCA data structure for each tree in $\mathcal{T}$. 
For each query pair $(u,v)$, we iterate through each tree, compute the distance on the tree using LCA data structure, and then return $\min_{T\in \mathcal{T}} d_T(u,v)$. 
The query time and space are as described in \Cref{thm:app-distance-oracle} because $|\mathcal{T}|=2^{r^{O(r)}/\eps}$; the distortion is $1+\eps$ since the distortion of the tree cover is $1+\e$.

\paragraph{Additive embeddings for apex-minor-free graphs.}  
Graph $A$ is an \EMPH{apex graph} if there exists a vertex $a\in V(A)$, called the \EMPH{apex}, such that $A\setminus \{a\}$ is a planar graph. 
A graph $G$ is \EMPH{apex-minor-free} if it excludes some apex graph $A$ of $O(1)$ size as a minor. 
We note that apex-minor-free graphs include planar graphs and, more generally, \emph{bounded-genus graphs} as subclasses.  
We show that our shortcut partition also gives the first deterministic additive embeddings of apex-minor-free graphs into bounded-treewidth graphs. 

Given a weighted graph $G$ of diameter $\Delta$, we say that a (deterministic) embedding $f: V(G)\rightarrow H$ of $G$ into $H$ has \EMPH{additive distortion} $+\e \Delta$
if $d_G(x,y)\leq d_H(f(x), f(y)) \leq d_G(x,y) + \e \Delta$ for every $x,y\in V(G)$. 
The goal is to construct an embedding $f$ such that the treewidth of $H$, denoted by $\tw(H)$, is minimized. 
Ideally, we would like $\tw(H)$ to depend only on $\eps$ and not on the number of vertices of $G$. 

Additive embeddings have been studied recently for planar graphs~\cite{FKS19,FL22,CCLMST23} and for minor-free graphs~\cite{CFKL20}. 
A key result in this line of work is an additive embedding for planar graphs where the treewidth of $H$ is polynomially dependent on $\e$~\cite{FKS19}; specifically, they achieved $\tw(H) = O(1/\e^{c})$ for some constant $c\geq 58$, which was recently improved to $\tw(H) = O(1/\e^{4})$~\cite{CCLMST23}. 
Cohen-Addad \etal~\cite{CFKL20} constructed a family of apex graphs and showed that any deterministic embedding with additive distortion $+\Delta/12$ ($\e = 1/12$) for the family must have treewidth $\Omega(\sqrt{n})$. 
Their result left an important question regarding additive embeddings of apex-minor-free graphs. 
Here we use the shortcut partition in \Cref{thm:shortcut-minor} to resolve this problem, and thereby completing our understanding of deterministic additive embeddings of graphs excluding a fixed minor into bounded-treewidth graphs. 

\begin{theorem}\label{thm:add-apex-minor}
    Let $G$ be any given edge-weighted graph of $n$ vertices excluding a fixed apex graph as a minor. 
    Let $\Delta$ be the diameter of $G$.  
    For any given parameter $\eps \in (0,1)$, we can construct in polynomial time a deterministic embedding of $G$ into a graph $H$ such that the additive distortion is $+\eps \Delta$ and $\tw(H) = 2^{O(\e^{-1})}$.
\end{theorem}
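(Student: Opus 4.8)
# Proof Proposal for Theorem~\ref{thm:add-apex-minor}

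\textbf{The plan} is to combine the shortcut partition from \Cref{thm:shortcut-minor} with a standard recursive scaling argument that builds the bounded-treewidth host graph $H$ level by level. Since apex-minor-free graphs exclude a fixed minor, $r = O(1)$, so the hop parameter in \Cref{thm:shortcut-minor} is $h = 2^{O(\log 1)}/\e = O(1/\e)$ — a \emph{constant} for fixed $\e$ — and the diameter bound of each cluster is $\e \cdot \diam(G)$. This is exactly the regime where the shortcut-partition-to-additive-embedding pipeline developed for planar graphs in \cite{CCLMST23} goes through verbatim: nothing in that reduction uses planarity beyond the existence of an $(\e', O(1/\e'))$-shortcut partition with strong diameter control that applies to every subgraph (which apex-minor-free graphs, being minor-closed, satisfy). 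So the first step is to verify that apex-minor-free graphs are closed under taking subgraphs and that every subgraph is again apex-minor-free, hence inherits the shortcut partition; this is immediate.

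\textbf{The core construction} proceeds by a diameter-halving recursion. Fix a target additive error $+\e\Delta$. Apply the shortcut partition of \Cref{thm:shortcut-minor} with parameter $\e' = \Theta(\e / h) = \Theta(\e^2)$ (chosen so that, after a path crosses $O(h)$ clusters each of diameter $\e' \Delta$, the accumulated detour is at most $\Theta(\e\Delta)$) to obtain a clustering $\mathcal{C}$ of $G$ into clusters of strong diameter at most $\e' \Delta$. Build the cluster graph $\check G$; its shortest-path metric is, by the low-hop property, an additive-$O(\e\Delta)$ approximation to the metric of $G$ after rescaling. Then recurse \emph{inside each cluster} $C \in \mathcal{C}$ — each has diameter at most $\e'\Delta \le \Delta/2$ — to embed $G[C]$ into a bounded-treewidth graph with additive error $+\e \cdot \diam(G[C])$. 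The host $H$ is assembled by taking, at each of the $O(\log(1/\e'))\cdot\text{(number of geometric scales)}$ levels, a disjoint union of the (bounded-size, hence bounded-treewidth) cluster graphs, and gluing a representative vertex of each cluster to its parent cluster's vertex. The treewidth of $H$ is governed by the number of clusters that any single shortest path traverses across all active scales, which telescopes to $2^{O(1/\e)}$ by the hop bound $h = O(1/\e)$ — matching the planar result of \cite{CCLMST23} and the stated $\tw(H) = 2^{O(\e^{-1})}$.

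\textbf{The main obstacle} is bounding the treewidth rather than just the distortion. The distortion analysis is a routine telescoping of the low-hop guarantee over geometric scales (each scale contributes $O(\e\Delta_i)$ additive error, and $\sum_i \Delta_i = O(\Delta)$ since the $\Delta_i$ decrease geometrically), combined with the optimal-substructure-free care already handled in the shortcut-partition framework. The treewidth bound, however, requires showing that the bag at any node of the tree decomposition of $H$ only needs to remember clusters from $O(1)$ scales simultaneously, and within those scales only $O(h)$ clusters per "relevant" shortest path; this is where one imports the precise structural accounting from \cite{CCLMST23} (their additive-embedding-from-shortcut-partition argument), substituting our $h = 2^{O(r\log r)}/\e = O(1/\e)$ for their $O(\e^{-2})$. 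A secondary technical point is that \cite{CCLMST23} stated their reduction for planar graphs; one must check that the only property used is the existence of a shortcut partition for every subgraph with the strong-diameter and low-hop guarantees of \Cref{def:shortcut-partition}, which (as the Remark after \Cref{thm:shortcut-minor} notes) our definition strengthens. With that black box in hand, the proof is essentially a citation plus the parameter substitution $r = O(1)$.
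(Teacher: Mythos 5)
There is a genuine gap, and it is precisely at the point where you claim that the planar reduction of \cite{CCLMST23} ``goes through verbatim'' because ``nothing in that reduction uses planarity beyond the existence of an $(\e', O(1/\e'))$-shortcut partition.'' That cannot be right: \Cref{thm:shortcut-minor} gives shortcut partitions for \emph{all} $K_r$-minor-free graphs, yet Cohen-Addad et al.~\cite{CFKL20} exhibit a family of apex graphs (hence minor-free graphs) for which any deterministic embedding with additive distortion $+\Delta/12$ requires treewidth $\Omega(\sqrt{n})$. So any argument that consumes only the shortcut partition would prove a false statement for general minor-free graphs; the reduction must use an additional property that planar and apex-minor-free graphs have but general minor-free graphs lack. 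That property is the diameter-treewidth (local treewidth) property of apex-minor-free graphs (\Cref{lm:diameter-treewidth}, due to Eppstein and Demaine--Hajiaghayi): since the cluster graph $\check{G}$ has unweighted diameter $O(h)=O(1/\e)$ by the low-hop property and excludes the same apex minor as $G$, its treewidth is $O(h)$. This ``contracted treewidth'' step is the linchpin of the treewidth bound in the paper's proof, and your proposal never invokes it.

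The same omission surfaces concretely in your construction sketch: you describe the host $H$ as a union of ``bounded-size, hence bounded-treewidth'' cluster graphs across scales, but the cluster graphs are not bounded in size --- a shortcut partition with cluster diameter $\e'\Delta$ can have arbitrarily many clusters --- and bounded (hop-)diameter does not imply bounded treewidth without the apex-minor-free diameter-treewidth theorem. In addition, the paper's actual route is not a diameter-halving recursion inside clusters; it is a single-scale argument that combines the shortcut partition with a spanning forest cover of $\check{G}$ of size $2^{O(1/\e)}$ (star expansions with the root-preservation and forest-correspondence properties, \Cref{lem:refined-treecover} and \Cref{lem:embedding-properties}) and then applies the embedding lemma of \cite{CCLMST23} (\Cref{lem:cclmst-treewidth}) to get treewidth $O(h\cdot|\cF|)=2^{O(1/\e)}$. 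A repaired version of your argument would need, at minimum, to (i) state and use \Cref{lm:diameter-treewidth} to bound the treewidth of each cluster graph, and (ii) verify the forest-cover/root-preservation interface that \Cref{lem:cclmst-treewidth} requires, rather than treating the planar pipeline as a planarity-free black box.
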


In addition to the aforementioned results, we also obtain generalizations to minor-free graphs of results from~\cite{CCLMST23}; 
we simply use our tree cover from \Cref{thm:tree-cover} in place of their tree cover theorem for planar graphs. 
The results include (1) the first $(1+\e)$-emulator of linear size for minor-free graphs, (2) low-hop emulators for minor-free metrics, (3) a compact distance labeling scheme for minor-free graphs, and (4) routing in minor-free metrics. 
We refer readers to~\cite{CCLMST23} for more details.

\paragraph{Organization.} 
In \Cref{sec:SPR-from-Shortcut} we resolve the SPR problem by constructing approximate scattering partition using the shortcut partition in \Cref{thm:shortcut-minor}. 
In \Cref{SS:buffered-cop}, we introduce and describe in full detail the construction of the buffered cop decomposition, which we will use in \Cref{sec:shortcut} to construct the shortcut partition. 
In \Cref{sec:other-apps}, we give the details of the applications of shortcut partition in constructing tree cover, distance oracle, and additive embedding into bounded treewidth graphs.

\section{Reduction to Shortcut Partition}\label{sec:SPR-from-Shortcut}

As mentioned, Filtser~\cite{Filtser20B} presented a reduction from the SPR problem to that of finding \emph{scattering partitions}. 
To prove Theorem~\ref{thm:shortcut-to-SPR}, we introduce an inherently relaxed notion of \emph{approximate} scattering partition (refer to Definition~\ref{def:scatter}),
and adapt the reduction of \cite[Theorem~1]{Filtser20B} using that notion. 
Due to the usage of our inherently relaxed notion of partition, we have to make two crucial changes in the reduction, and alter various parts of the analysis; we will point out specific changes along the way.

\begin{definition}[Approximate Scattering Partition] \label{def:scatter}
    Let $G = (V, E, \weight)$ be an edge-weighted graph. A \EMPH{$\beta$-approximate $(\tau, \Delta)$-scattering partition} of $G$ is a partition $\mathcal{C}$ of $V$ such that: 
    \begin{itemize}
        \item \textnormal{[Diameter.]} For each cluster $C$ in $\mathcal{C}$, the induced subgraph $G[C]$ has weak diameter 
        at most $\Delta$; that is, $\dist_{G}(u,v) \le \Delta$ for any vertices $u$ and $v$ in $C$.
        \item \textnormal{[Scattering.]} For any two vertices $u$ and $v$ in $V$ such that $\dist_G(u, v) \leq \Delta$, there exists a path $\pi$ in $G$ between $u$ and $v$ where (1) $\pi$ has length at most $\beta \cdot \Delta$, (2) every edge in $\pi$ has length at most $\Delta$, and (3) $\pi$ intersects at most $\tau$ clusters in $\mathcal{C}$.
        We say $\pi$ is a \EMPH{$\beta$-approximate $(\tau,\Delta)$-scattered path}.
    \end{itemize}
\end{definition}
We remark that scattering properties (2) and (3) together imply property (1):  the length of $\pi$ is at most $O(\tau)\cdot\Delta$. Nevertheless, we prefer to keep property (1) separately from properties (2) and (3) in the definition to emphasize the fact that $\pi$ is an approximate path.  
 
Notice that the notion of approximate scattering partition is more relaxed than the original notion of scattering partition \cite{Filtser20B}. 
A scattering partition requires \emph{every} shortest path with length at most $\Delta$ to be $\tau$-scattered.
However in an approximate scattering partition there are three relaxations: 
\begin{enumerate}
\item we only require that \emph{one} such path exists;
\item that path may be an approximate shortest path (rather than an exact shortest path);
\item the $\beta$-approximation to the length of such path $\pi$ is \emph{not} with respect to the distance between the endpoints;
rather, the length of $\pi$ is bounded by $\beta$ times $\Delta$, the diameter bound of clusters.  
\end{enumerate}
The following lemma, 
which we prove in \S\ref{SS:algorithm} and \S\ref{SS:distortion}, is analogous to Theorem 1 by Filtser~\cite{Filtser20B}, except for the key difference that we employ approximate scattering partitions. 
It implies that, somewhat surprisingly,
despite the three aforementioned relaxations introduced by our notion of approximate scattering partitions --- especially the third one that significantly relaxes the meaning of $\beta$-approximation --- such partitions still suffice for solving the SPR problem.

\begin{lemma}%[Filtser~{\cite[Theorem~1]{Filtser20B}}, adapted]
    \label{lem:scattering-spr}
    Let $G$ be a graph such that for every $\Delta > 0$, every induced subgraph of $G$ admits a $\beta$-approximate $(\tau, \Delta)$-scattering partition, for some constants $\beta, \tau \geq 1$. Then, there is a solution to the SPR problem on $G$ with distortion $O(\tau^{8} \cdot \beta^5) = O(1)$.
\end{lemma}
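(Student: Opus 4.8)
The plan is to follow Filtser's iterative terminal-contraction framework, adapting it to work with approximate scattering partitions. At a high level, I would build the minor $M$ over $T$ rounds, where in each round I take the current graph (initially $G$), apply an approximate scattering partition at a geometric scale $\Delta_i$, and contract each cluster into a single vertex, charging the contracted vertices to a nearby terminal. The key quantity to control is the ``ownership'' structure: each Steiner vertex gets assigned to a terminal, and over the course of the algorithm this assignment changes, but we must ensure (i) the assignment is always to a terminal at comparable distance, and (ii) when two terminals' territories get merged the distance blows up only by a bounded factor per scale. Running this over $O(\log(\text{aspect ratio}))$ scales and contracting naively would give a logarithmic distortion, so the crucial point is Filtser's observation that one should work at a \emph{single} scale relative to the current inter-terminal distances and recurse on connected components, so that the distortion is a product over only $O(1)$ ``effective'' levels — this is where the constants $\tau$ and $\beta$ enter multiplicatively rather than additively.

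Concretely, the steps I would carry out are: (1) Set up the recursion on the graph where the ``scale'' $\Delta$ is chosen so that $\Delta \approx \min$ pairwise terminal distance (after rescaling); invoke the hypothesis to get a $\beta$-approximate $(\tau,\Delta)$-scattering partition of the current induced subgraph. (2) Contract each cluster $C$ of the partition: because $C$ has weak diameter $\le \Delta$, any terminal inside $C$ is a good representative; if $C$ contains no terminal, merge it into an adjacent cluster (this is the standard ``clean-up'' that only loses a constant factor, using connectivity of the cluster graph). Maintain the invariant that the contracted graph is a minor of $G$ with vertex set containing $K$. (3) Analyze the distance distortion: for terminals $t_1,t_2$, take a shortest $G$-path $P$, chop it into segments of length $\approx \Delta$, and for each segment apply the scattering property to get an approximate path hitting $\le \tau$ clusters; tracing this path in the contracted graph costs a factor $O(\tau)$ in hop-count and, because each cluster has weak diameter $\le \Delta$ and the approximate path has length $\le \beta\Delta$, a factor $O(\tau\beta)$ in length per segment. (4) Recurse on the contracted graph; since contracting merges terminals that were within $O(\tau\beta)\cdot\Delta$ of each other, after one level the minimum terminal distance goes up by a definite factor, so only $O(1)$ levels of genuine recursion interact for any fixed pair — giving total distortion a bounded power of $\tau$ and $\beta$, matching the claimed $O(\tau^8\beta^5)$ after bookkeeping. (5) Bound the exponents by carefully tracking where each factor of $\tau$ and $\beta$ is incurred (cluster diameter, approximate-path length, the relaxation that $\beta$ is measured against $\Delta$ not $\dist_G(u,v)$, the clean-up merging, and the per-level amplification).

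The main obstacle I expect is item (3) combined with relaxation (3) of the approximate notion: in Filtser's original argument the scattered path between $u$ and $v$ has length $O(\tau)\cdot\dist_G(u,v)$, so short-distance pairs get short paths and the telescoping over segments is clean. Here the path length is only bounded by $\beta\Delta$ regardless of $\dist_G(u,v)$, so a pair of terminals at distance $\ll \Delta$ could in principle be connected only by a path of length $\Theta(\beta\Delta)$ in the partition we are handed. The fix — and this is the delicate part — is to ensure that such close pairs are never separated by the current-scale partition's contraction in a way that matters: either they already lie in the same cluster (diameter $\le\Delta$), or they are handled at a \emph{finer} scale in the recursion before this coarse partition is applied. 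So I would organize the recursion top-down by scale and argue that by the time we contract at scale $\Delta$, every pair of terminals surviving as distinct vertices is already at distance $\Omega(\Delta)$ in the current graph, which restores the telescoping. Additionally, relaxation (2) — the scattered path is only an \emph{approximate} shortest path and hence subpaths need not be shortest — breaks the clean ``optimal substructure'' chopping of $P$ into segments; I would handle this by chopping the original \emph{exact} shortest path $P$ into length-$\Delta$ pieces first, and only then invoking the scattering guarantee on the (exact-shortest-path) endpoints of each piece, so the approximation is applied freshly per segment and never composed with itself along a segment.

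Finally, I would verify the minor property is preserved throughout: each round performs only edge contractions and vertex deletions (the clean-up step deletes nothing, it merges via contraction into an adjacent cluster using a connecting edge in the cluster graph, which lifts to an edge in $G$), so $M$ is genuinely a minor of $G$ with $V(M)=K$; and the lower bound $\dist_M(t_1,t_2)\ge \dist_G(t_1,t_2)$ is automatic since contraction only shrinks distances, but we actually want the reverse inequality direction for distortion, so the real content is all in the upper bound analysis sketched above. Assembling the constants gives distortion $O(\tau^8\beta^5)$, and plugging $\tau,\beta = O(f(r))$ from \Cref{thm:shortcut-minor} and the reduction \Cref{thm:shortcut-to-SPR} yields \Cref{thm:spr-sol}.
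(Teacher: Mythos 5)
Your proposal diverges substantially from the paper's proof, and as written it has genuine gaps. The paper does \emph{not} contract clusters round by round and recurse on the contracted graph: it builds an assignment function $f:V\to K$ in a single pass over geometric scales $\zeta^i$ measured by distance to the terminal set $K$, computes each $\beta$-approximate $(\tau,\zeta^{i-1})$-scattering partition on the \emph{induced subgraph} $G[V\setminus V_{i-1}]$ of still-unassigned vertices of the original $G$, assigns each cluster to an existing terminal through a linking vertex, and only contracts the sets $f^{-1}(t)$ at the very end. Your scheme runs into two structural problems that this design avoids. First, the hypothesis of the lemma provides scattering partitions only for induced subgraphs of $G$; the graphs you recurse on after contraction are minors, not induced subgraphs, so the recursion steps are outside the hypothesis of this general reduction (it would be fine for $K_r$-minor-free graphs specifically, but the lemma is stated as a black-box reduction). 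Second, at scale $\Delta\approx$ the minimum terminal distance, a cluster of weak diameter $\Delta$ can contain two or more terminals; contracting it merges terminals and violates $V(M)=K$. Your clean-up step only handles terminal-free clusters, not multi-terminal ones, whereas the paper sidesteps the issue entirely because terminals are assigned to themselves at iteration $0$ and never appear in any partition.

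The more serious gap is the distortion bookkeeping in your step (4): the assertion that ``only $O(1)$ levels of genuine recursion interact for any fixed pair'' is unjustified. For a pair at distance $D$, the scales below $D$ number $\Theta(\log(D/\Delta_0))$, and if each level costs a multiplicative $O(\tau\beta)$ the product is unbounded; raising the minimum terminal distance by a constant factor per level does not reduce the number of levels relevant to a far-apart pair. What is actually needed --- and what the paper supplies --- is an additive, telescoping analysis: every vertex is assigned to a terminal at distance $O(\tau\zeta^2)\cdot\dist_G(v,K)$ (\Cref{clm:real-dist}), the shortest terminal path is cut into intervals whose lengths are proportional to the distance from their left endpoints to $K$ (with $\eta=\tfrac{1}{4\zeta}$), and within each interval one must combine scattered paths at \emph{two consecutive scales} (the paths $I'$ and then $I''$) and bound the number of maximal detours by $O(\tau^2)$ before summing over intervals. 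Your per-segment argument implicitly assumes all vertices along a segment are resolved at one scale, but the vertices of an approximate scattered path get assigned at different iterations; bridging that mismatch is exactly the role of the two-scale construction, which your outline lacks and does not replace by an equivalent mechanism. (A smaller point: the lower bound $\dist_M\ge\dist_G$ does not come from contraction --- contraction only shrinks distances, which is the wrong direction --- but from defining $\weight_M(t,t')\coloneqq\dist_G(t,t')$ on the minor's edges.)
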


To construct approximate scattering partitions, we use \emph{shortcut partitions}.
Recall the \EMPH{cluster graph} of $G$ with respect to $\mathcal{C}$, denoted \EMPH{$\check{G}$}, is the graph obtained by contracting each cluster in $\mathcal{C}$ into a \emph{supernode}. 
The \EMPH{hop-length} of a path is the number of edges in the path.

\begin{lemma} 
\label{lem:scatter}
Let $G$ be a graph and let $\Delta > 0$ be a parameter. 
If any subgraph of $G$ has an $(\eps, h)$-shortcut partition for any $\eps \in (0,1)$ and some number $h$, then $G$ has a $2\e h$-approximate $(\e h, \Delta)$-scattering~partition.
\end{lemma}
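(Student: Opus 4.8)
The plan is to apply the $(\eps,h)$-shortcut partition with a carefully chosen scale parameter so that the diameter bound matches $\Delta$, and then to "lift" the low-hop path in the cluster graph back to an actual path in $G$ that crosses few clusters. Concretely, given $\Delta$, I would consider a subgraph $G'$ of $G$ — or work with $G$ itself, rescaling — so that the relevant diameter is $\Theta(\Delta/\eps)$; more precisely, I want the quantity $\eps \cdot \diam(\cdot)$ appearing in \Cref{def:shortcut-partition} to be exactly $\Delta$. Since the statement only needs to produce a partition of $G$ (of weak diameter $\le \Delta$) and the shortcut-partition hypothesis is quantified over \emph{all} $\eps \in (0,1)$ and all subgraphs, the cleanest route is: if $\diam(G) \le \Delta$ we can take the trivial partition or invoke the shortcut partition with $\eps$ set so that $\eps\diam(G) = \Delta$ (capping at $\eps<1$); otherwise we still invoke it with $\eps$ chosen so $\eps \diam(G) = \Delta$, which is legal since then $\eps = \Delta/\diam(G) \in (0,1)$. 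This makes each cluster have strong (hence weak) diameter $\le \eps\diam(G) = \Delta$, giving the diameter property of the $(\tau,\Delta)$-scattering partition immediately.

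Next, for the scattering property, take any $u,v$ with $\dist_G(u,v)\le \Delta$. The shortcut partition guarantees a path $\check\pi$ in the cluster graph $\check G$ from $C_u$ to $C_v$ of hop-length at most $\eps h \cdot \lceil \dist_G(u,v)/(\eps\diam(G))\rceil = \eps h \cdot \lceil \dist_G(u,v)/\Delta\rceil = \eps h$ (using $\dist_G(u,v)\le\Delta$), and moreover $\check\pi$ uses only clusters that nontrivially intersect a fixed shortest path $P$ between $u$ and $v$ in $G$. So $\check\pi$ passes through at most $\eps h + 1 \le 2\eps h$ clusters; this will be the value $\tau = \eps h$ (or $2\eps h$, depending on how we count the endpoints — I'd state $\tau = \eps h$ since the bound on the number of clusters \emph{internal plus endpoints} is $\eps h + 1$, and then note $\eps h + 1 \le 2\eps h$ when $\eps h \ge 1$; if $\eps h<1$ the partition is a single cluster and the statement is trivial). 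The key remaining task is to turn $\check\pi$ into an \emph{actual} path $\pi$ in $G$ from $u$ to $v$. For each consecutive pair of clusters $C, C'$ along $\check\pi$ there is, by definition of the cluster graph, an edge of $G$ with one endpoint in $C$ and one in $C'$; and within each cluster we can connect the relevant entry and exit vertices by a path inside the induced subgraph $G[C]$, which has length $\le \Delta$ since the cluster has strong diameter $\le \Delta$. Concatenating these gives a walk (prune to a path) $\pi$ from $u$ to $v$ that visits only the clusters on $\check\pi$, hence at most $2\eps h \le \eps h$ — wait, I should be careful: the number of clusters $\pi$ intersects is at most the number of clusters on $\check\pi$, which is $\le \eps h + 1$; I will set $\tau = \eps h$ after the WLOG reduction absorbing the "$+1$", or just carry $\tau = \eps h$ and remark the off-by-one is harmless. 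This establishes property (3).

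For properties (1) and (2): each within-cluster subpath has length $\le \Delta$ and each connecting edge has length $\le \Delta$ as well — here I need that the inter-cluster edge has weight $\le \Delta$, which holds because its two endpoints lie in clusters of diameter $\le\Delta$ that sit on the shortest path $P$... actually the cleanest argument is that an edge on the boundary between two clusters each of strong diameter $\le \Delta$ along a path connecting $u,v$ can be assumed to have length $\le \Delta$: if some edge had length $> \Delta \ge \dist_G(u,v)$, it could not lie on any shortest-ish route, so I'd argue via the shortest path $P$ that we can always choose connecting edges of weight $\le \Delta$, or simply restrict attention to the subgraph $G'$ obtained from $G$ by deleting all edges of weight $> \Delta$ (this doesn't change $\dist_G$ between vertices at distance $\le \Delta$, and apply the shortcut partition to $G'$). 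That handles property (2). Then property (1) follows by counting: $\pi$ consists of at most $\eps h$ inter-cluster edges and at most $\eps h + 1$ within-cluster subpaths, each of length $\le \Delta$, so $\len{\pi} \le (2\eps h + 1)\Delta \le 2\eps h\cdot\Delta$ after absorbing constants appropriately — matching the claimed $\beta = 2\eps h$.

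\textbf{Main obstacle.} The step I expect to require the most care is the bookkeeping around the scale parameter and the edge-weight bound: making \Cref{def:shortcut-partition}'s "$\eps\cdot\diam(G)$" coincide with the target $\Delta$ (including the degenerate small-diameter case), and ensuring the connecting edges of $\pi$ have weight $\le \Delta$ — the latter is most robustly handled by passing to the subgraph with long edges removed, but one must check this subgraph manipulation is compatible with the "any subgraph of $G$ has a shortcut partition" hypothesis and doesn't disturb the relevant distances. The lifting of $\check\pi$ to $\pi$ and the length/cluster-count accounting are then routine.
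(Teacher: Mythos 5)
Your proposal is correct and follows essentially the same route as the paper: pass to the subgraph $G'$ with edges of weight greater than $\Delta$ deleted, apply the shortcut partition to $G'$ with $\e = \Delta/\diam(G')$, and lift the low-hop cluster path to a path in $G$ by concatenating within-cluster paths (of length at most $\Delta$ by the strong-diameter bound) with the inter-cluster edges of $G'$. The off-by-one bookkeeping you worry about is handled the same way in the paper (it writes $\check{\pi}$ as $t \le \e h$ clusters and bounds the length by $2t\Delta$), so no substantive difference remains.
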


\begin{proof}
    Construct graph $G'$ from $G$ by removing all edges of length greater than $\Delta$. 
    Notice that if any pair $u,v$ of vertices satisfies $\dist_G(u,v) \le \Delta$, then it also satisfies $\dist_{G'}(u,v) \le \Delta$. Thus, any partition of vertices that satisfies the approximate scattering property for $G'$ also satisfies that property for $G$.

    Let $\mathcal{C}$ be an $(\eps, h)$-shortcut partition for the graph $G'$, 
    with parameter $\e \coloneqq \Delta/\diam(G')$.  
    Notice that $\mathcal{C}$ is a clustering of the vertices of $G$, where for any cluster $C$ in $\mathcal{C}$, the induced subgraphs $G[C]$ and $G'[C]$ have strong diameter at most $\eps \cdot \diam(G') = \Delta$; thus, $\mathcal{C}$ satisfies the diameter property of approximate scattering partition. 
    
    We now show that $\mathcal{C}$ satisfies the scattering property. Let $u$ and $v$ be two vertices in $G$ with $\dist_G(u,v) \le \Delta$.
    Note that $\dist_{G'}(u,v) \le \Delta$.
    By the properties of shortcut partition, there is a path $\check{\pi}$ in the cluster graph $\check{G}$ between the clusters containing $u$ and $v$, such that $\check{\pi}$ has hop-length at most $\e h \cdot \Ceil{ \frac{\delta_{G'}(u,v)}{\e \diam(G')} }$.
    In other words, the hop-length of $\check{\pi}$ is
    $t$, for some $t$ that is upper-bounded by $\e h$. 
    Write $\check{\pi} = (C_1, C_2, \ldots, C_t)$ as a sequence of $t$ adjacent clusters in $\check{G}$. 
    Notice that two clusters $C$ and $C'$ in $\check{G}$ are \EMPH{adjacent} if and only if there is an edge in $G'$ between a vertex in $C$ and a vertex in $C'$.
    For every pair of consecutive clusters $C_i$ and $C_{i+1}$ in $\check{\pi}$, let $x_{i}'$ be a vertex in $C_i$ and $x_{i+1}$ be a vertex in $C_{i+1}$ such that there is an edge $e_i$ in $G'$ between $x_{i}'$ and $x_{i+1}$. 
    To simplify notation, define $x_1 \coloneqq u$ and define $x_t' \coloneqq v$. 
    With this definition, $x_i$ and $x_i'$ are defined for all $i$ in $\set{1, \ldots, t}$.
    Notice that for every $i$ in $\set{1, \ldots, t}$, vertices $x_i$ and $x_i'$ are both in cluster $C_i$. 
    By the strong diameter property of $\mathcal{C}$, there is a path $P_i$ in $G'$ between $x_i$ and $x_i'$, such that $P_i$ is contained in $C_i$ and has length at most $\Delta$.

    We define the path $\pi$ in $G'$ (and thus also in $G$) between $u$ and $v$ to be the concatenation $P_1 \circ e_1 \circ P_2 \circ e_2 \circ \ldots \circ P_t$. 
    Notice that 
    (1) $\pi$ has length at most $2 t \cdot \Delta \leq 2 \e h \cdot \Delta$; indeed, each subpath $P_i$ has length at most $\Delta$ (by the strong diameter property), and each edge $e_i$ has length at most $\Delta$ (as $e_i$ is in $G'$). 
    Further, (2) every edge of $\pi$ has length at most $\Delta$, and 
    (3) $\pi$ intersects at most $\e h$ clusters (namely, the clusters $C_1, \ldots, C_t$ along $\check{\pi}$).
\end{proof}

\Cref{thm:shortcut-to-SPR} follows from \Cref{lem:scattering-spr} and \Cref{lem:scatter}. 
As a direct corollary of \Cref{thm:shortcut-minor} and \Cref{lem:scatter}, we obtain the following.

\begin{corollary}
\label{cor:scatter}
There are constants $\beta$ and $\tau$ such that, for any $K_r$-minor-free graph $G$ and any $\Delta > 0$, there exists a $\beta$-approximate $(\tau, \Delta)$-scattering partition of $G$.  Specifically, $\beta = \tau = 2^{O(r \log r)}$.
\end{corollary}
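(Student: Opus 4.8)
The plan is to combine the two results already established in this section --- the shortcut partition of \Cref{thm:shortcut-minor} and the shortcut-to-scattering reduction of \Cref{lem:scatter} --- and then simply bookkeep the parameters. There is essentially no new idea required; \Cref{cor:scatter} is a direct corollary, exactly as the preceding sentence in the paper asserts.

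First I would record the elementary observation that the class of $K_r$-minor-free graphs is closed under taking subgraphs: deleting vertices or edges cannot introduce a $K_r$ minor. Consequently every subgraph $G'$ of a $K_r$-minor-free graph $G$ is itself $K_r$-minor-free, so \Cref{thm:shortcut-minor} applies to $G'$ and yields an $(\e, h)$-shortcut partition of $G'$ for every $\e \in (0,1)$, with $h = 2^{O(r\log r)}/\e$. In particular, the hypothesis of \Cref{lem:scatter} --- that \emph{any} subgraph of $G$ admits an $(\e,h)$-shortcut partition --- is satisfied. (This closure remark is the only slightly non-mechanical point: the proof of \Cref{lem:scatter} invokes a shortcut partition not on $G$ but on the auxiliary graph obtained from $G$ by deleting edges of length more than $\Delta$, which is a subgraph of $G$, and we need minor-freeness to pass to it.)

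Next I would apply \Cref{lem:scatter} to $G$ with the given parameter $\Delta > 0$ and with the value $h = 2^{O(r\log r)}/\e$ from the previous step. The lemma then produces a $2\e h$-approximate $(\e h, \Delta)$-scattering partition of $G$. Substituting $h = 2^{O(r\log r)}/\e$, the factors of $\e$ cancel: $\e h = 2^{O(r\log r)}$ and $2\e h = 2^{O(r\log r)}$, both depending on $r$ alone and not on $\e$. Setting $\tau \coloneqq \e h = 2^{O(r\log r)}$ and $\beta \coloneqq 2\e h = 2^{O(r\log r)}$ gives exactly the claimed statement, with $\beta$ and $\tau$ constants for every fixed $r$. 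I do not expect any obstacle beyond the subgraph-closure bookkeeping noted above.
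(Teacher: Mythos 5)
Your proposal is correct and matches the paper, which indeed obtains \Cref{cor:scatter} as a direct corollary of \Cref{thm:shortcut-minor} and \Cref{lem:scatter}: the subgraph-closure of $K_r$-minor-freeness supplies the hypothesis of \Cref{lem:scatter}, and the cancellation $\e h = 2^{O(r\log r)}$ gives $\beta = \tau = 2^{O(r\log r)}$ exactly as you compute.
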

In what follows we prove Lemma~\ref{lem:scattering-spr}.

\subsection{Algorithm}
\label{SS:algorithm}

Our construction for proving Lemma~\ref{lem:scattering-spr} is similar to that of 
\cite{Filtser20B}, but deviates from it in several crucial points (see Remark~\ref{rem:diff} for details).
For completeness, we next provide the entire construction of
\cite{Filtser20B}, adapted appropriately to our purposes.

We will assume without loss of generality that the minimum pairwise distance is 1.
We shall partition $V$ into $|K|$ connected subgraphs,  
each of which corresponds to a single terminal in $K$. 
Each vertex in $V$ will be \emph{assigned} to a connected subgraph by the \EMPH{assignment function $f: V \rightarrow K$}, 
such that at the end of the process, 
we can create a graph minor $M$ of $G$ by contracting each connected subgraph $f^{-1}(t)$ into a supernode for every terminal $t \in K$. 
By setting $\weight_M(t, t') \coloneqq \dist_G(t, t')$ for each edge $(t, t') \in E(M)$, 
the edge-weighted graph $M = (K, E(M), \weight_M)$ is our solution to the SPR problem on $G$. For a path $P$, we denote by $||P||$ the length of $P$.

We compute the assignment function $f$ in iterations.
In iteration $i$ we shall compute a function $\EMPH{$f_i$}: V \rightarrow K \cup \{\perp\}$, where $\perp$ symbolizes that the vertex remains unassigned.
The function $f$ will be obtained as the function $f_i$ computed at the last iteration of the algorithm.
We will maintain the set of \EMPH{relevant vertices} $\EMPH{$\mathcal{R}_i$} \coloneqq \Set{ v \in V \mid \zeta^{i - 1} \leq \dist_G(v, K) < \zeta^i}$ and the set of \EMPH{assigned vertices $V_i$} by the function $f_i$ to some terminals, for each iteration $i$, where $\EMPH{$\zeta$} \coloneqq c \cdot \beta \cdot \tau$, for $\beta$ and $\tau$ being the constants provided by \Cref{cor:scatter} and $c$ being some large constant. 
Initialize $f_0(t) \coloneqq t$ for each $t \in K$, and $f_0(v) \coloneqq \,{\perp}$ for each $v \in V \setminus K$.
Define both \EMPH{$\mathcal{R}_0$} and \EMPH{$V_0$} to be $K$.
Inductively, we maintain the properties that $V_{i-1} \subseteq V_{i}$ and $\bigcup_{j \le i}
\mathcal{R}_j \subseteq V_i$,
hence the algorithm terminates when all vertices have been assigned.

\begin{figure}[h!]
    \centering
    \includegraphics[width=\textwidth]{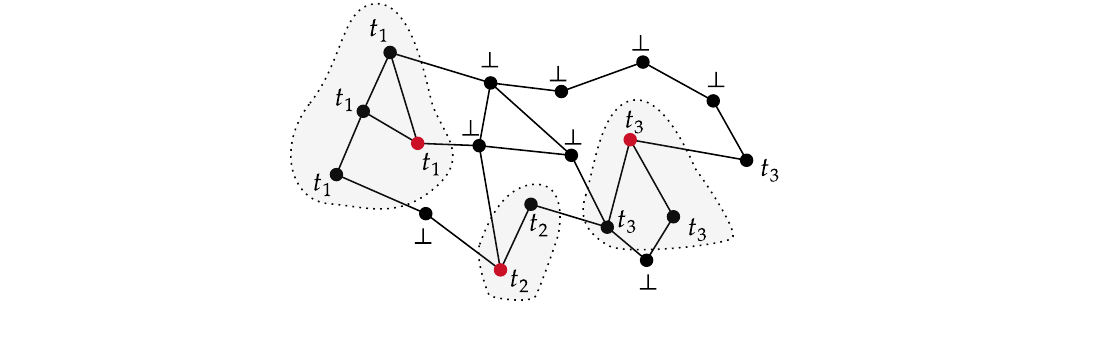}
    \caption{An SPR instance with 3 Steiner points. Values of assignment function $f$ are shown next to vertices.}
    \label{fig:SPR_f}
\end{figure}

At the $i$-th iteration of the algorithm, we compute $\beta$-approximate $(\tau, \zeta^{i-1})$-scattering partition~\EMPH{$\mathcal{P}_i$}, 
provided by \Cref{cor:scatter}, on the subgraph induced on the unassigned vertices $\EMPH{$G_i$} \coloneqq G[V \setminus V_{i-1}]$.
Let~\EMPH{$\mathcal{C}_i$} be the set of clusters in $\mathcal{P}_i$ that contain at least one vertex in $\mathcal{R}_i$. 
All vertices in the clusters of $\mathcal{C}_i$ will be assigned by $f_i$ at iteration $i$. 

We classify the clusters in $\mathcal{C}_i$
into \EMPH{levels}, starting from level 0, viewing $V_{i-1}$ as a \emph{level-0} cluster. 
We say that a cluster $C \in \mathcal{C}_i$ is at \EMPH{level $j$} if $j$ is the minimum index such that there is an edge of weight at most $\zeta^i$  connecting a vertex $u$ in $C$ and another vertex $v$ in some level-$(j-1)$ cluster $C'$.
If there are multiple such edges, we fix one of them arbitrarily;
we call vertex $v$ in $C'$ the \EMPH{linking vertex} of $C$.  
Let \EMPH{$\level_i(C)$} denote the level of $C$. 
Observe that every $\mathcal{C}_i$ contains a vertex in $\mathcal{R}_i$, i.e., there exists a vertex $v \in \mathcal{C}_i$ such that $\zeta^{i - 1} \leq \dist_G(v, K) < \zeta^i$. Hence, it is readily verified that every cluster $\mathcal{C}_i$ has a linking vertex,
and thus $\EMPH{$\level_i(C)$}$ is a valid level.

For every vertex $v \in V_{i-1}$, we set $f_{i}(v) \coloneqq f_{i-1}(v)$.
For every vertex not in $\bigcup \mathcal{C}_i$ (or $V_{i-1}$), we set $f_{i}(v) = \perp$.
Next, we scan all clusters in $\mathcal{C}_i$ by non-decreasing order of level, starting from level 1. 
For each vertex $u$ in each cluster $C$, 
we set $f_i(u)$ to be $f_i(v_C)$, where $v_C$ is the linking vertex of $C$.
If some unassigned vertices remain, we proceed to the next iteration; otherwise, the algorithm terminates.

\begin{remark} 
\label{rem:diff}
The algorithm presented here is different than that of \cite{Filtser20B} in two crucial points: 
\begin{itemize}
    \item First, as mentioned, we use approximate scattering partitions (as in Definition~\ref{def:scatter}) rather than the scattering partitions of \cite{Filtser20B}. This change poses several  technical challenges in the argument.
    \item To cope with approximate scattering partitions, we do not use constant 2 as in \cite{Filtser20B}
    but rather use a bigger constant $\zeta$ (as defined above).
\end{itemize}
\end{remark}

\subsection{Distortion Analysis}
\label{SS:distortion}

From the algorithm, any vertex within distance between $\zeta^{i-1}$ to $\zeta^i$ from $K$ is assigned at iteration at most~$i$.
However, the following claim narrows the possibilities down to two choices. The claim is analogous to Claim 5 in \cite{Filtser20B}, where we use $\zeta$ instead of 2, and its proof is similar.

\begin{claim}[{\cite[Claim~5]{Filtser20B}}]
    \label{clm:assign-itr}
    Any vertex $v$ satisfying $\zeta^{i - 1} \le \dist_G(v, K) < \zeta^i$ is assigned during iteration $i - 1$ or $i$.  
    Consequently, any vertex $v$ assigned during iteration $i$ must satisfy $\zeta^{i - 1} \le \dist_G(v, K) < \zeta^{i+1}$.
\end{claim}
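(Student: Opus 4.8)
The plan is to reason about the two assertions separately and in a specific order. The first sentence has two halves: (a) a vertex $v$ with $\zeta^{i-1}\le\dist_G(v,K)<\zeta^i$ is assigned \emph{no later than} iteration $i$, and (b) it is assigned \emph{no earlier than} iteration $i-1$. For half (a), note that $v\in\mathcal{R}_i$ by definition of the relevant set, so at iteration $i$ the vertex $v$ lies in some cluster $C$ of the partition $\mathcal{P}_i$ of $G_i=G[V\setminus V_{i-1}]$ (assuming $v$ is still unassigned at that point, i.e.\ $v\notin V_{i-1}$; if it was assigned earlier we are already done). That cluster $C$ contains $v\in\mathcal{R}_i$, hence $C\in\mathcal{C}_i$, and every vertex of every cluster in $\mathcal{C}_i$ gets assigned during iteration $i$ by construction — provided the level $\level_i(C)$ is well-defined, which the algorithm text already argues holds because $C$ contains a vertex of $\mathcal{R}_i$ and therefore has a linking vertex. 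So $v\in V_i$. For half (b), I would argue contrapositively: if $v$ were assigned at some iteration $j\le i-2$, then $v\in V_j\subseteq V_{i-2}$, and I need to derive a contradiction with $\dist_G(v,K)\ge\zeta^{i-1}$. The mechanism is that a vertex assigned at iteration $j$ lies in a cluster of $\mathcal{C}_j$, which is a cluster of a $\beta$-approximate $(\tau,\zeta^{j-1})$-scattering partition, and these clusters are organized into at most $\tau$ levels (this is the bound I will establish below), each level reachable from the previous by an edge of weight at most $\zeta^j$, and level~$0$ is $V_{j-1}$ whose vertices are within distance roughly $\zeta^{j-1}$ of $K$. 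Walking back along the level structure from $v$ to $V_{j-1}$ to $K$ gives $\dist_G(v,K)\le (\text{level count})\cdot\Delta_{\mathrm{cluster}} + (\text{level count})\cdot\zeta^j + \zeta^{j-1}$; with cluster weak-diameter $\zeta^{j-1}$, level count at most $\tau$, this is at most $O(\tau)\cdot\zeta^j$, which is at most $\zeta^{j+1}\le\zeta^{i-1}$ once $\zeta=c\beta\tau$ with $c$ large enough — yielding the desired contradiction (strictness can be handled by the slack in $c$). The second sentence of the claim is then immediate: a vertex assigned at iteration $i$ cannot have $\dist_G(v,K)<\zeta^{i-1}$ (it would be in $\mathcal{R}_{i-1}$ or earlier and, by half (a) applied with $i-1$, already assigned by iteration $i-1$, but then by part (b)'s monotonicity it would not be newly assigned at $i$) and cannot have $\dist_G(v,K)\ge\zeta^{i+1}$ (then by half (b) it is assigned no earlier than iteration $i$, fine, but actually I want the upper bound: re-run half (a)'s reachability estimate at iteration $i$ itself — any vertex assigned at iteration $i$ lies in a cluster of $\mathcal{C}_i$ and hence, by the same walk-back bound, is within $O(\tau)\cdot\zeta^i<\zeta^{i+1}$ of $K$).

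The key quantitative ingredient, which I would isolate as a sub-step, is the bound that $\mathcal{C}_i$ has at most $\tau$ levels and that consequently every vertex assigned at iteration $i$ is within distance $\le c'\tau\zeta^{i-1}$ of $V_{i-1}$ for a modest constant $c'$. To see the level bound: fix a cluster $C\in\mathcal{C}_i$ at level $j$; there is a chain $C=C_j, C_{j-1},\dots,C_1$ with an edge of weight $\le\zeta^i$ between consecutive clusters, and $C_1$ is joined by such an edge to $V_{i-1}$. Pick a vertex $v^\star\in C$ with $\zeta^{i-1}\le\dist_G(v^\star,K)<\zeta^i$ (it exists since $C\in\mathcal{C}_i$). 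Starting from a terminal, pushing through $V_{i-1}$ into $C_1$, then hopping cluster to cluster — each hop costs at most (weak diameter of a cluster) $+$ (one linking edge) $\le \zeta^{i-1}+\zeta^i\le 2\zeta^i$ — we reach $v^\star$ after $j$ hops from $C_1$ plus the initial step, so $\dist_G(v^\star,K)\le (\text{stuff about }V_{i-1}) + j\cdot 2\zeta^i$. Wait — I need the other direction of the inequality to lower-bound $j$: since $\dist_G(v^\star,K)\ge\zeta^{i-1}$ that gives nothing. The correct argument for the level \emph{upper} bound is the reverse: every vertex in every cluster of $\mathcal{C}_i$ has $\dist_G(\cdot,K)<\zeta^i$ (because it's in the scattering partition of $G_i$ and... no, that's not automatic either). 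Let me reconsider: actually the level bound should come from the fact that $\mathcal{R}_i$-vertices have distance $<\zeta^i$ to $K$ and from how far a level-$j$ cluster is \emph{forced} to be; a level-$j$ cluster is reachable from $K$ through $V_{i-1}$ and $j$ clusters, but also it \emph{contains} (since it is in $\mathcal{C}_i$) a vertex within $[\zeta^{i-1},\zeta^i)$ of $K$, and — here's the point — $V_{i-1}$ consists of vertices within $\zeta^{i-1}$-ish of $K$, so a vertex of $C$ reached by $j$ hops of length $\le 2\zeta^i$ from $V_{i-1}$ lies within $\approx 2j\zeta^i$ of $K$; but conversely, the definition of level forces that at level $j-1$ there was already a cluster of $\mathcal{C}_i$ "further out", and iterating, every step of level must make \emph{progress in distance from $K$} of at least... hmm, this needs the scattering diameter bound to cap cluster size and needs a lower bound on how much each level adds, which does not obviously hold. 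I think the honest route, matching Filtser's Claim~5, is: the level bound $\le\tau$ is not what's needed; rather one shows directly by the walk-back that any assigned vertex at iteration $j$ is within $\le (\text{number of levels actually present})\cdot 2\zeta^j + \zeta^{j-1}$ of $K$, and separately bounds the number of levels present by noting that a vertex at level $\ell$ in $\mathcal{C}_j$ has distance $\ge (\ell-1)$ to $K$ in units of... I will need to look up the precise invariant, but the shape is: levels present $\le O(\tau)$ because each new level corresponds to traversing a scattered path's worth of clusters.

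The main obstacle, then, is pinning down exactly why the number of levels in $\mathcal{C}_i$ — equivalently, the maximum level of a cluster that ends up getting assigned — is bounded by $O(\tau)$ rather than being unbounded. This is where the scattering property (as opposed to just the diameter property) of $\mathcal{P}_i$ must be used: given a vertex $v$ far out in the level hierarchy and a terminal $t$ with $\dist_G(v,t)$ controlled, one takes a $\beta$-approximate $(\tau,\zeta^{i-1})$-scattered path between $v$ and a suitable nearby assigned vertex; this path crosses at most $\tau$ clusters, and one argues the level can increase by at most $1$ (or by $O(1)$) per cluster crossed along a bounded-length scattered path, capping the level at $O(\tau)$. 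Making "the level increases by at most a constant per cluster along a scattered path" precise — in particular handling the approximate-scattering subtleties (the path is only $\beta\zeta^{i-1}$-long, its edges are $\le\zeta^{i-1}$, it need not be a shortest path) and reconciling it with the edge-weight threshold $\zeta^i$ used in the level definition — is the delicate point, and it is exactly the place where our $\zeta=c\beta\tau$ (rather than Filtser's $2$) earns its keep: the gap between $\zeta^{i-1}$ and $\zeta^i$ must dominate $\beta\tau$ times the per-cluster cost. I would carry out this level-bound sub-lemma first, then assemble the two-sided distance bound, then read off both sentences of the claim.
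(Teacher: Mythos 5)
Your half (a) is correct and is exactly the paper's argument: $v\in\mathcal{R}_i$, so if $v$ is still unassigned at the start of iteration $i$ it lies in a cluster of $\mathcal{P}_i$ containing a vertex of $\mathcal{R}_i$ (namely $v$ itself), that cluster is in $\mathcal{C}_i$, and every vertex of every cluster in $\mathcal{C}_i$ is assigned in iteration $i$. The genuine gap is in half (b). Your proposed route --- walk back through the level hierarchy to $V_{j-1}$ and then to $K$, after first bounding the number of levels by $O(\tau)$ --- is never carried through: you explicitly concede that you cannot pin down the level bound and end with an unresolved sketch about scattered paths. That level bound ($\level_i(C)\le\zeta\tau$) is indeed nontrivial and is proved in the paper via the scattering property, but it is the first step of the proof of \Cref{clm:real-dist}, not of \Cref{clm:assign-itr}; it is not needed here at all. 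The observation you are missing is much simpler: if $v$ is assigned at iteration $j$, then $v$ lies in a cluster $C\in\mathcal{C}_j$, and by the very definition of $\mathcal{C}_j$ that cluster contains a vertex $u$ with $\dist_G(u,K)<\zeta^j$; moreover $C$ has diameter at most $\zeta^{j-1}$, so $\dist_G(u,v)\le\zeta^{j-1}$. The triangle inequality then gives
\[
\zeta^{i-1}\;\le\;\dist_G(v,K)\;\le\;\dist_G(u,K)+\dist_G(u,v)\;\le\;\zeta^{j}+\zeta^{j-1}\;<\;\zeta^{2}\cdot\zeta^{j-1},
\]
hence $j\ge i-1$ --- no levels, no linking edges, no scattering property, and no bound on how close $V_{j-1}$ is to $K$ (your ``level $0$ is $V_{j-1}$ whose vertices are within distance roughly $\zeta^{j-1}$ of $K$'' would itself require an induction).

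Because your plan routes the distance bound for half (b), and also the upper bound in the ``Consequently'' sentence, through the missing level-count argument, neither closes as written. Note also that the second sentence needs no new estimate at all: it follows formally from the first by picking $i'$ with $\zeta^{i'-1}\le\dist_G(v,K)<\zeta^{i'}$ and observing that the first sentence forces $i\in\{i'-1,i'\}$, so $\zeta^{i-1}\le\dist_G(v,K)<\zeta^{i+1}$.
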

\begin{proof}
If $v$ remains unassigned until iteration $i$, it will be assigned during iteration $i$ by construction.
Suppose that $v$ was assigned during iteration $j$. Then $v$ belongs to a cluster $C \in \mathcal{C}_j$, and there
is a vertex $u \in C$ with $\dist_G(u,K) \le \zeta^j$. As $C$ has strong diameter at most $\zeta^{j-1}$ and $\zeta > 2$, we obtain 
\[
\zeta^{i-1} \le \dist_G(v,K) \le
\dist_G(u,K) + \dist_G(u,v) \le \zeta^j + \zeta^{j-1} < \zeta^{2} \cdot \zeta^{j-1},
\]
implying that $i-1 < 2 + (j-1)$, or equivalently
$j \ge i-1$.
\end{proof}

\noindent The following claim is analogous to Corollary 1 in \cite{Filtser20B}, but we introduce a few changes in the proof.

\begin{claim}[{\cite[Corollary~1]{Filtser20B}}]
    \label{clm:real-dist}
    For every $v \in V$, $\dist_G(v, f(v)) \leq 3\tau \cdot \zeta^2 \cdot \dist_G(v, K)$.
\end{claim}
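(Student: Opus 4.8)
The plan is to prove \Cref{clm:real-dist} by induction on the iteration $i$ at which $v$ is assigned, with the base case $i=0$ being immediate (then $v\in K$ and $f(v)=f_0(v)=v$). The engine of the inductive step is a bound on the depth of the level hierarchy inside a single iteration:

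\textbf{Level bound.} For every $i$ and every $C\in\mathcal{C}_i$, $\level_i(C)=O(\zeta\tau)$. To see this, pick $r\in C\cap\mathcal{R}_i$, so $\zeta^{i-1}\le\dist_G(r,K)<\zeta^i$ and $r\notin V_{i-1}$. Take a shortest path $Q$ in $G$ from $r$ to $K$; it has length $<\zeta^i$, and every edge on it has weight $<\zeta^i$. Let $w$ be the first vertex of $Q$ (scanning from $r$) lying in $V_{i-1}$, and let $Q'$ be the prefix of $Q$ ending at the vertex $w^-$ just before $w$; then $Q'$ lies entirely in $G_i$, has length $<\zeta^i$, and $w^-$ is joined to $w\in V_{i-1}$ by an edge of weight $<\zeta^i$. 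Chop $Q'$ into at most $\zeta$ consecutive subpaths, each of length at most $\zeta^{i-1}$, hence each a path in $G_i$ between two vertices at $G_i$-distance at most $\zeta^{i-1}$. Applying the $\beta$-approximate $(\tau,\zeta^{i-1})$-scattering property of $\mathcal{P}_i$ to each subpath gives a path in $G_i$, all of whose edges have weight at most $\zeta^{i-1}$, meeting at most $\tau$ clusters of $\mathcal{P}_i$; concatenating over the $\le\zeta$ subpaths yields a walk in $G_i$ from $r$ to $w^-$ meeting at most $\zeta\tau$ distinct clusters. Shortcutting repeated clusters out of this walk produces a chain of at most $\zeta\tau$ clusters, consecutive ones joined by edges of weight $\le\zeta^{i-1}\le\zeta^i$, running from $C$ to a cluster adjacent to $V_{i-1}$; this certifies $\level_i(C)\le\zeta\tau$.

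\textbf{From the level bound to the claim.} Suppose $v$ is assigned in iteration $i$, say $v\in C\in\mathcal{C}_i$. Walk inside $C$ to an endpoint of $C$'s linking edge (cost $\le\zeta^{i-1}$, by the weak-diameter bound of $\mathcal{P}_i$), cross that edge to the linking vertex $v_C$ (cost $\le\zeta^i$), and note $v_C$ lies in a cluster of level $\level_i(C)-1$; repeat, dropping the level by one each step, until after at most $\level_i(C)\le\zeta\tau$ steps we reach a vertex $w\in V_{i-1}$. By construction $f(v)=f_i(v)=f_i(v_C)=\cdots=f_i(w)=f_{i-1}(w)=f(w)$, and $w$ was assigned in some earlier iteration $i'\le i-1$. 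Repeating the unrolling inside $w$'s iteration, and so on, produces a walk from $v$ to $f(v)\in K$ whose portion lying in iteration $i'$ consists of at most $\zeta\tau$ steps, each of length at most $\zeta^{i'-1}+\zeta^{i'}$, as $i'$ ranges over $\{1,\dots,i\}$. Summing the resulting geometric series and using $\zeta=c\beta\tau\ge 3$,
\[
\dist_G(v,f(v))\ \le\ \sum_{i'=1}^{i}\zeta\tau\,\bigl(\zeta^{i'-1}+\zeta^{i'}\bigr)\ <\ 2\tau\,\zeta^{\,i+1}.
\]
Finally, \Cref{clm:assign-itr} gives $\dist_G(v,K)\ge\zeta^{i-1}$ for a vertex assigned in iteration $i$, so $\dist_G(v,f(v))<2\tau\zeta^{i+1}=2\tau\zeta^2\cdot\zeta^{i-1}\le 2\tau\zeta^2\cdot\dist_G(v,K)\le 3\tau\zeta^2\cdot\dist_G(v,K)$. (The slack between $2\tau\zeta^2$ and the claimed $3\tau\zeta^2$ means a constant-factor-weaker level bound would still suffice.)

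\textbf{Main obstacle.} The delicate step is the level bound, and within it the requirement that the clusters crossed by the scattered detours genuinely lie in $\mathcal{C}_i$ (each contains a vertex of $\mathcal{R}_i$), so that they are legitimate members of the level hierarchy. A detour vertex $x$ only satisfies $\dist_G(x,K)<\zeta^i+\beta\zeta^{i-1}$, which can slightly exceed $\zeta^i$; handling this should exploit that $\zeta=c\beta\tau$ with $c$ large (so the overshoot is a $1+O(1/(c\tau))$ factor) together with the fact that the portion of $Q'$ on which $\dist_G(\cdot,K)$ lies within $\beta\zeta^{i-1}$ of $\zeta^i$ has length $<\beta\zeta^{i-1}$ and can be absorbed. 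A secondary complication, flagged earlier in the paper, is that approximate shortest paths lack the optimal-substructure property, which is precisely why the level-bound argument is phrased in terms of an exact shortest path from $r$ to $K$ rather than an approximate one.
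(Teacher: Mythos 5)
Your proposal follows essentially the same route as the paper's proof: an outer induction on the iteration of assignment, a level bound $\level_i(C_v)\le\zeta\tau$ obtained by truncating an \emph{exact} shortest path to $K$ at the first vertex of $V_{i-1}$, chopping the prefix into at most $\zeta$ pieces of length at most $\zeta^{i-1}$, replacing each piece by its $\beta$-approximate $(\tau,\zeta^{i-1})$-scattered detour, and then unwinding the linking-vertex assignments to accumulate distance. The only structural difference is presentational: the paper runs a double induction (inner on level, with the outer hypothesis invoked at the linking vertex of the level-$1$ cluster, giving the bound $\level_i(C_v)\cdot 2\zeta^i+3\tau\zeta^i$ and hence $3\tau\zeta^{i+1}$), whereas you flatten this into one explicit unrolling across iterations and sum a geometric series, getting $2\tau\zeta^{i+1}$; both are fine. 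Concerning your ``main obstacle'': the paper does not resolve it either --- its proof simply states that the concatenated path ``intersect[s] at most $\zeta\cdot\tau$ clusters in $\mathcal{C}_i$'' and immediately concludes the level bound, implicitly treating every cluster met by the detours as a member of the level hierarchy, without verifying that each such cluster contains a vertex of $\mathcal{R}_i$ (the paper's ``readily verified'' remark about levels being well defined is the separate, easier observation that the \emph{original} prefix has all its vertices in $\mathcal{R}_i$, so the clusters it meets are in $\mathcal{C}_i$). So the subtlety you flag is an elision shared with the paper's own argument, not a defect specific to your write-up; your proposal is otherwise complete and matches the paper's proof.
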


\begin{proof}
Let $i$ be the iteration in which $v$ is assigned, and let \EMPH{$C_v$} be the cluster in $\mathcal{C}_i$ containing $v$.     
We shall prove that 
\begin{equation}
\begin{aligned}
\label{eq:forind}
\dist_G(v, f(v)) \leq 3 \tau \cdot \zeta^{i+1}.
\end{aligned}
\end{equation}
Combining this bound with Claim~\ref{clm:assign-itr} yields 
\[
\dist(v, f(v)) \leq 3 \tau \cdot \zeta^{i+1} \leq 3 \tau \cdot \zeta^2 \cdot \dist_G(v, K),
\] 
as required.
The proof is by induction on the iteration $i$ in which $v$ is assigned. 
The base case  $i = 0$ is trivial, as then $v$ is a terminal, and we have $\dist_G(v, f(v)) = 0 \leq 3\tau \cdot \zeta^{0+1}$. 
We henceforth consider the induction step when $i \ge 1$.  

First, we argue that
$\level_i(C_v) \leq \zeta \cdot \tau$.
Since cluster $C_v$ is in $\mathcal{C}_i$, there exists a vertex $u \in C_v$ such that $\dist_G(u, K) < \zeta^i$. 
Let $P_u \coloneqq (u_1, u_2, \ldots u_s)$ be a shortest path from $u = u_1$ to $K$ (with $\len{P_u} < \zeta^i$), let $\ell$ be the largest index such that $u_1, u_2, \ldots u_{\ell} \in V \setminus V_{i - 1}$, and define the prefix $\EMPH{$Q$} \coloneqq (u_1, u_2, \ldots u_\ell)$ of $P_u$; note that $\ell < s$ and $u_{\ell+1} \in V_{i-1}$. 
Since $\len{Q} < \len{P_u} < \zeta^i$, 
we can greedily partition $Q$ into $\zeta' \le \zeta$ sub-paths $\EMPH{$Q_1, \ldots, Q_{\zeta'}$}$, each of length at most $\zeta^{i-1}$, connected via edges of weight less than $\zeta^i$; 
that is, $Q$
is obtained as the concatenation $Q_1 \circ e_1 \circ Q_2 \circ e_2 \ldots \circ e_{\zeta' -1} \circ Q_{\zeta'}$,
where $\len{Q_j} < \zeta^{i-1}$
and $\len{e_j} < \zeta^i$ for each $j$.
Consider the $\beta$-approximate $(\tau, \zeta^{i-1})$-scattering partition $\mathcal{P}_i$ (provided by \Cref{cor:scatter}), used in the $i$th iteration, on the subgraph $G_i = G[V \setminus V_{i-1}]$ induced on the unassigned vertices.
For each $j$, 
the sub-path $Q_j$ of $Q$ is contained in $G_i$
and it satisfies $\len{Q_j} \leq \zeta^{i-1}$, thus there exists a $\beta$-approximate path \EMPH{$Q'_j$} between the endpoints of $Q_j$ that is scattered by $\tau'$ clusters, with $\tau' \leq \tau$, and each edge of $Q'_j$ is of weight at most $\zeta^{i-1}$.
The path $Q'_1 \circ e_1 \circ Q'_2 \circ e_2 \ldots \circ e_{\zeta' -1} \circ Q'_{\zeta'}$
obtained from $Q$ by replacing each sub-path $Q_j$ by its scattered path $Q'_j$, is a path from $u_1$ to $u_\ell$ intersecting at most  $\zeta \cdot \tau$ clusters in $\mathcal{C}_i$. 
Since $u_{\ell}$ is in a cluster of level $1$ (because $u_{\ell+1}$ is in $V_{i-1}$, which is of level $0$), $\level_i(C_v) \leq \zeta \cdot \tau$, as required.

\medskip
\noindent We then show that $\dist_G(v, f(v)) \leq \level_i(C_v) \cdot 2 \cdot \zeta^{i} + 3\tau \cdot \zeta^{i}$ by induction on the ($i$th-iteration) level of $C_v$.
We employ a double induction, one on the iteration $i$ and the other on the level of $C_v$; aiming to avoid confusion, we shall refer to the former as the ``outer induction'' and to the latter as the ``inner induction''.

Let \EMPH{$x$} be the linking vertex of $C_v$; in particular, we have $f(v) = f(x)$. 
Let \EMPH{$x_v$} be the vertex in $C_v$ such that $(x, x_v) \in E$ and $\weight(x, x_v) \leq \zeta^i$. 
For the basis $\level_i(C_v) = 1$ of the inner induction, $x$ is assigned during iteration $i' < i$.
By the outer induction hypothesis for iteration $i'$ (i.e., substituting $i$ with $i'$ in Eq.~\ref{eq:forind}), we obtain $\delta_G(x,f(x)) \le 3 \tau \cdot \zeta^{i' + 1} \le 3 \tau \cdot \zeta^{i}$.
By the triangle inequality and
since $\zeta > 1$:
\begin{equation}
\begin{aligned}
    \dist_G(v, f(v)) &\leq \dist_G(v, x_v) + \dist_G(x_v, x) + \dist_G(x, f(v)) \\
    &\leq \zeta^{i - 1} + \zeta^{i} + \dist_G(x, f(x)) \leq \zeta^{i - 1} + \zeta^{i} + 3\tau \cdot \zeta^i \leq 2 \cdot \zeta^{i} + 3\tau \cdot \zeta^{i}.
\end{aligned}
\end{equation}
For the inner induction step, consider the case $\level_i(C_v) > 1$. Let $C_x$ be the cluster in $\mathcal{P}_i$ containing $x$; in particular, we have $\level_i(C_x) = \level_i(C_v) - 1$. By the inner induction hypothesis on the level of $C_x$, we have $\dist(x, f(x)) \leq \level_i(C_x) \cdot 2 \cdot \zeta^{i} + 3\tau \cdot \zeta^{i}$.  
Using the triangle inequality again, we have:
\begin{equation}
\begin{aligned}
    \dist_G(v, f(v)) &\leq \dist_G(v, x_v) + \dist_G(x_v, x) + \dist_G(x, f(v)) \\
    &\leq \zeta^{i - 1} + \zeta^{i} + \dist_G(x, f(x)) \leq \zeta^{i - 1} + \zeta^{i} + \level_i(C_x) \cdot 2 \cdot \zeta^{i} + 3\tau \cdot \zeta^i \\
    &\leq 2 \cdot \zeta^{i} + (\level_i(C_v) - 1) \cdot 2 \cdot \zeta^{i} + 3\tau \cdot \zeta^{i} = \level_i(C_v) \cdot 2 \cdot \zeta^{i} + 3\tau \cdot \zeta^{i},
\end{aligned}
\end{equation}
which completes the inner induction step.

Since $\level_i(C_v) \leq \zeta \cdot \tau$ and as $\zeta > 3$, it follows that $\dist(v, f(v)) \leq 3 \tau \cdot \zeta^{i+1}$, which completes the outer induction step. The claim follows. 
\end{proof}

\noindent Now we are ready to prove Lemma~\ref{lem:scattering-spr}.

\begin{proof}[of Lemma~\ref{lem:scattering-spr}]
We prove that our algorithm returns a minor of $G$ that satisfies the SPR conditions.
By the description of the algorithm, it is immediate that the subgraph induced by the vertex set $f^{-1}(t)$ is connected,  for each $t \in K$. 
Thus, it remains to prove that the minor $M$ induced by $f$ is a distance preserving minor of $G$ with distortion $O(\tau^8 \cdot \beta^5)$.

Consider an arbitrary pair of terminals $t$ and $t'$. Let $P \coloneqq (v_1, v_2, \ldots, v_{|P|})$ be a shortest path between $v_1 \coloneqq t$ and $v_{|P|} \coloneqq t'$.
For each subpath $I \coloneqq (v_\ell, v_{\ell + 1}, \ldots v_r)$ of $P$, let \EMPH{$I^+$} denote the \EMPH{extended subpath} $(v_{\ell - 1}, v_\ell, v_{\ell + 1}, \ldots v_r, v_{r + 1})$; we define $v_0 \coloneqq v_1$ and $v_{|P| + 1} \coloneqq v_{|P|}$ for technical convenience. 
Partition $P$ into a set \EMPH{$\mathcal{I}$} of subpaths called \EMPH{intervals} such that for each subpath $I \in \mathcal{I}$ between $v_\ell$ and $v_r$: 
\begin{equation}
\begin{aligned}
    \label{eq:wI}
    \len{I}  \leq \eta \cdot\dist_G(v_\ell, K) \leq \len{I^+},
\end{aligned}
\end{equation}
where $\EMPH{$\eta$} \coloneqq \frac{1}{4\zeta}$.
It is easy to verify that $\mathcal{I}$ can be constructed greedily from $P$.

Consider an arbitrary interval $I = (v_\ell, v_{\ell + 1}, \ldots v_r) \in \mathcal{I}$. 
Let $u \in I$ be a vertex that is assigned in iteration $i$, and assume no vertex of $I$ was assigned prior to iteration $i$.
Since $u$ is assigned in iteration $i$, $u$ belongs to a cluster $C$ in $\mathcal{C}_i$, which is the subset of
clusters that contain at least one vertex in $\mathcal{R}_i$, among the $\beta$-approximate $(\tau, \zeta^{i-1})$-scattering partition $\mathcal{P}_i$ computed at the $i$th iteration.
Hence, by definition, $C$ has strong diameter  at most $\zeta^{i-1}$ and there exists a vertex $u' \in C$ such that $\dist_G(u', K) < \zeta^i$, implying that 
\begin{equation}
\begin{aligned}
    \label{eq:duK2}
    \dist_G(u, K) \leq 
    \dist_G(u, u') +
    \dist_{G}(u', K) 
    < 
    \zeta^{i-1} + \zeta^i
    < 
    2\zeta^{i}.   
\end{aligned}
\end{equation}
By Eq.~\ref{eq:wI} and the triangle inequality, 
\[
\dist_G(v_\ell, K) \leq \dist_G(v_\ell, u) + \dist_G(u, K) \leq \len{I} + \dist_G(u, K)
\leq \eta \cdot \dist_G(v_\ell,K) + \dist_G(u,K),
\]
which together with Eq.~\ref{eq:duK2} and the fact that $\eta < 1/2$ yields
\begin{equation}
\begin{aligned}
\label{eq:boundvlk}
    \dist_G(v_\ell,K) \le \frac{\dist_G(u,K)}{1-\eta} <  \frac{2\zeta^{i}}{1-\eta} 
    <  4 \zeta^{i}. 
\end{aligned}
\end{equation}
By Eq.~\ref{eq:wI} and Eq.~\ref{eq:boundvlk}, 
\begin{equation}
\begin{aligned}
\label{eq:distsmall}
    \dist_G(v_\ell, v_r) &= \len{I} \leq \eta \cdot \dist_G(v_\ell, K) < \eta \cdot 4\zeta^{i}  = \zeta^{i - 1},
\end{aligned}
\end{equation}
where the last inequality holds
as $\eta = \frac{1}{4\zeta}$.
    
At the beginning of iteration $i$, all vertices of $I$ are unassigned, i.e., $I$ is in $G_i = G[V \setminus V_{i-1}]$,
    and Eq.~\ref{eq:distsmall} yields
$\dist_{G_{i}}(v_\ell, v_r)
= \dist_{G}(v_\ell, v_r)
<  \zeta^{i-1}$.
At the $i$th iteration a $\beta$-approximate $(\tau, \zeta^{i - 1})$-scattering partition 
$\mathcal{P}_{i}$
on $G_{i}$ is computed, thus there exists a $\beta$-approximate 
$(\tau, \zeta^{i - 1})$-scattered
path \EMPH{$I'$} in $G_{i}$ from $v_\ell$ to $v_r$ that is scattered by at most $\tau$ clusters in $\mathcal{P}_{i}$, with $\len{I'} \le \beta \cdot \zeta^{i-1}$.
A path is called a \EMPH{detour} if its first and last vertices are assigned to the same terminal. 
Since vertices in the same cluster will be assigned to the same terminal,
at the end of iteration $i$, $I'$ can be greedily partitioned into at most $\tau$ detours and $\tau + 1$ subpaths that contain only unassigned vertices;
in other words, we can write $I' \coloneqq \EMPH{$P_1 \circ Q_1 \circ \ldots \circ P_\rho \circ Q_\rho \circ P_{\rho + 1}$}$, where $\rho \leq \tau$, $Q_1, Q_2, \ldots Q_{\rho}$ are detours, and each of the (possibly empty) sub-paths $P_1, P_2, \ldots P_{\rho + 1}$ contains only unassigned vertices at the end of iteration $i$. 

Fix an arbitrary index $j \in [1 \,..\,\rho+1]$.
Let $a_j$ and $b_j$ be the first and last vertices of $P_j$; it is possible that $a_j = b_j$.  
Since $\len{I'} \le \beta \cdot \zeta^{i-1}$ and as $\beta < \zeta$, we have
\begin{equation}
\begin{aligned} \label{eq:ajbj}
    \dist_G(a_j, b_j) \leq \len{P_j} \le \len{I'} \leq \beta \cdot \zeta^{i-1} < \zeta^i.    
\end{aligned}
\end{equation}
At the beginning of iteration $i+1$, all vertices of $P_j$ are unassigned by definition, hence $P_j$ is in $G_{i+1} = G[V \setminus V_{i}]$
and by Eq.~\ref{eq:ajbj},
$\dist_{G_{i+1}}(a_j, b_j)
\le \len{P_j} < \zeta^i$.
At the $(i+1)$th iteration a
$\beta$-approximate $(\tau, \zeta^{i})$-scattering partition  $\mathcal{P}_{i+1}$ on $G_{i+1}$ is computed, thus there exists a $\beta$-approximate 
$(\tau, \zeta^{i})$-scattered path \EMPH{$P'_j$} 
 in $G_{i+1}$ from $a_j$ to $b_j$
that is scattered by at most $\tau$
clusters in $\mathcal{P}_{i+1}$, with $\len{P'_j} \leq \beta \cdot \zeta^i$. 
    
Next, consider the path $\EMPH{$I''$} \coloneqq P'_1 \circ Q_1 \circ \ldots \circ P'_\rho \circ Q_\rho \circ P'_{\rho + 1}$. 
By Eq.~\ref{eq:distsmall} we have 
\begin{equation}
\begin{aligned} 
\label{eq:boundi''}
\len{I''} \leq 
\len{I} + \sum_{j=1}^{\rho+1} \len{P'_j} 
\le \zeta^{i-1} + (\tau+1) \beta \cdot \zeta^i 
\le (\tau+2) \beta \cdot \zeta^i
\end{aligned}
\end{equation}
Since no vertex in $I$ (in particular, $v_\ell$) was assigned prior to iteration $i$, 
Claim~\ref{clm:assign-itr} yields
$\dist_G(v_\ell,K) \ge \zeta^{i-1}$.
Eq.~\ref{eq:wI} yields $\len{I^+} \ge \eta \cdot \dist_G(v_\ell,K) \ge \eta \cdot \zeta^{i-1}$,
and as $\eta = \frac{1}{4\zeta}$ 
we obtain
\begin{equation}
\begin{aligned} 
\label{eq:idouble}
\len{I''} \leq  (\tau+2) \beta \cdot \zeta^i
\le 4 \zeta^2 (\tau+2)\beta 
\cdot \len{I^+}. 
\end{aligned}
\end{equation}
    
Next, we argue that
all vertices in $I''$ are assigned at the end of iteration $i + 1$.
Let \EMPH{$w$} be an arbitrary vertex in $I''$;
by Claim~\ref{clm:assign-itr}, it suffices to show that $\dist_G(w, K) < \zeta^{i + 1}$. 
Recall that $u$ is a vertex of $I$ that is assigned in iteration $i$.
By Eq.~\ref{eq:duK2}, Eq.~\ref{eq:distsmall}, Eq.~\ref{eq:boundi''} and the triangle inequality, 
\begin{equation}
\begin{aligned} 
\label{eq:dw}
    \dist_G(w, K) &\leq \dist_G(v_\ell, K) + \dist_G(v_\ell, w) \leq \dist_G(v_\ell, u) + \dist_G(u, K) + \dist_G(v_\ell, w) \\
    &\leq \len{I} + \dist_G(u, K) + \len{I''} < \zeta^{i-1} + 2\zeta^{i} + (\tau+2) \beta \cdot \zeta^i < \zeta^{i+1}, 
\end{aligned}
\end{equation}
where the last inequality holds since
$\zeta = c \cdot \beta \cdot \tau$ for a sufficiently large constant $c$.

Hence, every vertex in $P'_j$ is assigned by iteration $i + 1$, for every $j \in [1 \,..\,\rho+1]$. Then, $P'_j$ could be greedily partitioned into at most $\tau$ detours, as before with $I'$, but we have no subpaths of unassigned vertices in $I''$, since every vertex in $I''$ must be assigned by the end of iteration $i + 1$. 
We have thus shown that $I''$ can be partitioned into at most $O(\tau^2)$ detours \EMPH{$D_1, D_2, \ldots D_g$}, with $\EMPH{$g$} = O(\tau^2)$. 
For each $j \in [1 \,..\, g]$, let \EMPH{$x_j$} and \EMPH{$y_j$} be the first and last vertices in $D_j$.
Because $I''$ are partitioned greedily into \emph{maximal} detours, one has $f(y_j) \ne f(x_{j+1})$ for all $j$.
Observe that there exists an edge between $f(x_j)$ and $f(x_{j + 1})$ in the SPR minor $M$ for each $j \in [1 \,..\, g-1]$, since $f(x_j) = f(y_j) \in K$ and $(y_j, x_{j + 1}) \in E$.
Consequently, by the triangle inequality, Corollary~\ref{clm:real-dist} and Eq.~\ref{eq:idouble},
\begin{equation}
\begin{aligned} 
\label{eq:dvlvr1}
\dist_M(f(v_\ell), f(v_r)) 
&\leq \sum_{j = 1}^{g - 1}\dist_M(f(x_j), f(x_{j + 1})) = \sum_{j = 1}^{g - 1}\dist_G(f(x_j), f(x_{j + 1})) \\
&\leq \sum_{j = 1}^{g - 1}\left[\dist_G(x_j, f(x_j)) + \dist_G(x_j, x_{j + 1}) + \dist_G(x_{j + 1}, f(x_{j + 1}))\right] \\
&\leq 2\sum_{j = 1}^{g} \dist_G(x_j, f(x_j)) + \sum_{j = 1}^{g - 1}\dist_G(x_j, x_{j + 1}) \leq 2\sum_{j = 1}^{g} \dist_G(x_j, f(x_j)) + \len{I''} \\
&\leq 6\tau \zeta^2 \sum_{j = 1}^{g} \dist_G(x_j, K) + 4 \zeta^2 (\tau+2)\beta 
\cdot \len{I^+}. 
\end{aligned}
\end{equation}
For every vertex $v'' \in I''$, we have
\begin{equation}
\begin{aligned} 
\label{eq:dv-k}
    \dist_G(v'', K) &\leq \dist_G(v'', v_\ell) + \dist_G(v_\ell, K) \leq \len{I''} + \dist_G(v_\ell, K) \\
    &\leq 4 \zeta^2 (\tau+2)\beta 
    \cdot \len{I^+}  + \frac{\len{I^+}}{\eta} \leq 4 \zeta^2 (\tau+3)\beta 
    \cdot \len{I^+},
\end{aligned}
\end{equation}
where the penultimate inequality holds by
Eq.~\ref{eq:wI} and Eq.~\ref{eq:idouble} and the last inequality holds
since $\eta = \frac{1}{4\zeta}$. We remark that  Eq.~\ref{eq:dv-k} also holds for any vertex $v' \in I$, which will be used below for deriving Eq.~\ref{eq:dist-inter-interval}. Hence, for every $j \in [1 \,..\, g]$, $\dist_G(x_j, K) \leq 4 \zeta^2 (\tau+3)\beta 
\cdot \len{I^+}$; plugging this in Eq.~\ref{eq:dvlvr1} yields:
\begin{equation}
\begin{aligned}
\label{eq:dvlvr2}
    \dist_M(f(v_\ell), f(v_r)) \leq 24
    \zeta^4 \tau(\tau+3) \beta g \cdot \len{I^+}
    + 4 \zeta^2 (\tau+2)\beta 
\cdot \len{I^+}
     = O(\zeta^4 \cdot \tau^4 \cdot \beta)
     \cdot \len{I^+}.
\end{aligned}
\end{equation}

Next, we bound the distance between $t$ and $t'$ in $M$. 
So far we fixed an arbitrary interval  $I = (v_\ell, v_{\ell + 1}, \ldots v_r) \in \mathcal{I}$.
Writing $\mathcal{I} = \{I_1, I_2, \ldots I_s\}$, we have 
$\sum_{j = 1}^s\len{I_j} = \len{P} = \dist_G(t,t')$, hence  
    \begin{equation}
\begin{aligned}
    \label{eq:basicsum}
        \sum_{j = 1}^s\len{I^+_j} \leq 2\len{P} = 2\cdot \dist_G(t, t'). 
    \end{aligned}
\end{equation}
For each $I_j$, let \EMPH{$v^j_\ell$} and \EMPH{$v^j_r$} be the first and last vertices of $I_j$. 
For each $j \in [1 \,..\, s-1]$, since $(v_r^j, v_\ell^{j + 1}) \in E$, either $(f(v_r^j), f(v_\ell^{j + 1})) \in E(M)$ or
$f(v_r^j) = f(v_\ell^{j + 1})$,
thus we have 
$\dist_M(f(v^j_r), f(v^{j + 1}_\ell))
= \dist_G(f(v^j_r), f(v^{j + 1}_\ell))$. Hence, using the triangle inequality:
\begin{equation}
\begin{aligned}
\label{eq:dtt'}
    \dist_M(t, t') &\leq \sum_{j = 1}^{s - 1}
    \Paren{\big. \dist_M(f(v^j_\ell), f(v^j_r)) + \dist_M(f(v^j_r), f(v^{j + 1}_\ell)) } 
    + \dist_M(f(v^s_\ell), f(v^s_r)) \\
    &\leq O(\zeta^4 \cdot \tau^4 \cdot \beta) \cdot \sum_{j = 1}^s \len{I_j^+} + \sum_{j = 1}^{s - 1}\dist_M(f(v^j_r), f(v^{j + 1}_\ell)) \qquad \text{(by Eq.~\ref{eq:dvlvr2})} \\
    &\leq O(\zeta^4 \cdot \tau^4 \cdot \beta) \cdot \dist_G(t, t') + \sum_{j = 1}^{s - 1}\dist_M(f(v^j_r), f(v^{j + 1}_\ell)).
    \qquad \text{(by Eq.~\ref{eq:basicsum})}
    \\
    &= O(\zeta^4 \cdot \tau^4 \cdot \beta) \cdot \dist_G(t, t') + \sum_{j = 1}^{s - 1}\dist_G(f(v^j_r), f(v^{j + 1}_\ell)).
\end{aligned}
\end{equation}
Using the triangle inequality again, we have: 
\begin{equation}
\begin{aligned}
\label{eq:dist-inter-interval}
    \sum_{j = 1}^{s - 1}\dist_G(f(v^j_r), f(v^{j + 1}_\ell)) &\leq \sum_{j = 1}^{s - 1}
    \Paren{\big. 
    \dist_G(f(v^j_r), v^j_r) + \dist_G(v^j_r, v^{j + 1}_\ell) + \dist_G(v^{j + 1}_\ell, f(v^{j + 1}_\ell)) } \\
    &\leq \sum_{j = 1}^{s - 1}\dist_G(v^j_r, v^{j + 1}_\ell) + \sum_{j = 1}^{s} \Paren{\big.
    \dist_G(v^j_\ell,f(v^j_\ell)) + \dist_G(v^j_r,f(v^j_r)) } \\
    &\leq \len{P} + 3\tau \zeta^2 \cdot \sum_{j = 1}^{s}
    \Paren{\big.
    \dist_G(v^j_\ell, K) + \dist_G(v^j_r, K) } \qquad \text{(by Corollary~\ref{clm:real-dist})} \\
    &\leq \dist_G(t, t') + 3\tau \zeta^2 \cdot 4 \zeta^2 (\tau+3)\beta \cdot \sum_{j = 1}^{s}(\len{I_j^+} + \len{I_j^+})  \qquad \text{(by Eq.~\ref{eq:dv-k})} \\ 
    &\leq O(\zeta^4 \cdot \tau^2 \cdot \beta) \cdot \dist_G(t, t') \qquad \text{(by Eq.~\ref{eq:basicsum})}. 
\end{aligned}
\end{equation}

\noindent Plugging Eq.~\ref{eq:dist-inter-interval} into Eq.~\ref{eq:dtt'}, we obtain $\dist_M(t, t') = O(\zeta^4 \cdot \tau^4 \cdot \beta) \cdot \dist_G(t, t')$.
Since $\zeta = O(\beta \cdot \tau)$, we conclude that 
$\dist_M(t, t') = O(\tau^8 \cdot \beta^5)$, as required.
\end{proof}

%%%%%%%%%%%%%%%%%%%%%%%%%%%%%%%%%%%%%%%%%%%

\section{Buffered cop decompositions for minor-free graphs}
\label{SS:buffered-cop}

\begin{table}[h!]\small
\centering
\smallskip
\def\arraystretch{1.3}
\begin{tabular}{c|l}
%\hline 
\emph{notation} & \emph{meaning} \\ \hline 
$\cT$ & \emph{partition tree}: nodes of $\cT$ are supernodes
\\ \hdashline
$\mathcal{\hat{T}}$ & \emph{expansion of $\cT$} 
\\ \hdashline
$\eta$ & \makecell[l]{%
\emph{supernode}: \emph{induced subgraph} on its vertices (one may identify $\eta$ with these vertices); \\
$\eta$ contains $T_\eta$ initially and may only grow}
\\ \hdashline
$\dom(\eta)$ & \emph{domain of $\eta$}: subgraph induced by the union of all supernodes in the subtree of $\cT$ rooted at $\eta$
\\ \hdashline
$T_\eta$ & \emph{tree skeleton}: SSSP tree in $\dom(\eta)$ (remain fixed) 
\\ \hdashline
$\cS$ & \emph{set of supernodes}: $\cS$ starts empty and grows, 
and each supernode may grow
\\ \hdashline
witness $v_S$ & vertex adjacent to some vertex in supernode $S$
\\ \hdashline
$H$ sees $S$ & subgraph $H$ has a witness vertex $v_S$ to $S$
\\ \hdashline
$\seen{H}$ & set of supernodes in $\cS$ subgraph $H$ can see
\\ \hdashline
$\dom_{\cS}(\eta)$ & \emph{domain of $\eta$ with respect to $\cS$}: may shrink; the final $\dom_{\cS}(\eta)$ is $\dom(\eta)$
\\ \hdashline
$\bdry H'_{\downarrow X}$ & \makecell[l]{%
\emph{boundary vertices}: vertices in $G \setminus H'$ that are (1) adjacent to $H'$, and (2) in  $\dom_{\cS}(X)$}
\\ \hdashline
$\buff H'_X$ & \makecell[l]{%
\emph{buffer vertices}: unassigned vertices in $H'$ within distance (in $\dom_{\cS}(X)$) $\Delta/r$ of $\bdry H'_{\downarrow X}$}
\\ \hdashline
$\eta_{\cS}$ & vertices assigned to $\eta$ by the current $\cS$
 %\\\hline
\end{tabular}
\caption{Glossary for the construction of buffered cop decompositions.}
\label{T:glossary}
\end{table}
\paragraph{Buffered cop decomposition.}

Let $G$ be a graph.
A \EMPH{supernode $\eta$} with \EMPH{skeleton $T_\eta$} and \EMPH{radius~$\Delta$} is an induced subgraph $\eta$ of $G$ containing a tree $T_\eta$ where every vertex in $\eta$ is within distance $\Delta$ of $T_\eta$ for some real number $\Delta$, where distance is measured with respect to $\eta$.
We occasionally abuse notation and use $\eta$ to refer to the set of vertices in $\eta$, rather than the subgraph.
A \EMPH{buffered cop decomposition} for $G$ is a partition of $G$ into vertex-disjoint supernodes, together with a tree~\EMPH{$\cT$} called the \EMPH{partition tree}, whose nodes are the supernodes of $G$. 
For any supernode~$\eta$, the \EMPH{domain $\dom(\eta)$} denotes the subgraph induced by the union of all vertices in supernodes in the subtree of $\cT$ rooted at $\eta$.

\begin{definition}\label{def:buffer-cop}
    A \EMPH{$(\Delta, \gamma, w)$-buffered cop decomposition}
    for $G$ is a buffered cop decomposition $\cT$ that satisfies the following properties:
    \begin{itemize}
    
    \item \textnormal{[Supernode radius.]} 
    Every supernode $\eta$ has radius at most $\Delta$.
    
    \item \textnormal{[Shortest-path skeleton.]} 
    For every supernode $\eta$, the skeleton $T_\eta$ is an SSSP tree in $\dom(\eta)$, with at most $w$ leaves.
    
    \item \textnormal{[Supernode buffer.]} 
    Let $\eta$ be a supernode, and let $X$ be another supernode that is an ancestor of $\eta$ in the partition tree $\cT$.
    Then either $\eta$ and $X$ are adjacent in $G$, or
    for every vertex $v$ in $\dom(\eta)$, we have $\dist_{\dom(X)}(v, X) \ge \gamma$.
    
    \end{itemize}
\end{definition}

Our definition of buffered cop decomposition is different from the cop decomposition in the prior work~\cite{andreae1986pursuit,agg14}. 
Recall that the cop decomposition in prior work is a tree decomposition, where each bag of the tree decomposition can be partitioned into $r-1$ supernodes. 
Here each node of the partition tree $\mathcal{T}$ in our definition is exactly one supernode. 
We find that this alternative definition helps to simplify the presentation of our buffer-creating algorithm significantly.

Given a partition tree $\mathcal{T}$, we construct another tree \EMPH{$\mathcal{\hat{T}}$} from (and isomorphic as a graph to) $\mathcal{T}$ as follows: for each supernode $\eta\in \mathcal{T}$, we create a corresponding node \EMPH{$B_{\eta}$}, called the \EMPH{bag of $\eta$}, containing  $\eta$ and all the ancestor supernodes adjacent to $\eta$ in $G$.  Intuitively $B_{\eta}$ corresponds to the existing supernodes that $\eta$ can ``see''. Notice that $B_\eta$ is the highest bag that contains $\eta$.
By identifying each supernode with its vertex set, each bag in $\hat{\cT}$ naturally corresponds to a set of vertices in $G$.
We call $\mathcal{\hat{T}}$ the \EMPH{expansion} of $\mathcal{T}$. 
The expansion $\mathcal{\hat{T}}$ of $\mathcal{T}$ is the cop decomposition in the sense of prior work~\cite{andreae1986pursuit,agg14} discussed above, and that's why we call nodes of $\mathcal{\hat{T}}$ \emph{bags}.
While it is not immediately clear that $\mathcal{\hat{T}}$ is a tree decomposition based on its definition,
our construction guarantees that $\mathcal{\hat{T}}$ indeed satisfies all the properties of a tree decomposition. 

\begin{itemize}
    \item \textnormal{[Tree decomposition.]} There exists an expansion $\mathcal{\hat{T}}$ of $\mathcal{T}$ such that (1) $\mathcal{\hat{T}}$ is a tree decomposition of $G$, and 
    (2) every bag of $\mathcal{\hat{T}}$ contains at most $w$ supernodes. 
\end{itemize}

\noindent
We say that such a buffered cop decomposition $\cT$ has \EMPH{radius $\Delta$}, \EMPH{buffer $\gamma$}, and \EMPH{width $w$}. See \Cref{fig:buffered-cop} for an illustration, and \Cref{T:glossary} for a glossary of terminologies for the buffered cop decompositions.

\begin{figure}[h!]
    \centering
    \includegraphics[width=0.9\textwidth]{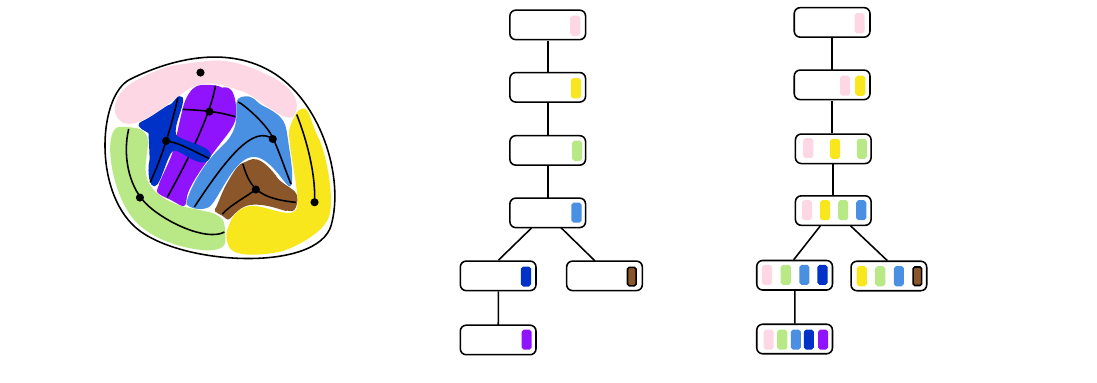}
    \caption{Left: A \emph{non-planar} graph $G$ with a partition into supernodes.  Notice that the purple cluster is connected and goes behind the dark blue supernode.
    Middle: The partition tree $\cT$ of a buffered cop decomposition for $G$. 
    The supernode buffer property guarantees that any path between the brown and pink supernodes is of length at least $\gamma$.
    Right: The expansion of $\cT$, where each bag contains at most $5$ supernodes. }
    \label{fig:buffered-cop}
\end{figure}

Given a $K_r$-minor-free graph and a parameter $\Delta$, we will construct a buffered cop decomposition with radius $\Delta$, buffer $\Delta/r$, and width $r - 1$. 
We emphasize that the most interesting property is the supernode buffer property, which says that if a supernode $X$ gets ``cut off'' from part of the graph, there is a ``buffer region'' of at least $\gamma$ between $X$ and that part of the graph. 
More precisely, let $G'$ be the subgraph of $G$ induced by vertices in descendant supernodes of $X$ that are not adjacent to $X$. 
(That is, $X$ is  ``cut off'' from $G'$ by the descendant supernodes that are adjacent to $X$.) 
The supernode buffer property in \Cref{def:buffer-cop} implies that $\dist_{\dom(X)}(v,X)\geq \gamma$ for every $v\in V(G')$. 
The construction of \cite{agg14} produces a cop decomposition with the other three properties; that is, a buffered cop decomposition with radius $\Delta$ and width $r - 1$. 
A delicate argument shows that their construction achieves something similar to a supernode buffer of $\Delta/r$ \emph{in expectation}.

\paragraph{Review of the construction of \cite{agg14}.} 
The construction of \cite{agg14} iteratively builds a collection $\cS$ of supernodes of a graph $G$. 
At each point in the algorithm, they process a subgraph $H$ of $G$ by creating a new supernode $\eta$ in $H$, and then recursing on the connected components of $H \setminus \eta$.

To describe how to create each new supernode $\eta$, we introduce some terminology. 
A subgraph $H$ \EMPH{sees} a supernode $S$ if (1) $S$ is disjoint from $H$, and (2) there exists some \EMPH{witness vertex $v_S$} in $H$ that is adjacent to a vertex in $S$. 
For any subgraph $H$, let \EMPH{$\seen{H}$} be the set of supernodes that $H$ sees.
The algorithm of \cite{agg14} guarantees that, if $G$ excludes a $K_r$-minor, the subgraph $H$ (at any point in the algorithm) sees at most $r - 2$ previously-created supernodes. Their algorithm has the following steps:

\begin{enumerate}
    \item \emph{Initialize a new supernode $\eta$.}
    
    Choose an arbitrary vertex $v$ in $H$. 
    Build a shortest-path tree $T_\eta$ in $H$ that connects $v$ to an arbitrary witness vertex for every supernode seen by $H$. 
    Initialize supernode $\eta \gets T_\eta$ with skeleton $T_\eta$.
        
    \item \emph{Expand $\eta$, to guarantee supernode buffer property in expectation.}

    Let $\gamma$ be a random number between $0$ and $C\cdot\Delta$ (for some constant $C$) drawn from a truncated exponential distribution with rate $O(r/\Delta)$, meaning that $\mathbb{E}[\gamma] = O(\Delta/r)$. 
    Assign every vertex within distance $\gamma$ of $T_\eta$ to be a part of supernode $\eta$ (where distances are with respect to $H$).

    \item \emph{Recurse.}
    
    Recurse on each connected component in the graph in $H \setminus \eta$.

\end{enumerate}

The subgraph $H$ is initially selected to be $G$. The buffered cop decomposition is implicit from the recursion tree.  
They show that at any point in the algorithm, the set of supernodes seen by $H$ forms a model of a complete graph (see \Cref{lem:seen-width}); this proves the bag width property. The tree decomposition, radius, and shortest-path skeleton properties are all straightforward to verify. 
The proof of the ``expected'' supernode buffer property is quite complicated, and requires dealing with the fact that $\gamma$ is drawn from a truncated exponential distribution rather than a normal exponential distribution. 

Here we remark that, while their buffer guarantee is in expectation, the nature of their buffer property is somewhat stronger than ours: whenever a new skeleton $T_{\eta}$ is added and cuts off the shortest path from a vertex $v$ to another skeleton $X$ (which could still be adjacent to $T_{\eta}$), the distance from $v$ to $T_{\eta}$ is smaller than the distance from $v$ to $X$ by $O(\Delta/r)$ in expectation. Here we only guarantee the distance reduction from $v$ to $X$ when $X$ and $T_{\eta}$ are not adjacent.

\subsection{Construction}

We modify the algorithm of Abraham \etal~\cite{agg14} to obtain the (deterministic) supernode buffer property. 
Throughout our algorithm, we maintain the global variables \EMPH{$\cS$}, indicating the set of supernodes, and \EMPH{$\cT$}, indicating the partition tree. At any moment
during the execution of our algorithm, some vertices of graph $G$ will already be assigned to supernodes, and some vertices will be unassigned. At the end of the execution, all vertices will be assigned by $\cS$.
At each stage of the algorithm, we (1) select some unassigned vertices to become a new supernode $\eta$, (2) assign some unassigned vertices to existing supernodes (\emph{not necessarily $\eta$}) to guarantee the supernode buffer property, and (3) recurse on connected components induced by the remaining unassigned vertices.

Our main procedure is \EMPH{$\textsc{BuildTree}(\cS, H)$}, which takes as input a connected subgraph $H$ induced by unassigned vertices in $G$. 
It assigns vertices in $H$ to supernodes in $\cS$, and returns a buffered cop decomposition. 
Figure~\ref{fig:example-algo} gives an example;
Figure~\ref{alg:buildtree} gives the complete pseudocode.
The algorithm consists of the following steps:
\begin{enumerate}

    \item \emph{Initialize a new supernode.}
    
    Choose an arbitrary vertex $v$ in $H$. Build a shortest path tree $T_\eta$ in $H$ that connects $v$ to an arbitrary witness vertex for every supernode seen by $H$. 
    Initialize supernode $\eta$ to be the subgraph of $G$ induced by all vertices of $T_\eta$; set $T_\eta$ to be the skeleton of $\eta$; and add $\eta$ to $\cS$.
    Define the domain of $\eta$ with respect to $\cS$, \EMPH{$\dom_\mathcal{S}(\eta)$}, to be the set of all vertices in $H$ that are not assigned (by $\cS$) to any supernode above $\eta$ in the partition tree $\cT$;
    initially $\dom_{\cS}(\eta) = H$, and at the end of the algorithm it will hold that $\dom_{\cS}(\eta) = \dom(\eta)$.
    (Notice that $\eta$ will grow and $\dom_{\cS}(\eta)$ will shrink over the course of the algorithm as $\cS$ changes, though $T_\eta$ will remain unchanged.  See \Cref{C:basic}(\ref{clm:grow-n-shrink}).) 

    \item \emph{Assign vertices to existing supernodes, to guarantee the supernode buffer property.}
    
    For each connected component $H'$ of $H \setminus \eta$, consider the set of supernodes
    \EMPH{$\cX$} that \emph{can} be seen by $H$ but \emph{cannot} be seen by $H'$. These supernodes are ``cut off'' from $H'$.
    In this step, we identify every \emph{currently unassigned} vertex that could be close to a cut-off supernode,
    and assign those vertices to some existing supernode (possibly to the newly-created $\eta$).

    In more detail: For each $X$ in $\cX$, define the \EMPH{boundary vertices} \EMPH{$\bdry H'_{\downarrow X}$} to be the set of vertices in $G \setminus H'$ that are (1) adjacent to $H'$, and (2) in $\dom_{\cS}(X)$. Our algorithm will maintain the invariant that all vertices adjacent to $H'$ (in particular, all vertices in $\bdry H'_{\downarrow X}$) have already been assigned to a supernode by $\cS$; see \Cref{I:adj-assignment-weak} for the formal statement.
    Define the set of \EMPH{buffer vertices $\buff H'_X$} to be the set of unassigned vertices in $H'$ within distance $\Delta/r$ of $\bdry H'_{\downarrow X}$, where distance is measured with respect to $\dom_{\cS}(X)$.
    Assign each vertex in $\buff H'_X$ to the same supernode as a closest vertex
    in $\bdry H'_{\downarrow X}$, breaking ties consistently and measuring distance with respect to $\dom_{\cS}(X)$; notice that ``the supernode of a vertex in $\bdry H'_{\downarrow X}$'' is well-defined because of \Cref{I:adj-assignment-weak}.
    
    This procedure may cut off $H'$ from another supernode, even if $H'$ may originally have been able to see that supernode (even $\eta$ itself could become cut off at this point); and it may break $H'$ into multiple connected components. Repeat this assignment process on each connected component until we have dealt with all supernodes that have been cut off.

    In \Cref{lem:buffer}, we show that this procedure guarantees that the supernode buffer property holds.
    It will suffice to show that, in this step, we assign every vertex in $H'$ that could become close to some cut-off supernode $X$, \emph{even if $X$ grows in the future}. Crucially, in this step we assign every vertex in $H'$ that is within $\Delta/r$ distance of the boundary $\bdry H'_{\downarrow X}$. It would \emph{not} suffice to just assign vertices within $\Delta/r$ distance of $X$ in the current step, because $X$ could potentially grow in the future, and the distance from a vertex in $H'$ to $X$ could shrink. We show that even if $X$ expands in the future, it remains disjoint from $X'$; further, we show that every path in $\dom(X)$ from a vertex in $H'$ to a vertex outside of $H$ passes through some boundary vertex in $\bdry H'_{\downarrow X}$.
    Thus, every vertex in $H'$ is closer\footnote{We assume the weight of every edge in $G$ is nonzero.}
    to $\bdry H'_{\downarrow X}$ than to $X$ (which are always outside $H'$), even if $X$ expands in the future. 
    This means that the vertices of $\buff H'_X$ form a buffer of $\Delta/r$ between $X$ and the unassigned vertices of $H'$. Note that we assign each vertex in $\buff H'_X$ to some supernode that is not $X$ (as $H'$ does not see $X$, no vertex in $\buff H'_{\downarrow X}$ is in $X$).
    
    This procedure is called \EMPH{$\textsc{GrowBuffer}(\cS, \cX, H')$}; see pseudocode in Figure~\ref{alg:growbuffer}.
    It takes as input a subgraph $H'$ and a list $\cX$ of supernodes that have been cut off from $H'$. 
    It assigns some vertices in $H'$ to existing supernodes in $\cS$.

    \item \emph{Recurse.}
    
    For each connected component $H'$ in the graph induced by unassigned vertices, recursively call $\textsc{BuildTree}(\cS, H')$.
    
\end{enumerate}

\noindent To initialize, let $\cS \gets \varnothing$, and call $\textsc{BuildTree}(\cS, G)$ to produce a buffered cop decomposition $\cT$ for~$G$. Throughout the algorithm, we maintain the following invariant:

\begin{invariant}
\label{I:adj-assignment-weak}
    Suppose that call $C$, whether it is $\textsc{GrowBuffer}(\cS, \cX, H)$ or $\textsc{BuildTree}(\cS, H)$, is made at some point in the algorithm. At the time call $C$ is made, every vertex in $H$ is unassigned, and every vertex in $G \setminus H$ that is adjacent to $H$ is already assigned to some supernode.
\end{invariant}
(We say that a vertex $v$ of graph $G$ is \EMPH{adjacent} to a subgraph $H$ of $G$ if (1) $v$ is in $G \setminus H$, and (2) there is an edge in $G$ between $v$ and some vertex in $H$.)
The invariant is clearly true when we make the initial call $\textsc{BuildTree}(\varnothing, G)$, and it is preserved throughout the algorithm: a recursive call to $\textsc{GrowBuffer}(\cS, \cX, H)$ or $\textsc{BuildTree}(\cS, H)$ is only made if $H$ is a maximal connected component induced by unassigned vertices.
Because we maintain this invariant, the procedure $\textsc{GrowBuffer}$ is well-defined.

\begin{figure}[h!]
    \centering
    \includegraphics[width=0.95\textwidth]{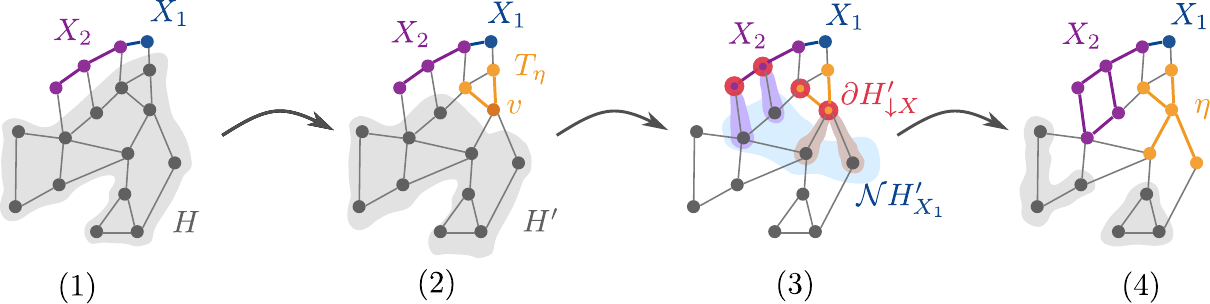}
    \caption{Example of one iteration of $\textsc{BuildTree}(H)$. From left to right: (1) The graph $G$ before an iteration, with $H$ being a connected component of unassigned vertices. (2) Pick an arbitrary vertex $v$ in $H$ and compute $T_\eta$ by taking shortest paths from $v$ to $X_1$ and $X_2$. (3) Supernode $X_1$ is cut off from $H'$, so find the set $\buff H'_{X_1}$ (vertices in $H'$ within distance $\Delta/r$ of the boundary of $\bdry H'_{\downarrow X_1}$) and assign each vertex of $\buff H'_{X_1}$ to the supernode containing the closest boundary vertex, to ensure the buffer property. (4) There are now two connected components, one of which is cut off from $X_2$, so we must grow a buffer for $X_2$ in that component.}
    \label{fig:example-algo}
\end{figure}

\paragraph{A remark on the global variable.} 
The procedure $\textsc{BuildTree}(\cS, H)$ is recursive: it initializes a super\-node, calls $\textsc{GrowBuffer}$, and then recursively calls $\textsc{BuildTree}(\cS, H'_i)$ on disjoint subgraphs $H'_i$.
Each of these recursive calls modifies the same global variable~$\cS$.
However, the modifications to $\cS$ that are made by each call $C_i \coloneqq \textsc{BuildTree}(\cS, H'_i)$ \emph{do not} affect the 
execution of any \emph{sibling} calls $C_j \coloneqq \textsc{BuildTree}(\cS, H'_j)$. Only the ancestors of $C_i$ in the recursion tree affect the execution of $C_i$.

Before proving this observation, we introduce the following important terminology. We say that a call $C \coloneqq \textsc{BuildTree}(\cS, H)$ occurs \EMPH{above} (resp.\ \EMPH{below}) 
a call $C' \coloneqq \textsc{BuildTree}(\cS, H')$
if $C$ is an ancestor (resp.\ descendent) of $C'$ in the recursion tree. If two calls to $\textsc{BuildTree}$ are in different branches of the recursion tree, then they are neither above nor below each other. Intuitively, ``$C$ is above $C'$'' means that $C$ was a \emph{relevant} call that happened \emph{before} $C'$.
Similarly, we say ``supernode $\eta_1$ is initialized \EMPH{above} supernode $\eta_2$'' if the instance of $\textsc{BuildTree}$ that initialized $\eta_1$ occurred above the instance of $\textsc{BuildTree}$ that initialized $\eta_2$.
Finally, we say that ``supernode $\eta$ is initialized \EMPH{above} a call $C \coloneqq \textsc{GrowBuffer}(\cS, \cX, H)$'' if the call to $\textsc{BuildTree}$ that initialized $\eta$ is above \emph{or is the same as}
the call to $\textsc{BuildTree}$ that caused $C$ to be called. Note that the algorithm never initializes a new supernode during a $\textsc{GrowBuffer}$ call.

\medskip
    With this terminology, we can state a stronger version of~\Cref{I:adj-assignment-weak}.
    \begin{invariant}
\label{I:adj-assignment-strong}
    Suppose that call $C$, whether it is $\textsc{GrowBuffer}(\cS, \cX, H)$ or $\textsc{BuildTree}(\cS, H)$, is made at some point in the algorithm. 
    At the time call $C$ is made, every vertex in $H$ is unassigned, and every vertex in $G \setminus H$ that is adjacent to $H$ was already assigned \ul{during a call above $C$} to some supernode \ul{initialized above $C$}.
\end{invariant}
    \noindent
    Indeed, when some call $\tilde{C}$ (whether it is \textsc{GrowBuffer} or \textsc{BuildTree}) makes call $C$ on some subgraph $H$, the graph $H$ is a maximal connected component of unassigned vertices --- and crucially, this connected component is with respect to the assignments $\cS$ \emph{before any calls to \textsc{GrowBuffer} or \textsc{BuildTree} are made by $\tilde{C}$.} Thus, every vertex adjacent to $H$ has been assigned to some existing supernode (before any sibling calls of $C$ are made), meaning it was initialized above $C$.

    \medskip
    We now prove that the execution of a call $C$, whether it is $\textsc{GrowBuffer}(\cS, \cX, H)$ or $\textsc{BuildTree}(\cS, H)$, depends only on $H$ and the vertices assigned to supernodes above $C$.
    Indeed, a call to $\textsc{BuildTree}(\cS, H)$ uses $\cS$ to determine the supernodes seen by $H$, which is determined by the vertices adjacent to $H$ (and thus, by~\Cref{I:adj-assignment-strong}, by calls above $C$ and not by sibling calls). A call to $\textsc{GrowBuffer}(\cS, \cX, H)$ uses $\cS$ in two places. First, it uses $\cS$ to determine $\bdry H_{\downarrow X}$ (where $X$ denotes the supernode selected from $\cX$ to be processed during the execution of $C$), which is a subset of the vertices adjacent to $H$ (and thus determined by calls above $C$).
    Second, it uses $\cS$ to determine, for every vertex $v$ in $H$, a closest vertex $\bdry H_{\downarrow X}$ with respect to $\dom_{\cS}(X)$.
    Notice that a shortest path $P$ in $\dom_{\cS}(X)$ from $v$ to $\bdry H_{\downarrow X}$ is contained in $H \cup \bdry H_{\downarrow X}$: the first vertex along $P$ that leaves $H$ is in $\bdry H_{\downarrow X}$, and thus is an endpoint of $P$. This means that $P$ is determined by the graph $H \cup \bdry H_{\downarrow X}$ (and as argued earlier, \Cref{I:adj-assignment-strong} implies that $\bdry H_{\downarrow X}$ is determined by calls above $C$).

This shows that
calls only affect each other if one is above the other; sibling calls do not affect each other. We will not explicitly use this fact in our proofs, instead depending solely on \Cref{I:adj-assignment-strong} --- but it is nevertheless important intuition.

\begin{figure}[htbp]
\centering
\small
\begin{algorithm}
\textul{$\textsc{BuildTree}(\cS, H)$:}\+
\\  \Comment{Initialize supernode $\eta$}
\\  $\seen{H} \gets$ supernodes in $\cS$ seen by $H$
\\  $v \gets$ arbitrary vertex in $H$
\\  $T_\eta \gets$ SSSP tree in $H$, connecting $v$ to a witness vertex for every supernode in $\seen{H}$
\\  initialize supernode $\eta \gets$ subgraph induced by vertices of $T_\eta$
\\ set $T_\eta$ to be the skeleton of $\eta$, and add $\eta$ to $\cS$
\Comment{Currently, $\dom_\cS(\eta) = H$}
\\  initialize tree $\cT$ with root $\eta$
\\
\\  \Comment{Grow buffer and recurse}
\\  for each connected component $H'$ of $H \setminus \eta$:\+
\\      $\cX \gets$ list containing supernodes seen by $H$ but not by $H'$
\\      $\textsc{GrowBuffer}(\cS, \cX, H')$%, and update $\cS$ accordingly 
\-
\\  for each connected component $H'$ of $H \setminus \bigcup \cS$:\+
\\      $\cT' \gets \textsc{BuildTree}(\cS, H')$
\\      attach tree $\cT'$ as a child to the root of $\cT$\-
\\  return tree $\cT$
\end{algorithm}
\caption{Pseudocode for procedure $\textsc{BuildTree}(\cS, H)$}
\label{alg:buildtree}
\end{figure}

\begin{figure}[htbp]
\centering
\small
\begin{algorithm}
\textul{$\textsc{GrowBuffer}(\cS, \cX, H)$:}\+ 
\\ \Comment{Note: All supernodes in $\cX$ are not seen by $H$}
\\  If $\cX$ is the empty list, do nothing and return.
\\
\\  \Comment{Grow buffer around a supernode in $\cX$}
\\  $X \gets$ arbitrary supernode in $\cX$ \Comment{$X$ is \EMPH{processed} during this call} 
\\  $\bdry H_{\downarrow X} \gets$ set of vertices in $G \setminus H$ that are (1) adjacent to $H$, and (2) in $\dom_{\cS}(X)$
\\  $\buff H_X \gets$ vertices $v$ in $H$ such that $\dist_{\dom_\cS(X)}(v, \bdry H_{\downarrow X}) \le \Delta/r$ 
\\  \Comment{We show that $\dist_{\dom_\cS(X)}(v, \bdry H_{\downarrow X}) < \dist_{\dom_\cS(X)}(v, X)$
}
\\  for each $v$ in $\buff H_X$:\+
\\      $\eta_v \gets$ \makecell[lt]{%
supernode containing the vertex $x$ in $\bdry H_{\downarrow X}$ that minimizes $\dist_{\dom(X)}(v, x)$, \\ breaking ties consistently}
\\      assign $v$ to supernode $\eta_v$, and update $\cS$
\\      \Comment{This changes $\eta_v$, and the domains of supernodes initialized below $\eta_v$}
\\      \Comment{Since $H$ does not change, $\bdry H_{\downarrow X}$ and $\buff H_X$ remain fixed}
\-
\\
\\  \Comment{Growing a buffer may cut off more supernodes, so update $\cX$}
\\  for each connected component $H'$ of $H \setminus \bigcup \cS$:\+
\\      $\cX' \gets \cX \setminus \set{X}$ 
\\      add to $\cX'$ all supernodes in $\cS$ that are seen by $H$ but not by $H'$
\\      $\textsc{GrowBuffer}(\cS, \cX', H')$ \-
\end{algorithm}
\caption{Pseudocode for procedure $\textsc{GrowBuffer}(\cS, \cX, H)$}
\label{alg:growbuffer}
\end{figure}

\subsection{Analysis: Basic properties}

Let $\cT$ be the tree produced by $\textsc{BuildTree}(\varnothing, G)$. 
We will show that if $G$ excludes a $K_r$-minor, then $\cT$ is a $(\Delta, \Delta/r, r-1)$-buffered cop decomposition for $G$. In this section, we prove a collection of basic properties about $\cT$, including the shortest-path skeleton and tree decomposition properties. The proofs of the supernode buffer and supernode radius properties are deferred to the next two sections.

\paragraph{Notation for supernodes changing over time.} 
When we write ``supernode $\eta$'' without any subscript or description, we refer to the supernode in $\cT$, at the end of the execution of the entire algorithm. In some proofs, we will need to refer to the global variable $\cS$ at a specific point in the algorithm's execution. We adopt the following convention: If we say a call $C' \coloneqq \textsc{BuildTree}(\cS', H')$ is made during the algorithm, we use the variable $\cS'$ to denote the global variable at the \emph{start} of call $C'$.
We use the notation ``\EMPH{supernode $\eta_{\cS'}$}'' to refer to the vertices of $G$ that have already been assigned to $\eta$ by $\cS'$. 
It does \emph{not} refer to those vertices that will be assigned to $\eta$ in the future. 
The phrase ``the set of supernodes \EMPH{in $\cS'$}'' refers to the set of supernodes $\eta_{\cS'}$ that are assigned by $\cS'$. 

\paragraph{Terminology for $\textsc{GrowBuffer}$.} Suppose that some call $C \coloneqq \textsc{GrowBuffer}(\cS, \cX, H)$ occurs during the algorithm. This call begins by selecting an arbitrary supernode $X$ from $\cX$; we say that $X$ is the supernode \EMPH{processed during $C$}. 
The call $C$ then defines the set $\buff H_X$; we say that every point in $\buff H_X$ is \EMPH{assigned during $C$}.

\begin{claim}
\label{C:basic}
The following properties hold.
\begin{enumerate}
    \item \label{C:connected} For every $\cS$ that appears in the algorithm, every supernode $\eta_{\cS}$ induces a connected subgraph of $G$.
    \item \label{clm:future-assignment}
    Suppose that call $C$, whether it is $\textsc{GrowBuffer}(\cS, \cX, H)$ or $\textsc{BuildTree}(\cS, H)$, is made at some point in the algorithm.
    Over the course of the algorithm, every vertex in $H$ is assigned either to a supernode initialized by $C$, or to a supernode initialized below $C$, or to a supernode in $\cS$ that $H$ sees (at the time $C$ is called).

    \item \label{clm:grow-n-shrink}
    Supernode $\eta_{\cS}$ will grow and $\dom_{\cS}(\eta)$ will shrink over the course of the algorithm as $\cS$ changes. Further, after the algorithm terminates, we have $\dom_{\cS}(\eta) = \dom(\eta)$.
\end{enumerate}
\end{claim}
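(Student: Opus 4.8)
The plan is to prove all three items by induction on the sequence of \emph{state-changing events} in the execution of $\textsc{BuildTree}(\varnothing, G)$. There are only two kinds: (i) a \textsc{BuildTree} call initializes a fresh supernode $\eta = T_\eta$; and (ii) inside the for-loop of a call $C = \textsc{GrowBuffer}(\cS,\cX,H)$ that processes a supernode $X$, a previously-unassigned vertex $v \in \buff H_X$ is assigned to $\eta_v$. Two running facts do most of the bookkeeping: vertices are never reassigned (so every $\eta_\cS$ only gains vertices); and throughout a single call $C$ the subgraph $H$, and with it $\bdry H_{\downarrow X}$, $\buff H_X$, and $\dom_\cS(X)$, stays fixed --- because each $\eta_v$ contains a vertex of $\dom_\cS(X)$, hence is $X$ or a descendant of $X$ in $\cT$, so none of the loop's assignments is ``above $X$''. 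On top of this I would use \Cref{I:adj-assignment-strong} and the already-established fact that a call influences another only along an ancestor--descendant chain of the recursion tree.

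For item~(1), before the loop of $C$ I would fix a shortest-path forest in $\dom_\cS(X)$ rooted at the whole set $\bdry H_{\downarrow X}$, with a fixed tie-breaking rule during relaxations; this is the ``consistent'' choice, and it gives every $v \in \buff H_X$ a canonical closest boundary vertex $x$ and a canonical shortest path $P$ from $v$ to $x$. Event (i) is trivial since $T_\eta$ is a tree. For event (ii): each internal vertex $u$ of $P$ has $\dist_{\dom_\cS(X)}(u,\bdry H_{\downarrow X}) = \dist_{\dom_\cS(X)}(v,x) - \dist_{\dom_\cS(X)}(v,u) \le \Delta/r$, so $u \in \buff H_X$, and by subpath-optimality of the forest $u$'s canonical closest boundary vertex is again $x$, so $u$ is assigned to $\eta_v$ during $C$ too (the loop order is irrelevant, as $\buff H_X$ and the forest are computed once). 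Since $P$ stays inside $H \cup \bdry H_{\downarrow X}$ --- its first vertex leaving $H$ is its endpoint $x$ --- the set $V(P) \setminus \{x\}$ lies entirely in the updated $\eta_v$ and is joined through the edge into $x \in \eta_v$ to the part of $\eta_v$ present before $C$, which was connected by the induction hypothesis; hence $\eta_v$ stays connected.

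For item~(2), it suffices to track the vertices of $H$ only through $C$ and its descendant calls, since when $C$ is invoked $H$ is a maximal connected component of unassigned vertices and the recursion inside $C$ partitions the still-unassigned part of $H$. Proceed by induction on the recursion tree. A direct assignment $v \mapsto \eta_v$ in a \textsc{GrowBuffer} step sends $v$ to the supernode of a boundary vertex $x \in \bdry H_{\downarrow X}$; at the start of $C$ the vertices of $H$ are unassigned (\Cref{I:adj-assignment-weak}) while $x$ is assigned and adjacent to $H$, so $\eta_v$ is disjoint from $H$, seen by $H$, and already present at the start of $C$ (possibly it is the $\eta$ created by the \textsc{BuildTree} that triggered $C$, which counts as initialized by/above $C$). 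For a recursive call $C'$ on a component $H' \subseteq H$ of unassigned vertices, the induction hypothesis places every vertex of $H'$ in a supernode initialized by/below $C'$ or in a supernode of $\cS'$ seen by $H'$; the first two are clearly initialized below $C$ (or equal $\eta$), and the third I would convert using \Cref{I:adj-assignment-strong}: a supernode seen by $H'$ is initialized above $C'$, and since no supernode is initialized strictly between $C$ and $C'$ (only \textsc{GrowBuffer} calls lie there, initializing nothing), it is initialized at or above $C$, i.e.\ it already lies in $\cS$ when $C$ is invoked, and being made of vertices assigned before $H'$ was carved out it is disjoint from $H$ and seen by $H$ at that time. I expect \emph{this conversion to be the main obstacle}: one must argue, via \Cref{I:adj-assignment-strong} and the ancestor-only-influence property, that ``seen by the smaller, later region $H'$'' never produces a destination outside $\{\text{descendant supernodes}\} \cup \{\text{supernodes already seen by } H\}$.

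For item~(3), monotone growth of $\eta_\cS$ is immediate from ``no reassignment''. Let $C_\eta$ be the \textsc{BuildTree} call that created $\eta$, on subgraph $H$; by definition $\dom_\cS(\eta)$ is the set of vertices of $H$ not yet assigned to a supernode strictly above $\eta$ in the current partition tree, and since the tree only gains edges (children are attached, never detached) and assignments are permanent, ``assigned above $\eta$'' is a monotone property, so $\dom_\cS(\eta)$ only shrinks. At termination every vertex is assigned; by item~(2) applied to $C_\eta$, every vertex of $H$ ends up in $\eta$, in a descendant of $\eta$, or in a supernode seen by $H$, and by \Cref{I:adj-assignment-strong} the latter is initialized above $C_\eta$, hence is an ancestor of $\eta$ in $\cT$ (the partition tree mirrors the \textsc{BuildTree} recursion tree, as each recursive tree is attached under the supernode of its caller). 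Thus the final $\dom_\cS(\eta)$ is exactly the set of vertices of $H$ assigned to $\eta$ or to a descendant of $\eta$. Conversely every vertex ever placed in $\eta$ or a descendant lies in $H$, since descendant supernodes are initialized by recursive \textsc{BuildTree} calls on subgraphs of $H$ and, by the ancestor-only-influence property, their later growth pulls in only vertices of those subgraphs. Hence the final $\dom_\cS(\eta)$ equals the union of all supernodes in the subtree of $\cT$ rooted at $\eta$, which is $\dom(\eta)$, completing item~(3).
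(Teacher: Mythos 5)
Your treatments of items (1) and (3) are sound and essentially match the paper's: for (1) the paper likewise takes the canonical shortest path $P$ from $v$ to its closest boundary vertex in $\dom_{\cS}(X)$, observes that $P$ minus its endpoint lies in $\buff H_X$ and, by consistent tie-breaking, is assigned wholesale to the same supernode; for (3) monotonicity is immediate and the final identity $\dom_{\cS}(\eta)=\dom(\eta)$ follows from item (2) applied to the initializing call.

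The genuine gap is in item (2), precisely at the step you yourself flag as ``the main obstacle,'' and the patch you offer there is wrong. You argue that a supernode $S$ seen by the smaller region $H'$ at the time of the child call $C'$ must be ``made of vertices assigned before $H'$ was carved out,'' hence ``disjoint from $H$ and seen by $H$'' already when $C$ was invoked. But supernodes initialized above $C$ are not frozen during $C$'s execution: the whole point of $\textsc{GrowBuffer}$ is to assign vertices \emph{of $H$} to previously-created supernodes (the supernode of the closest vertex in $\bdry H''_{\downarrow X}$ is typically an old ancestor supernode). So by the time $C'$ is called, $S$ may well contain former $H$-vertices, it is not disjoint from $H$, and the witness by which $H'$ sees $S$ may be exactly such an absorbed vertex; nothing you wrote rules out that $S_{\cS}$ (its vertex set when $C$ was invoked) had no vertex adjacent to $H$ at all. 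Converting ``seen by $H'$ at time $C'$'' into ``seen by $H$ at time $C$'' genuinely requires the connectivity argument: follow a path inside $S$ from the witness vertex to the skeleton $T_S$ (which lies outside $H$ unless $S=\eta$); the first vertex of that path leaving $H$ is adjacent to $H$, so by \Cref{I:adj-assignment-weak} it was already assigned --- necessarily to $S$, since vertices are never reassigned --- when $C$ was called, whence $H$ sees $S_{\cS}$. This is exactly the content of the paper's \Cref{clm:see-nesting}. The paper's own proof of item (2) sidesteps your recursion-tree induction entirely with this direct argument: take the \emph{final} supernode $\eta$ containing $v$; if $\eta\subseteq H$ then (by disjointness of sibling subgraphs and the fact that $H$ is unassigned at time $C$) $\eta$ was initialized by or below $C$; otherwise connectivity from item (1) yields a path in $\eta$ from $v$ to a vertex outside $H$, and the first exit vertex was already assigned to $\eta$ at time $C$, so $H$ sees $\eta_{\cS}$. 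Your overall plan can be repaired by importing that argument, but as written the conversion step does not hold.
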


\begin{proof}  
    \textbf{(1)} When supernode $\eta$ is initialized, it is connected (because the skeleton $T_\eta$ is connected). Whenever a vertex $v$ is assigned to $\eta$ by a call to $\textsc{GrowBuffer}(\cS, \cX, H)$, we claim that connectivity is preserved. 
    Let $X$ denote the supernode processed during $\textsc{GrowBuffer}(\cS, \cX, H)$, let $\bdry H_{\downarrow X}$ denote the set of boundary vertices, and let $\buff H_X$ denote the vertices assigned during $\textsc{GrowBuffer}(\cS, \cX, H)$. 
    Let $P$ be a shortest path in $\dom_{\cS}(X)$ from $v$ to the closest point $\bdry H_{\downarrow X}$. Every vertex in $P$ is closer to $\bdry H_{\downarrow X}$ than $v$. Further, we claim that every vertex in $P$ (excluding the endpoint, which is a boundary vertex) is in $H$.
    Indeed, every vertex in $P$ is in $\dom_{\cS}(X)$, so the first vertex along $x$ that leaves $H$ is in $\bdry H_{\downarrow X}$, and thus is the endpoint of $P$.
    Thus, every point in $P$ (excluding the endpoint) is in $\buff H_X$.
    As every vertex in $\buff H_X$ is assigned according to the closest vertex in $\bdry H_{\downarrow X}$ (and ties are broken consistently), every vertex in path $P$ is assigned to the same supernode $\eta$, and the connectivity of $\eta$ is preserved.

    \medskip\noindent\textbf{(2)}
    Let $v$ be a vertex in $H$ assigned to supernode $\eta$. Suppose that $\eta \subseteq H$ (and suppose that $C$ itself did not initialize $\eta$). In this case, we claim that $\eta$ was initialized below $C$. This follows from the fact that, for any call to \textsc{BuildTree}, the children calls to \textsc{BuildTree} in the recursion tree operate on pairwise disjoint subgraphs. This implies that $\eta$ is initialized either above or below $C$; as $H$ consists of unassigned nodes at the time $C$ is called, $\eta$ is initialized below $C$.

    Now suppose that $\eta$ is not contained in $H$. As $\eta$ is connected (\Cref{C:connected}), there is a path $P$ in $\eta$ between $v$ and a vertex outside of $H$. By \Cref{I:adj-assignment-weak}, the first vertex along $P$ that leaves $H$ has already been assigned in $\cS$, at the time $C$ is called. Thus, $H$ sees $\eta_{\cS}$.
    
    \medskip\noindent
    \textbf{(3)}
    The fact that supernodes only grow over time is immediate from the algorithm.
    Let $H$ denote the domain of $\eta$ at the time $\eta$ is initialized. By definition, $\dom_{\cS}(\eta)$ is the set of vertices in $H$ that are \emph{now} assigned to supernodes above $\eta$; as supernodes only grow over time,   $\dom_{\cS}(\eta)$ only shrinks.
    It remains to show that the final  $\dom_{\cS}(\eta)$ is equal to $\dom(\eta)$.
    Indeed, by \Cref{clm:future-assignment} (applied to the call to $\textsc{BuildTree}$ that initialized $\eta$),
    every vertex in $H$ is assigned to a supernode initialized either above or below $\eta$ (or is assigned to $\eta$ itself).
    A supernode is below $\eta$ in the partition tree $\cT$ if and only if it is initialized below $\eta$. Thus, after the algorithm terminates, $\dom_{\cS}(\eta)$ is the set of vertices that are in $\eta$ or in supernodes below $\eta$, which is precisely $\dom(\eta)$.
\end{proof}

\begin{lemma}
\label{lem:seen-width}
    Suppose that $\textsc{BuildTree}(\mathcal{S}, H)$ is called during the algorithm.
    Let $\seen{H}$ be the set of supernodes in $\cS$ seen by $H$. 
    Then $\seen{H}$ contains at most $r-2$ supernodes; furthermore, the supernodes in $\seen{H}$ are pairwise adjacent.
\end{lemma}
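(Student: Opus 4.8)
The plan is to prove both claims simultaneously by arguing that, if more than $r-2$ supernodes were seen by $H$, or if two seen supernodes were not adjacent in $G$, then $G$ would contain a $K_r$-minor. The argument is the standard ``cop decomposition'' contraction argument, adapted to our partition-tree formulation, so the heart of the matter is really to identify the right collection of connected vertex sets and witness paths in $G$.

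First I would set up the model. Fix the call $\textsc{BuildTree}(\cS,H)$, and let $S_1,\dots,S_k$ be the supernodes of $\cS$ seen by $H$ (so $k = |\seen{H}|$), with $v_{S_1},\dots,v_{S_k}$ their witness vertices in $H$. Each $S_i$ induces a connected subgraph of $G$ by \Cref{C:basic}(\ref{C:connected}), so each $S_i$ is a valid branch set of a minor model. The subgraph $H$ itself is connected by hypothesis; I will use $H$ (possibly together with a connecting tree inside it, e.g.\ an SSSP tree rooted at a witness and spanning all the $v_{S_i}$) as one more branch set $B_0$. Since $v_{S_i}\in H$ is adjacent to $S_i$ by definition of ``sees'', there is an edge of $G$ between $B_0$ and each $S_i$. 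This already gives a model of $K_{1,k}$ (the star with center $B_0$); to upgrade it to $K_{k+1}$ I must show that $S_i$ and $S_j$ are adjacent in $G$ for every $i\neq j$. This is exactly the ``furthermore'' clause, and it is the crux: once adjacency of all pairs of seen supernodes is established, the $k+1$ branch sets $B_0,S_1,\dots,S_k$ form a model of $K_{k+1}$, and $K_r$-minor-freeness forces $k+1 \le r-1$, i.e.\ $k\le r-2$.

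So the main work is pairwise adjacency of seen supernodes. I would prove this by induction on the course of the algorithm, strengthening the statement to an invariant: \emph{at every point in the execution, for any connected subgraph $H$ that is (or will be) passed to $\textsc{BuildTree}$ or $\textsc{GrowBuffer}$, the supernodes seen by $H$ are pairwise adjacent in $G$}. For the base case, the initial call $\textsc{BuildTree}(\varnothing,G)$ sees no supernodes. For the inductive step, consider how seen sets evolve. When $\textsc{BuildTree}(\cS,H)$ creates $\eta$ and recurses on a component $H'$ of $H\setminus\eta$ (or $H\setminus\bigcup\cS$), any supernode $S$ seen by $H'$ is either (a) $\eta$ itself, or (b) a supernode already seen by $H$: indeed a witness vertex for $S$ in $H'$ is a vertex of $H$, so $H$ sees $S$ (using \Cref{I:adj-assignment-strong}/\Cref{I:adj-assignment-weak} to know the neighbor of $H'$ outside $H'$ is already assigned and lies in some existing supernode, and tracing the connectivity of that supernode as in \Cref{C:basic}(\ref{clm:future-assignment})). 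For pairs within case (b), adjacency holds by the inductive hypothesis applied to $H$. For a pair $\{\eta,S\}$ with $S$ seen by $H$: the skeleton $T_\eta$ connects $v$ to a witness vertex of $S$, so $\eta$ contains a vertex adjacent to $S$, giving $\eta$--$S$ adjacency. The analogous bookkeeping must be done for $\textsc{GrowBuffer}$: there, recursion passes to components $H'$ of $H\setminus\bigcup\cS$ and the seen set can only shrink or pick up supernodes that $H$ already saw (the newly-assigned buffer vertices join \emph{existing} supernodes, no supernode is created), and crucially the processed supernode $X$ is \emph{removed} from the worklist $\cX$ precisely because $H$ no longer sees it — but for the adjacency claim what matters is that $\seen{H'}\subseteq\seen{H}$ here, so the inductive hypothesis transfers directly.

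The step I expect to be the main obstacle is the careful tracking in case (b): verifying that ``a supernode seen by a recursive subgraph $H'$ was already seen by its parent $H$.'' This requires combining several earlier facts — connectivity of supernodes (\Cref{C:basic}(\ref{C:connected})), the invariant that all vertices adjacent to a recursed-upon subgraph are already assigned (\Cref{I:adj-assignment-weak}), and the fact that a vertex of $H$ adjacent to a supernode $S$ reached through a path inside $S$ that first leaves $H$ at an already-assigned vertex witnesses $H$ seeing $S$. The subtlety is that $\cS$ grows over the recursion, so ``seen by $H$'' must be interpreted with respect to the correct snapshot of $\cS$; \Cref{I:adj-assignment-strong} and the ``above/below'' terminology are exactly what pins this down, and I would lean on them to make the snapshot argument rigorous. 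Once that is in place, the $K_r$-minor-freeness conclusion $k\le r-2$ is immediate.
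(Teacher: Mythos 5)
Your proposal is correct, but it reaches the pairwise-adjacency claim by a different route than the paper. You run a top-down induction over the recursion tree, maintaining the invariant that the supernodes seen by the current subgraph are pairwise adjacent; the engine is the containment $\seen{H'} \subseteq \seen{H} \cup \set{\eta}$ for each child subgraph $H'$ (justified via \Cref{I:adj-assignment-strong} and \Cref{C:basic}(\ref{clm:future-assignment}), noting that vertices assigned during the $\textsc{GrowBuffer}$ phase cannot land in supernodes initialized below the current call, since those do not yet exist), together with the fact that $\eta$'s skeleton reaches a witness of every supernode in $\seen{H}$. The paper instead argues per pair: for $X,Y \in \seen{H}$ with $Y$ initialized below $X$, it exhibits a path from the root of $T_X$ to the root of $T_Y$ inside $H \cup X_{\cS} \cup Y_{\cS}$, rewinds to the moment just before $Y$ was initialized, observes that every vertex of this path is then either unassigned or already in $X$, concludes that the component containing $y$ saw $X$ at that moment, so $T_Y$ was attached to a witness of $X$, and finishes with monotone growth of supernodes. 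Your global invariant spares you the explicit path-and-rewind but requires the seen-set bookkeeping you flag (essentially re-deriving what the paper packages in \Cref{C:basic}(\ref{clm:future-assignment}) and later in \Cref{clm:see-nesting}); the paper's argument is local to each pair and needs no invariant over the recursion. One small point to nail down in your version: since your extra branch set $B_0$ lives inside $H$, you should take the branch sets $S_1,\dots,S_k$ to be the snapshot sets $S_{i,\cS}$ (which are disjoint from $H$, unlike the final supernodes, which may later absorb vertices of $H$ via $\textsc{GrowBuffer}$); accordingly your invariant should assert adjacency of the \emph{current} vertex sets, which your inductive step does deliver because the witness-based adjacency is present already when $\eta$ is created and supernodes only grow (\Cref{C:basic}(\ref{clm:grow-n-shrink})). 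The paper sidesteps this by using $\seen{H} \cup \set{\eta}$ with the final (pairwise disjoint) supernodes as the $K_{k+1}$ model; either choice yields $k \le r-2$.
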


\begin{proof}
    We first prove that the supernodes in $\seen{H}$  are pairwise adjacent. 
    Consider a pair $(X, Y)$ 
    of supernodes 
    in $\seen{H}$, and assume without loss of generality%
    \footnote{By \Cref{I:adj-assignment-strong}, both $X$ and $Y$ are initialized above $\textsc{BuildTree}(\cS, H)$, so one of $X$ or $Y$ was initialized below the other.} that $Y$ is initialized below $X$.
    Let $x$ and $y$ be the vertices chosen to be the roots of the skeletons of $X$ and $Y$, respectively. 
    Since $H$ sees both $X_{\cS}$ and $Y_{\cS}$, and as $X_{\cS}$ and $Y_{\cS}$ are connected individually, there exists a path $P$ from $x$ to $y$ containing only vertices in $H$, $X_{\cS}$, and $Y_{\cS}$. 
    
    Consider the time just before $Y$ is initialized. Let $\tilde{\cS}$ denote the assignments of vertices at that time, and let $\tilde{H}$ be the connected component of the subgraph induced by the unassigned vertices that contains $y$ at that time. Observe that,
    although $X_{\tilde{\cS}}$ may expand later,
    all vertices in $P$ are either unassigned (and thus belong to $\tilde{H}$) or belong to $X_{\tilde{\cS}}$. 
    Hence, there is a path from $y$ to $X_{\tilde{\cS}}$ containing only unassigned vertices, 
    meaning that $\tilde{H}$ sees $X_{\tilde{\cS}}$. 
    Thus, when $Y_{\tilde{\cS}}$ is initialized, it must be adjacent to $X_{\tilde{\cS}}$ by construction.
    By \Cref{C:basic}(\ref{clm:grow-n-shrink}), supernodes only grow, and thus $X$ and $Y$ must be adjacent.
    
    By \Cref{C:basic}(\ref{C:connected}), every supernode is connected. Thus, the above claim implies that $\seen{H} \cup \set{\eta}$ forms a model for a $K_{k + 1}$-minor, where $k = \abs{\seen{H}}$. As $G$ excludes $K_r$-minors, we have $\abs{\seen{H}} \le r - 2$.
\end{proof}

\begin{lemma}[Shortest-path skeleton property]
\label{lem:shortest-path}
    Every supernode $\eta$ has a skeleton that is an SSSP tree in $\dom(\eta)$ with $r-2$ leaves.
\end{lemma}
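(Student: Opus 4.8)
I would verify the two halves of the statement separately --- the leaf bound and the ``shortest-path'' property --- each reducing to facts already proved, namely \Cref{lem:seen-width} and \Cref{C:basic}(\ref{clm:grow-n-shrink}).

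First fix the call $C \coloneqq \textsc{BuildTree}(\cS, H)$ that initializes $\eta$. At that moment $H$ is the value of $\dom_{\cS}(\eta)$, and $T_\eta$ is built inside $H$: it is obtained from a single-source shortest-path tree of $H$ rooted at the chosen vertex $v$ by pruning to the $v$-to-witness paths, one witness per supernode in $\seen{H}$. Since $T_\eta$ is exactly the union of these shortest $v$-to-witness paths, any leaf of $T_\eta$ must be one of the chosen witness vertices --- a vertex lying on no $v$-to-witness shortest path would never have been added to $T_\eta$. Hence $T_\eta$ has at most $\abs{\seen{H}}$ leaves, which is at most $r-2$ by \Cref{lem:seen-width}.

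Next I would show $T_\eta$ is an SSSP tree in $\dom(\eta)$. The skeleton $T_\eta$ never changes during the algorithm, and its vertex set is the initial vertex set of $\eta$; since supernodes only grow and $\eta$ sits in the subtree of $\cT$ rooted at $\eta$, we get $V(T_\eta) \subseteq \eta \subseteq \dom(\eta)$. On the other hand, $\dom_{\cS}(\eta)$ only shrinks over the execution, starting at $H$ and ending at $\dom(\eta)$ (\Cref{C:basic}(\ref{clm:grow-n-shrink})), so $\dom(\eta) \subseteq H$ as vertex sets, and $\dom(\eta)$ is an induced subgraph of $H$. Now take any $u \in V(T_\eta)$ and let $\pi$ be the $v$-to-$u$ path inside $T_\eta$; by construction $\len{\pi} = \dist_H(v,u)$. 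Since $\pi \subseteq T_\eta \subseteq \dom(\eta) \subseteq H$ and passing to an induced subgraph cannot decrease distances,
\[
\dist_{\dom(\eta)}(v,u) \;\ge\; \dist_H(v,u) \;=\; \len{\pi} \;\ge\; \dist_{\dom(\eta)}(v,u),
\]
so $\pi$ is a shortest $v$-$u$ path in $\dom(\eta)$ as well. Thus $T_\eta$ is a tree rooted at $v \in \dom(\eta)$ in which every root-to-vertex path is shortest in $\dom(\eta)$, i.e.\ it is (a subtree of) an SSSP tree of $\dom(\eta)$; together with the leaf bound this proves the lemma.

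The one subtle point --- what I would flag as the crux --- is precisely this transition from the graph $H$ in which $T_\eta$ was constructed to the final domain $\dom(\eta) \subseteq H$: deleting vertices of $H$ to reach $\dom(\eta)$ could, a priori, destroy a path and lengthen some $v$-$u$ distance, which would break the ``shortest path'' claim. The resolution is that $T_\eta$ is frozen at initialization and lies entirely inside $\dom(\eta)$, so the witnessing path survives the shrinking, and monotonicity of distances under taking (induced) subgraphs then pins the distance exactly. Everything else is bookkeeping on top of \Cref{lem:seen-width} and \Cref{C:basic}(\ref{clm:grow-n-shrink}).
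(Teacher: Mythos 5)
Your proposal is correct and follows essentially the same route as the paper: it invokes \Cref{C:basic}(\ref{clm:grow-n-shrink}) to note that $\dom_{\cS}(\eta)$ only shrinks from $H$ to $\dom(\eta)$ while $T_\eta$ stays frozen inside $\eta \subseteq \dom(\eta)$, so the SSSP property transfers to the final domain, and it bounds the number of leaves by $\abs{\seen{H}} \le r-2$ via \Cref{lem:seen-width}. The only difference is that you spell out the distance-monotonicity chain that the paper leaves implicit, which is fine.
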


\begin{proof}
    Notice that, throughout the course of the algorithm, $\dom_{\mathcal{S}}(\eta)$ may shrink but it never expands by \Cref{C:basic}(\ref{clm:grow-n-shrink}).
    As $T_\eta$ is an SSSP tree in its original domain $\dom_{\mathcal{S}}(\eta)$, it is also an SSSP tree in $\dom(\eta)$.  (We remark that $\eta$ only grows over the course of the algorithm, so $T_\eta$ is a subgraph of $\eta$, and thus is a subgraph of $\dom(\eta)$).
    By \Cref{lem:seen-width}, tree $T_\eta$ has at most $r-2$ leaves.
\end{proof}

\begin{claim}
\label{clm:see-nesting}
    Let $C$ be a call, whether to $\textsc{BuildTree}(\cS, H)$ or to $\textsc{GrowBuffer}(\cS, \cX, H)$, and let $\eta$ be a supernode initialized above $C$. Let $H'$ be a subgraph of $H$. If $H'$ is adjacent to a vertex in $\eta$ (after the algorithm terminates), then $H$ sees $\eta_{\cS}$.
\end{claim}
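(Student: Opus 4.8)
The plan is to decompose the statement into two easy pieces: (i) showing that the phrase ``$H$ sees $\eta_\cS$'' is even meaningful at the moment $C$ is invoked --- i.e., that $\eta$ already belongs to $\cS$ and is disjoint from $H$ --- and (ii) producing a witness vertex of $H$ adjacent to the current version $\eta_\cS$. For (i): because $\eta$ is initialized \emph{above} $C$, the $\textsc{BuildTree}$ call that created $\eta$ is either a proper ancestor of $C$ in the recursion tree, or (when $C$ is a $\textsc{GrowBuffer}$ call) the very $\textsc{BuildTree}$ call that triggered $C$; in the latter case $\eta$ is still created in Step~1 of that call, before any of its $\textsc{GrowBuffer}$ children run, since $\textsc{GrowBuffer}$ never initializes supernodes. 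Either way $\eta$ has been added to $\cS$ by the time $C$ starts, and $\eta_\cS \supseteq T_\eta \ne \varnothing$. Disjointness is immediate from \Cref{I:adj-assignment-weak}: every vertex of $H$ is unassigned when $C$ is called, so none of them lies in $\eta_\cS$.

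For (ii), fix an edge $uv$ of $G$ realizing ``$H'$ is adjacent to a vertex of $\eta$'', with $u$ in the final supernode $\eta$ and $v \in V(H') \subseteq V(H)$, and split on whether $u \in H$. If $u \notin H$, then $u \in G \setminus H$ is adjacent to $v \in H$, so \Cref{I:adj-assignment-weak} guarantees $u$ is already assigned to some supernode at the time $C$ is called; since assignments are never revised, the vertex $u$ --- which lies in the \emph{final} $\eta$ --- must in fact already lie in $\eta_\cS$, and hence $v$ is the desired witness. If $u \in H$, then $u$ is a vertex of $H$ ultimately assigned to $\eta$, and since $\eta$ is initialized above $C$ it is neither initialized by $C$ nor initialized below $C$; \Cref{C:basic}(\ref{clm:future-assignment}), which classifies the possible fates of vertices of $H$, then leaves only the option that $\eta$ is a supernode of $\cS$ that $H$ sees --- which is exactly the conclusion. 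In both cases $H$ sees $\eta_\cS$.

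I do not anticipate a real obstacle here; the proof is a short case analysis built on \Cref{I:adj-assignment-weak}, the permanence of the assignment function, and \Cref{C:basic}(\ref{clm:future-assignment}). The only point demanding care is the bookkeeping hidden in ``initialized above $C$'': one must verify that it forces $\eta \in \cS$ at the moment $C$ is invoked while simultaneously excluding $\eta$ from being initialized by $C$ or below $C$, so that the two cases above are genuinely exhaustive and the correct snapshot $\eta_\cS$ of the (possibly later-grown) supernode $\eta$ is referenced throughout. (Alternatively, the case $u\in H$ can also be handled uniformly with the case $u\notin H$ by using connectivity of the final $\eta$ from \Cref{C:basic}(\ref{C:connected}): take a path in $\eta$ from $u$ to the root of $T_\eta$, which lies in $\eta_\cS$ and hence outside $H$, and apply \Cref{I:adj-assignment-weak} to the last vertex of $H$ before that path leaves $H$.)
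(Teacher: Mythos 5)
Your proposal is correct and in substance matches the paper's argument: the paper takes a vertex of $\eta$ adjacent to $H'$, walks along the connected (final) $\eta$ to the skeleton $T_\eta$, and applies \Cref{I:adj-assignment-weak} where the path exits $H$ --- exactly the uniform argument you sketch in your parenthetical, while your main route simply delegates the $u\in H$ case to \Cref{C:basic}(\ref{clm:future-assignment}), whose proof encapsulates that same connectivity-plus-invariant reasoning and is established before this claim, so the delegation is legitimate. Your explicit split on whether the adjacent $\eta$-vertex lies in $H$ is, if anything, slightly more careful than the paper's wording, which implicitly treats that vertex as lying in $H$.
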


\begin{proof}
    Let $v$ be a vertex in $\eta \setminus H'$ adjacent to $H'$. As $\eta$ is connected, there is a path $P$ from $v$ to $T_\eta$ contained in $\set{v} \cup \eta$. As $\eta$ is initialized above $C$ (and, at the time $C$ is called, subgraph $H$ contains only unassigned vertices), the skeleton $T_\eta$ is disjoint from $H$. Thus, $P$ starts at a vertex in $H$ and ends at a vertex outside of $H$. Consider the first vertex along $P$ that leaves $H$. By \Cref{I:adj-assignment-weak}, this vertex had already been assigned (to $\eta$) at the time $C$ was called. We conclude that $H$ sees $\eta_{\cS}$.
\end{proof}

\begin{lemma}[Tree decomposition property]
\label{lem:tree-decomposition}
    $\hat{\cT}$ satisfies the tree decomposition property.
\end{lemma}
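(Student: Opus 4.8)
\medskip
\noindent\textbf{Proof proposal.}
Recall that $\hat\cT$ is obtained from $\cT$ by replacing each supernode $\eta$ with the bag $B_\eta$ consisting of $\eta$ together with all ancestor supernodes of $\eta$ in $\cT$ adjacent to $\eta$ in $G$. The plan is to check the three tree-decomposition axioms for this $\hat\cT$ --- vertex coverage, edge coverage, and coherence (for each vertex the bags containing it form a connected subtree) --- and, separately, that every bag contains at most $w = r-1$ supernodes. Before anything else I would record two structural facts used throughout. First, the recursion tree of the $\textsc{BuildTree}$ calls is isomorphic to $\cT$ (each recursive call attaches its returned subtree beneath the supernode it just created), so ``$\eta_1$ is initialized above $\eta_2$'' is synonymous with ``$\eta_1$ is an ancestor of $\eta_2$ in $\cT$''. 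Second, if $\eta$ is initialized by $\textsc{BuildTree}(\cS, H_\eta)$, then $\dom_\cS(\eta)$ starts equal to $H_\eta$ and only shrinks, so by \Cref{C:basic}(\ref{clm:grow-n-shrink}) we have $\eta \subseteq \dom(\eta) \subseteq H_\eta$; that is, although $\eta$ grows over the algorithm, its final vertex set stays inside its initial domain (and likewise $\dom(\eta') \subseteq H_{\eta'}$ for every supernode $\eta'$).

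I would dispatch the bag-size bound and vertex coverage first. Vertex coverage is immediate since every $v$ lies in its supernode $\eta(v) \subseteq B_{\eta(v)}$. For the bag size, fix $\eta$ with initial domain $H_\eta$ and let $X$ be any ancestor of $\eta$ in $\cT$ adjacent to $\eta$ in $G$; I apply \Cref{clm:see-nesting} to the call $C = \textsc{BuildTree}(\cS, H_\eta)$ that initialized $\eta$, taking its ``supernode'' to be $X$ (initialized above $C$, being an ancestor of $\eta$) and its ``subgraph'' to be $\eta$ (contained in $H_\eta$ by the fact above). Since $\eta$ is adjacent to a vertex of $X$, the claim yields $X \in \seen{H_\eta}$. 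Hence the supernodes of $B_\eta$ lie in $\{\eta\} \cup \seen{H_\eta}$, which has at most $1 + (r-2) = r-1 = w$ elements by \Cref{lem:seen-width}.

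Next I would handle edge coverage. Given $(u,v) \in E(G)$ with $u \in \eta_u$, $v \in \eta_v$, and $\eta_u \ne \eta_v$, the crux is that $\eta_u$ and $\eta_v$ are comparable in $\cT$: if not, they lie in the subtrees of two distinct children $c_u, c_v$ of their lowest common ancestor $\eta_0$, hence inside two distinct connected components $H_u, H_v$ of the single induced subgraph on which the corresponding recursive $\textsc{BuildTree}$ calls were made. By the structural fact $\eta_u \subseteq H_u$ and $\eta_v \subseteq H_v$, so $u \in H_u$ and $v \in H_v$, but there is no edge of $G$ between two components of an induced subgraph --- a contradiction. Once comparability is known, say $\eta_u$ is the ancestor, then $\eta_u$ is an ancestor supernode of $\eta_v$ adjacent to $\eta_v$, so $B_{\eta_v}$ contains both $u$ and $v$. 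For coherence I set $S_v \coloneqq \{\eta : v \in B_\eta\}$; since supernodes are vertex-disjoint, $S_v$ consists of $\eta(v)$ together with all descendants of $\eta(v)$ in $\cT$ adjacent to $\eta(v)$ in $G$, so it suffices to show that for any $\eta \in S_v$ every node $\eta'$ on the $\cT$-path from $\eta(v)$ to $\eta$ is again in $S_v$. Letting $C' = \textsc{BuildTree}(\cS', H')$ be the call that initialized $\eta'$, I note $\eta(v)$ is initialized above $C'$ and $\eta \subseteq \dom(\eta') \subseteq H'$, apply \Cref{clm:see-nesting} (with supernode $\eta(v)$ and subgraph $\eta$) to get $\eta(v) \in \seen{H'}$, and conclude that the skeleton $T_{\eta'}$ built during $C'$ connects to a witness for $\eta(v)$ --- a vertex of $\eta'$ adjacent to $\eta(v)$ --- so $\eta'$ is adjacent to $\eta(v)$ and $\eta' \in S_v$. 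Since $\hat\cT \cong \cT$ as graphs, connectedness of $S_v$ in $\cT$ transports to $\hat\cT$.

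The main obstacle, and the reason the argument is not purely mechanical, is that supernodes are not static: they expand over the course of the algorithm, so a statement like ``$X$ is adjacent to $\eta$ in $G$'' refers to the \emph{final} partition and says nothing about what either supernode looked like when it was created. The whole proof hinges on converting such a final-state adjacency into a ``sees'' relation that was already present at initialization time, which is exactly what \Cref{clm:see-nesting} (backed by \Cref{I:adj-assignment-weak}) provides; after that step, every claim reduces to bookkeeping about the recursion tree together with the width bound \Cref{lem:seen-width}.
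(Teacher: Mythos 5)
Your proposal is correct and follows essentially the same route as the paper: the width bound via \Cref{lem:seen-width}, edge coverage via the lowest-common-ancestor/connected-component argument, and coherence via \Cref{clm:see-nesting} combined with the skeleton construction (witness vertices) to turn final-state adjacency into initialization-time visibility. The only differences are presentational --- you argue coherence directly per vertex rather than by contradiction on the lowest bag missing $\eta$, and you spell out the use of \Cref{clm:see-nesting} for the bag-size bound where the paper invokes \Cref{lem:seen-width} tersely --- so no gap to report.
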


\begin{proof}
First, note that \Cref{lem:seen-width} directly implies that each supernode sees at most $r-2$ ancestor supernodes, meaning that each bag in $\hat{T}$ contains at most $r-1$ supernodes. 
Next, we prove that $\hat{\cT}$ is a tree decomposition. 
    
    \medskip\noindent
    \textbf{(1)} The union of all vertices in all bags of $\hat{\cT}$ is $V$ by construction. 
        
    \medskip\noindent
    \textbf{(2)} Let $(x,y)$ be an edge in $G$. We need to show that there is a bag in $\hat{\cT}$ that contains both $x$ and $y$. Let $X$ be the supernode containing $x$, and let $Y$ be the supernode containing $y$. We will prove that either $X = Y$ or one of them is an ancestor of the other (recall that, by definition, the bag of $X$ contains all supernodes above $X$ that are adjacent to $X$).
    
    Assume that $X \neq Y$. We claim that $X$ and $Y$ are in an ancestor-descendent relationship in $\cT$.  
    Otherwise, consider the lowest common ancestor $\eta$ of $X$ and $Y$, initialized by a call $C \coloneqq \textsc{BuildTree}(\cS, H)$. 
    As $X$ and $Y$ are in different subtrees of $\eta$, vertices $x$ and $y$ are both unassigned and belong to different connected components of unassigned vertices, at the time when $C$ begins to recursively make calls to $\textsc{BuildTree}$. But this is impossible, as there is an edge between $x$ and $y$.

    \medskip\noindent
    \textbf{(3)} We prove that for any supernode $\eta$, if there are two bags $\bag{X}$ and $\bag{Y}$ containing $\eta$, every bag in the path between them in $\hat{\cT}$ contains $\eta$.
    
    Let $P$ be the path between $\bag{X}$ and $\bag{Y}$ in $\hat{\cT}$.
    Assume that there exists some bag in $P$ not containing $\eta$.
    Observe that the bag $\bag{\eta}$ is a common ancestor of both $\bag{X}$ and $\bag{Y}$. 
    Consider two paths: $P_X$ from $\bag{\eta}$ to $\bag{X}$ and $P_Y$ from $\bag{\eta}$ to $\bag{Y}$. 
    One of them, say $P_X$, must have a bag that does not contain $\eta$. 
    Let $\bag{\eta'}$ be the lowest bag in $P_X$ such that $\bag{\eta'}$ does not contain $\eta$, and let $\bag{\eta''}$ be the child of $\bag{\eta'}$ in $P_X$. Notice that $\bag{\eta''}$ contains $\eta$. We remark that $\bag{\eta'}$ is a descendent of $\bag{\eta}$.
    From the construction of $\hat{\cT}$, we get that supernode $\eta''$ is adjacent to $\eta$ but supernode $\eta'$ is not. 
    Suppose that $\eta'$ is initialized during the call $C' \coloneqq \textsc{BuildTree}(\cS', H')$, and $\eta''$ is initialized during the call $C'' \coloneqq \textsc{BuildTree}(\cS'', H'')$.
    As $\eta$ is initialized above $C'$, and $H'' \subseteq H'$, and $H''$ is adjacent to $\eta$, \Cref{clm:see-nesting} implies that $H'$ sees $\eta_{\cS'}$ at the time $C'$ is called. Thus, by construction $\eta'$ is adjacent to $\eta$, a contradiction. 
\end{proof}

\subsection{Analysis: Supernode buffer property}

The following observation is almost immediate from the construction. It says that, if some subgraph $H'$ is cut off from an old supernode $X$, there was some call to $\textsc{GrowBuffer}$ that processed $X$.
\begin{observation}
\label{obs:processing-call}
    Suppose that call $C' \coloneqq \textsc{BuildTree}(\cS', H')$ is made during the algorithm. If $X$ is a supernode initialized above $C'$, and if $H'$ does not see $X_{\cS'}$ at the time $C'$ is called, then there is some call $C \coloneqq \textsc{GrowBuffer}(\cS, \cX, H)$ such that (1) $H \supseteq H'$, and in particular $C$ is above $C'$, (2) $H$ does not see $X_{\cS}$, and (3) $X$ was processed during $C$.
\end{observation}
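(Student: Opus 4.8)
The plan is to prove the observation by induction on the depth of $C'$ in the recursion tree. The first thing to note is that a \textsc{GrowBuffer} call spawns only \textsc{GrowBuffer} calls, so every ancestor of the \textsc{BuildTree} call $C'$ is itself a \textsc{BuildTree} call; in particular all ``cutting off'' of old supernodes happens inside \textsc{GrowBuffer} side-subtrees hanging off the \textsc{BuildTree} spine that leads to $C'$. If $C'$ is the root call $\textsc{BuildTree}(\varnothing,G)$ then no supernode is initialized above $C'$ and the statement is vacuous. Otherwise let $\tilde C \coloneqq \textsc{BuildTree}(\tilde\cS,\tilde H)$ be the parent of $C'$, let $\eta$ be the supernode it initializes, and recall that $C'$ is launched on a connected component $H'$ of $\tilde H\setminus\bigcup\cS$, where $\cS$ is the state of the global variable at the end of $\tilde C$'s \textsc{GrowBuffer}-loop. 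Since $X$ is initialized above $C'$, either $X=\eta$, or $X$ is initialized above $\tilde C$.

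If $X$ is initialized above $\tilde C$ \emph{and} $\tilde H$ does not see $X_{\tilde\cS}$, I would invoke the induction hypothesis on $\tilde C$ (which has strictly smaller depth) to get a \textsc{GrowBuffer} call $C^*$ with graph $H^*\supseteq\tilde H\supseteq H'$ that does not see $X_{\cS^*}$ and during which $X$ is processed; this $C^*$ also witnesses the statement for $C'$. In the remaining case --- $X=\eta$, or ($X$ is initialized above $\tilde C$ and $\tilde H$ sees $X_{\tilde\cS}$) --- let $A$ be the connected component of $\tilde H\setminus\eta$ that contains $H'$, and note that $A$ is disjoint from $X$. During $\tilde C$'s \textsc{GrowBuffer}-loop the call $\textsc{GrowBuffer}(\,\cdot\,,\cX_A,A)$ is made, where $\cX_A$ is the set of supernodes seen by $\tilde H$ but not by $A$. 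If $X\in\cX_A$, let $C_0$ be that call. Otherwise I would argue that $A$ sees $X$: when $X=\eta$ this is because $A$ is a component of the connected graph $\tilde H$ with the nonempty $X$ deleted, and when $X$ is an older supernode it is because $\tilde H$ sees $X$ yet $X\notin\cX_A$. Since $A$ sees $X$ but $H'\subseteq A$ does not see $X_{\cS'}$, along the chain of recursive \textsc{GrowBuffer} calls that keeps the sub-component containing $H'$, the predicate ``the current graph sees $X$'' must flip from true to false; at the first such step, the line rebuilding the list $\cX'$ adds every supernode seen by the parent graph but not by the child graph, hence adds $X$, and I let $C_0$ be that child call. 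Either way $C_0$ is a \textsc{GrowBuffer} call whose graph contains $H'$ and whose list contains $X$.

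From $C_0$ down, following the sub-chain of \textsc{GrowBuffer} calls whose graphs still contain $H'$, the list loses only the supernode it processes, so $X$ remains in the list until $X$ itself is processed; and this must occur before the chain bottoms out at the call $\textsc{GrowBuffer}(\,\cdot\,,\varnothing,H')$ that hands $H'$ back to $\tilde C$'s \textsc{BuildTree}-loop. Taking $C^*\coloneqq\textsc{GrowBuffer}(\cS^*,\cX^*,H^*)$ to be the call on this sub-chain that processes $X$ gives $H^*\supseteq H'$ (condition~(1)), $X\in\cX^*$ and hence $H^*$ does not see $X_{\cS^*}$ (condition~(2)), and $X$ processed during $C^*$ (condition~(3)). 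I expect the main obstacle to be purely organizational: keeping the \textsc{BuildTree} spine and the \textsc{GrowBuffer} side-subtrees straight, and nailing down the two structural micro-facts used above --- that a supernode put into a $\cX$-list leaves it only by being processed, and that a component later fed to a \textsc{BuildTree}-loop is exactly the graph at the bottom of such a chain with empty list --- both of which are immediate from the pseudocode but easy to take for granted.
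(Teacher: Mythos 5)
Your proposal is correct and follows essentially the same route as the paper: the paper directly takes $\tilde{C}$ to be the lowest \textsc{BuildTree} call above $C'$ whose graph still sees $X_{\tilde\cS}$ (or the call initializing $X$), and then argues exactly as you do that, within $\tilde{C}$'s \textsc{GrowBuffer} recursion, the chain of calls on subgraphs containing $H'$ must have $X$ added to its list when the graph first stops seeing $X$ and removed only by being processed, so $X$ is processed in some such call. Your induction on recursion depth is just a wrapper that locates this same $\tilde{C}$, and your extra bookkeeping (list elements leave only when processed; the chain bottoms out with an empty list at $H'$) fills in details the paper's three-sentence justification leaves implicit.
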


To see why the observation holds, denote by $\tilde{C} \coloneqq \textsc{BuildTree}(\tilde{\cS}, \tilde{H})$ the lowest call above $C'$ such that $\tilde{H}$ sees $X_{\tilde{\cS}}$ (or, if no such call exists, let $\tilde{C}$ be the call that initializes $X$). After making some calls to $\textsc{GrowBuffer}$, the call $\tilde{C}$ must recurse on some subgraph that does not see $X$. Since the algorithm calls $\textsc{GrowBuffer}$ whenever a supernode gets cut off, there must be some (recursive) call to $\textsc{GrowBuffer}$ caused by $\tilde{C}$ that processed $X$, as claimed by the observation.
We shall use this observation to prove the supernode buffer property.

\begin{lemma}[Supernode buffer property]
\label{lem:buffer}
Let $X$ and $\eta$ be supernodes, with $\eta$ initialized below $X$. If $\eta$ is not adjacent to $X$ in $G$, then for every vertex $v$ in $\dom(\eta)$, we have $\dist_{\dom(X)}(v, X) > \Delta/r$.
\end{lemma}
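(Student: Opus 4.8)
### Proof proposal for Lemma (Supernode buffer property)

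\textbf{Setup.} The plan is to trace back to the moment the buffer around $X$ was grown and argue that, from that moment on, every vertex in $\dom(\eta)$ stays outside the $\Delta/r$-ball around $X$ measured in $\dom(X)$. Since $\eta$ is initialized below $X$ and $\eta$ is not adjacent to $X$, the call $C_\eta \coloneqq \textsc{BuildTree}(\cS_\eta, H_\eta)$ that initializes $\eta$ has $X$ initialized above it, and $H_\eta$ does not see $X_{\cS_\eta}$ (if it did, then by construction the skeleton $T_\eta$ would be built to connect to a witness vertex of $X$, making $\eta$ adjacent to $X$). By \Cref{obs:processing-call} there is a call $C \coloneqq \textsc{GrowBuffer}(\cS, \cX, H)$ with $H \supseteq H_\eta$, with $H$ not seeing $X_\cS$, and with $X$ processed during $C$. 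During $C$ the algorithm computes $\bdry H_{\downarrow X}$ and assigns every unassigned vertex of $H$ within distance $\Delta/r$ (in $\dom_\cS(X)$) of $\bdry H_{\downarrow X}$ to some existing supernode other than $X$.

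\textbf{Key structural claim.} The heart of the argument is: \emph{every path $P$ in $\dom(X)$ from a vertex $v \in \dom(\eta)$ to a vertex of $X$ must pass through a vertex of $\bdry H_{\downarrow X}$, and moreover through such a vertex before leaving $H$}. To see this, first note $\dom(\eta) \subseteq H$: indeed $H_\eta \subseteq H$ and $\dom(\eta)$ consists of $\eta$ together with supernodes initialized below $\eta$, all of which are initialized below $C$, hence (their vertices, as unassigned vertices at the time $C$ runs, being confined to connected components of $H \setminus \bigcup\cS$) lie in $H$. Also $X$ is disjoint from $H$ (it is initialized above $C$, whose argument $H$ consists of unassigned vertices — here I will want \Cref{I:adj-assignment-strong}/\Cref{C:basic}). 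So $P$ starts inside $H$ and ends in $X \subseteq G \setminus H$; let $y$ be the first vertex of $P$ outside $H$. By \Cref{I:adj-assignment-weak} every vertex adjacent to $H$ is already assigned at the time $C$ is called, and $y$ is adjacent to $H$; I need to argue $y \in \dom_\cS(X)$. Since $P$ lies in $\dom(X)$, and the portion of $P$ up to $y$ together with the suffix reaching $X$ stays in $\dom(X)$ — combined with the fact that $\dom_\cS(X)$ is exactly the restriction of $\dom(X)$ to vertices not yet assigned above $X$ — I will show $y$ is not assigned above $X$ (if it were, then it would not be reachable from $X$ within $\dom(X)$ without passing through a higher supernode, contradicting that $P \subseteq \dom(X)$ reaches $X$; more carefully, $\dom(X)$ only contains $X$ and its descendants, so any assigned vertex of $\dom(X)$ lies in $X$ or a descendant supernode, hence is not above $X$). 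Therefore $y \in \bdry H_{\downarrow X}$.

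\textbf{Concluding the distance bound.} Now suppose for contradiction that $\dist_{\dom(X)}(v, X) \le \Delta/r$ for some $v \in \dom(\eta)$, and let $P$ be a witnessing path of length $\le \Delta/r$ in $\dom(X)$. By the structural claim, the prefix $P'$ of $P$ from $v$ to its first vertex $y \in \bdry H_{\downarrow X}$ satisfies $\len{P'} \le \Delta/r$ and stays inside $H \cup \bdry H_{\downarrow X}$. I want this to be a path in $\dom_\cS(X)$ so that it certifies $\dist_{\dom_\cS(X)}(v,y) \le \Delta/r$: every internal vertex of $P'$ is in $H$ and unassigned at the time $C$ runs (it is in $\dom(\eta)$-region… more precisely, vertices of $H$ either remain unassigned through $C$ or get assigned below/during $C$ — I will need to be careful here and likely invoke that $\dom_\cS(X) \supseteq H$ at the time $C$ runs, which follows because nothing in $H$ has been assigned above $X$ yet). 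Hence at the time $C$ runs, $v$ lies within distance $\Delta/r$ of $\bdry H_{\downarrow X}$ in $\dom_\cS(X)$, so $v \in \buff H_X$, so $v$ is \emph{assigned} during $C$ to some supernode that is not $X$ and is initialized above or at $C$'s level — in particular $v$ is not in $\dom(\eta)$, since $\dom(\eta)$ consists only of vertices assigned to $\eta$ or to supernodes below $\eta$, all unassigned when $C$ runs. This contradicts $v \in \dom(\eta)$.

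\textbf{Main obstacle.} The delicate point — and the step I expect to fight with — is pinning down precisely which vertices of $H$ are still unassigned at the time the processing call $C$ executes, versus which may have been assigned by an \emph{earlier} $\textsc{GrowBuffer}$ call nested above $C$. The definitions ``$\dom_\cS(X)$'', ``$\buff H_X$'', and ``$v \in \dom(\eta)$'' all implicitly refer to different snapshots of $\cS$, and the argument that a short path in $\dom(X)$ at the end of the algorithm descends to a short path in $\dom_\cS(X)$ at the snapshot of $C$ requires monotonicity facts from \Cref{C:basic}(\ref{clm:grow-n-shrink}) (domains only shrink, supernodes only grow) plus the observation that no vertex of $\dom(\eta)$ is touched before $\eta$ is even initialized. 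Getting the quantifier order right — "even if $X$ expands in the future, the buffer is preserved" — is exactly the subtlety the informal discussion in the text flags, and I'd budget most of the write-up for making that rigorous.
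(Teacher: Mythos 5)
Your overall route is essentially the paper's: locate, via \Cref{obs:processing-call}, a call $C = \textsc{GrowBuffer}(\cS,\cX,H)$ that processed $X$ with $H \supseteq H_\eta$ and $H$ not seeing $X_{\cS}$, then show that any vertex $v$ with $\dist_{\dom(X)}(v,X) \le \Delta/r$ has its witnessing path cross $\bdry H_{\downarrow X}$, hence lies in $\buff H_X$, is assigned during $C$, and therefore cannot lie in $\dom(\eta)$. The paper packages this as a downward induction over the recursion tree, while you apply \Cref{obs:processing-call} directly to the call that initialized $\eta$; since that observation already ``climbs'' to the point where $X$ was cut off, this is a legitimate streamlining rather than a genuinely different method.

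There is, however, one step whose justification as written does not hold up: the assertion that the \emph{final} $X$ is disjoint from $H$ --- equivalently, that the path $P \subseteq \dom(X)$ from $v$ to $X$ must exit $H$ at all, which is what produces the boundary vertex $y \in \bdry H_{\downarrow X}$. Your parenthetical reason (``$X$ is initialized above $C$, whose argument $H$ consists of unassigned vertices'') only gives $X_{\cS} \cap H = \varnothing$ at the moment $C$ is called; it does not rule out $X$ growing into $H$ afterwards, and neither \Cref{I:adj-assignment-strong} nor monotonicity of domains closes this on its own. The fact you need is \Cref{C:basic}(\ref{clm:future-assignment}) applied to $C$: a \textsc{GrowBuffer} call initializes no supernode, so every vertex of $H$ is eventually assigned either to a supernode initialized below $C$ or to a supernode in $\cS$ that $H$ sees; $X$ is neither (it already exists in $\cS$, and $H$ does not see $X_{\cS}$), so no vertex of $H$ is ever assigned to $X$, and in particular the endpoint of $P$ lying in the final $X$ is outside $H$. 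This is precisely how the paper resolves the ``even if $X$ expands later'' subtlety you flag at the end. With that replacement your argument goes through; note also that your worries about internal vertices of the prefix being unassigned, or about $H \subseteq \dom_{\cS}(X)$, are unnecessary --- the prefix lies in $\dom(X) \subseteq \dom_{\cS}(X)$ because domains only shrink (\Cref{C:basic}(\ref{clm:grow-n-shrink})), and that alone certifies $\dist_{\dom_{\cS}(X)}(v,\bdry H_{\downarrow X}) \le \Delta/r$.
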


\begin{proof}
    We prove the following claim by induction (starting immediately below the call that initialized $X$, and working downward in the recursion tree):
    \begin{quote}
        Let $C' \coloneqq \textsc{BuildTree}(\cS', H')$ be a call that is below the call that initialized $X$. Either $H'$ sees $X_{\cS'}$, or  $\dist_{\dom(X)}(v, X) > \Delta/r$ for every vertex $v$ in $H'$.
    \end{quote}
    We emphasize that the guarantee $\dist_{\dom(X)}(v, X) > \Delta/r$ refers to the \emph{final} $X$, after all expansions are made.
    This suffices to prove the lemma. 
    Indeed, the call to $\textsc{BuildTree}(\cS', H')$ that initialized $\eta$ comes below the call that initialized $X$, and $\dom(\eta) \subseteq H'$ by \Cref{C:basic}(\ref{clm:grow-n-shrink}); thus, either $H'$ sees $X_{\cS'}$ (in which case $\eta$ is adjacent to $X$ by definition of $T_\eta$), or every point $v$ in $\dom(\eta)$ satisfies $\dist_{\dom(X)}(v, X) > \Delta/r$.

    \bigskip \noindent \textbf{Inductive step.} \, 
    Suppose that $H'$ does not see $X_{\cS'}$.
    As we are in the inductive step, we may assume that the parent of $C'$ in the recursion tree, $\tilde{C} \coloneqq \textsc{BuildTree}(\tilde{\cS}, \tilde{H})$, is below the call that initialized $X$.
    If $\tilde{H}$ does not see $X_{\tilde{\cS}}$, then we are done:  
    since graph $\tilde{H}$ is a supergraph of $H'$, the inductive hypothesis implies that $\dist_{\dom(X)}(v, X) > \Delta/r$ for every vertex $v$ in $H'$.

    The interesting case occurs when $\tilde{H}$ sees $X_{\tilde{\cS}}$, but $H'$ does not see $X_{\cS'}$: that is, $X$ becomes ``cut off'' from $H'$ some time in between. 
    In this case, by \Cref{obs:processing-call} there is some call
    $C \coloneqq \textsc{GrowBuffer}(\cS, \mathcal{X}, H)$ above $C'$ that processes $X$, with $H \supseteq H'$ and $H$ does not see $X_{\cS}$. 
    Consider any vertex $v$ in $H$ such that $\dist_{\dom(X)}(v,X) \le \Delta/r$ (where, again, we emphasize that $X$ refers to the \emph{final} $X$, after all expansions). If no such vertex exists, we are done.

    \begin{figure}[h!]
        \centering
        \includegraphics[width=\textwidth]{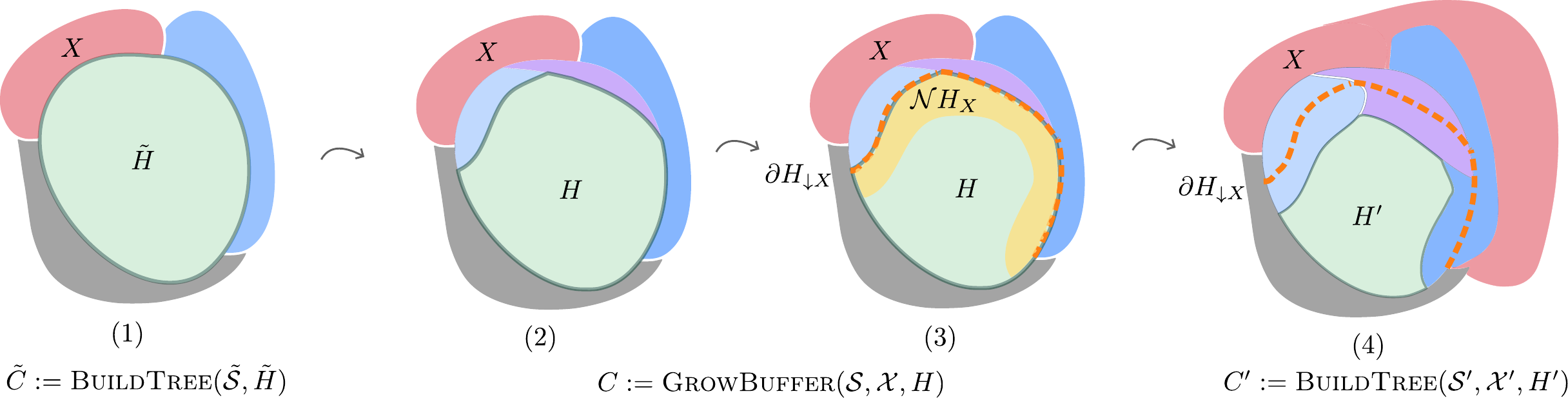}
        \caption{From left to right: (1) During call $\tilde{C}$, subgraph $\tilde{H}$ sees supernode $X$. The grey supernode is \emph{above} $X$, and is not in $\dom_{\tilde{S}}(X)$. 
        (2--3) During call $C$, supernode $X$ is cut off from $H$, and every point in $\buff H_X$ (i.e.\ every point close to $\bdry H_{\downarrow X}$) is assigned.  
        (4) For every subgraph $H'$ of $H$, every path in $\dom(X)$ from $H'$ to $X$ passes through $\bdry H'_{\downarrow X}$.}
        \label{fig:supernode-buffer}
    \end{figure}

    We argue that \emph{vertex $v$ is at most $\Delta/r$ away from $\bdry H_{\downarrow X}$ with respect to $\dom_{\cS}(X)$}; see~\Cref{fig:supernode-buffer}.
    Let $P$ be a shortest path from $v$ to $X$ in $\dom(X)$, where by assumption $\len{P} \le \Delta/r$.  
    As the domain of $X$ only shrinks over time (\Cref{C:basic}(\ref{clm:grow-n-shrink})), path $P$ is in $\dom_{\cS}(X)$.%
    \footnote{However, notice that at the time when call $C$ was made, $X$ might not have grown into its final shape and $X_\cS$ could be much smaller; in particular, $P$ may not be a path from $v$ to $X_\cS$ and the distance from $v$ to $X_\cS$ can be larger than $\Delta/r$.}
    By \Cref{C:basic}(\ref{clm:future-assignment}) on the call $C$, 
    every vertex in $H$ is assigned either to a supernode initialized below $C$, or to a supernode in $\cS$ that $H$ sees.
    Because $X$ already existed in $\cS$ (and thus $X$ is not initialized below $C$) and $H$ does not see $X_{\cS}$, the other endpoint of $P$ which is eventually assigned to $X$ cannot be in $H$. 
    So $P$ passes through some vertex $x$ outside of $H$ that is adjacent to $H$. 
    As $P$ is contained in $\dom_{\cS}(X)$, vertex $x$ is in $\bdry H_{\downarrow X}$.
    Thus, $\dist_{\dom_{\cS}(X)}(v, \bdry H_{\downarrow X}) \le \len{P} \le \Delta/r$.

    This means that $v$  is assigned to some supernode in $\cS$ by the $\textsc{GrowBuffer}$ algorithm.
    Recall that call $C'$ is below call $C$ by~\Cref{obs:processing-call}; thus, as calls are only made on connected components of unassigned vertices, we conclude that  $H'$ is a subgraph of $H$ that includes only unassigned vertices.  Thus, vertex $v$ is not in $H'$. This completes the proof of the induction step.

    \medskip \noindent \textbf{Base case.} \,
    In the base case, the parent of $C'$ in the recursion tree is the call $\tilde{C} \coloneqq \textsc{BuildTree}(\tilde{\cS}, \tilde{H})$ that initialized $X$. If $H'$ sees $X_{\cS'}$, then we are done. If $H'$ does not see $X_{\cS'}$, then we are in the ``interesting case'' described above (except that here 
    $\tilde{H}$ doesn't see $X_{\tilde{\cS}}$): by~\Cref{obs:processing-call}, there is some call $C \coloneqq \textsc{GrowBuffer}(\cS, \cX, H)$ above $C'$ and below $\tilde{C}$, during which $X$ is processed. The argument for this case is identical to the one in the inductive case.
\end{proof}

\subsection{Analysis: Supernode radius property}
We now prove that every supernode $\eta$ satisfies the radius property. To this end we prove three claims: 
\begin{itemize}
\item Every time a supernode is cut off from a subgraph, the radius of $\eta$ expands by at most $\Delta/r$ (\Cref{clm:expansion}). 
\item There are at most $r - 2$
supernodes that can cause $\eta$ to expand
(\Cref{clm:valid-expansion}). 
\item Each of the $r-2$ supernodes can cause $\eta$ to expand at most once (\Cref{clm:spent}).
\end{itemize}
Combining these three claims in an inductive argument shows that the total expansion of $\eta$ is bounded by~$\Delta$ (\Cref{lem:radius}).

\begin{claim}
    \label{clm:expansion}
    Suppose that $v$ is assigned to a supernode $\eta$ during a call $C \coloneqq \textsc{GrowBuffer}(\cS, \cX, H)$. 
    Let $X$ denote the supernode processed during $C$, and let $\bdry H_{\downarrow X}$ denote the boundary vertices. Let $\tilde{v}$ be the closest vertex in $\bdry H_{\downarrow X}$ 
    to~$v$ (with respect to $\dom_{\cS}(X)$).
    Then $\dist_{\eta}(v, \tilde{v}) \le \Delta/r$ (with respect to the final $\eta$).
\end{claim}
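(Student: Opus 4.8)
The plan is to exhibit a short path joining $v$ to $\tilde{v}$ that lies entirely inside the final $\eta$; its length then bounds $\dist_{\eta}(v,\tilde{v})$. All distances below are measured in the graph $\dom_{\cS}(X)$, as in the statement. Since $v$ was assigned during the processing of $X$, we have $v \in \buff H_X$, so $\dist_{\dom_{\cS}(X)}(v, \bdry H_{\downarrow X}) \le \Delta/r$; let $\tilde{v} \in \bdry H_{\downarrow X}$ be the boundary vertex realizing this distance that is selected by the (consistent) tie-breaking rule, so that $v$ is assigned to $\eta \coloneqq \eta_v$, the supernode containing $\tilde{v}$ --- in particular $\tilde{v} \in \eta$. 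Let $P$ be a shortest path from $v$ to $\tilde{v}$ in $\dom_{\cS}(X)$, so $\len{P} \le \Delta/r$. Everything reduces to showing $P \subseteq \eta$: granting this, $P$ is a path inside the induced subgraph $\eta$, and since supernodes only grow (\Cref{C:basic}(\ref{clm:grow-n-shrink})) it lies in the final $\eta$ too, whence $\dist_{\eta}(v,\tilde{v}) \le \len{P} \le \Delta/r$.

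I would prove $P \subseteq \eta$ by reusing the connectivity argument from the proof of \Cref{C:basic}(\ref{C:connected}), in three steps. (i) Every internal vertex of $P$ lies in $H$: since $P \subseteq \dom_{\cS}(X)$, the first vertex along $P$ that leaves $H$ is adjacent to $H$ and lies in $\dom_{\cS}(X)$, hence lies in $\bdry H_{\downarrow X}$ --- and because edge weights are positive, such a vertex would be strictly closer to $v$ than $\tilde{v}$ unless it is the endpoint $\tilde{v}$ itself, a contradiction; so $P$ stays in $H$ until its last vertex. (ii) Every internal vertex $u$ of $P$ lies in $\buff H_X$: the suffix of $P$ from $u$ to $\tilde{v}$ witnesses $\dist_{\dom_{\cS}(X)}(u, \bdry H_{\downarrow X}) \le \dist_{\dom_{\cS}(X)}(u,\tilde{v}) < \len{P} \le \Delta/r$, so $u$ is assigned during this call to $\textsc{GrowBuffer}$. (iii) Every internal vertex $u$ of $P$ is assigned to $\eta$: a routine triangle-inequality argument, using that $u$ lies on a shortest path from $v$ to $\tilde{v}$, shows that $\tilde{v}$ is a closest vertex of $\bdry H_{\downarrow X}$ to $u$, and that any boundary vertex tied with $\tilde{v}$ for $u$ is also a closest boundary vertex to $v$; since the consistent tie-breaking rule picks $\tilde{v}$ for $v$, it picks $\tilde{v}$ for $u$ as well, so $u$ is assigned to the supernode of $\tilde{v}$, namely $\eta$. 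Combining (i)--(iii) with $\tilde{v} \in \eta$ and $v \in \eta$ gives $P \subseteq \eta$.

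I expect step (iii) to be the main obstacle: without a tie-breaking rule that is consistent across source vertices, the internal vertices of $P$ could land in different supernodes, severing the path and leaving no handle on the radius. Handling this is precisely the purpose of the ``breaking ties consistently'' clause in $\textsc{GrowBuffer}$, and the argument essentially duplicates the one already used to establish supernode connectivity in \Cref{C:basic}(\ref{C:connected}); steps (i) and (ii) are straightforward bookkeeping with the triangle inequality and positivity of edge weights.
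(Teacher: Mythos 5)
Your proposal is correct and follows essentially the same route as the paper: take the shortest path $P$ from $v$ to $\tilde{v}$ in $\dom_{\cS}(X)$, observe $\len{P} \le \Delta/r$ since $v \in \buff H_X$ and $\tilde{v}$ is closest, and argue that every vertex of $P$ lies in $\buff H_X$ and is assigned to $\eta$ by the closest-boundary-vertex rule with consistent tie-breaking (the same mechanism as in the connectivity proof of \Cref{C:basic}(\ref{C:connected})). Your write-up merely spells out the tie-breaking step in more detail than the paper does.
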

\begin{proof}
    Let $\buff H_X$ denote the set of points assigned during $C$.
    Let $P$ be a shortest path between $v$ and $\tilde{v}$ in $\dom_{\cS}(X)$. 
    Every vertex in $P$ (other than $\tilde{v}$) is in $\buff H_X$. Because we assign every vertex in $\buff H_X$ according to the closest vertex in $\bdry H_{\downarrow X}$, every vertex in $P$ is assigned to $\eta$. Further, $P$ has length at most $\Delta/r$, because every vertex in $\buff H_X$ is within distance $\Delta/r$ of some vertex in $\bdry H_{\downarrow X}$ (with respect to $\dom_\cS(X)$), and $\tilde{v}$ is the closest vertex in $\bdry H_{\downarrow X}$ to~$v$. 
    Thus, $\dist_{\eta}(v, \tilde{v}) \le \Delta/r$.
\end{proof}
    
We next show that each supernode seen by supernode $\eta$ may cause $\eta$ 
to be expanded at most once: if supernode $\tilde{X}$ causes $\eta$ to expand because $\tilde{X}$ is cut off, supernode $\tilde{X}$ cannot be cut off \emph{again} later on in the recursion. 
Later (in \Cref{clm:valid-expansion}) we will show that only supernodes seen by $\eta$ may cause it to expand.
Let $\tilde{X}$ be a supernode, and let $H$ be a subgraph.
We say that $\tilde{X}$ is \EMPH{spent}
with respect to $H$ if there exists some call $\textsc{GrowBuffer}(\tilde{\cS}, \tilde{\cX}, \tilde{H})$ where $\tilde{H} \supseteq H$, and $\tilde{X}$ is processed during the call. 
In other words, $\tilde{X}$ is cut off from $\tilde{H}$ and $H$ (even as $\tilde{X}$ grows), and it has already been ``dealt with'' during the previous call $\tilde{C}$.

\begin{claim}
\label{clm:spent}
    Suppose that call $\textsc{GrowBuffer}(\cS, \cX, H)$ is made during the algorithm. 
    If supernode $\tilde{X}$ is spent with respect to $H$, then $\tilde{X}$ is not in $\cX$.
\end{claim}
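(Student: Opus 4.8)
The plan is to prove the statement by a downward induction on the recursion tree of the algorithm, tracking exactly how the list $\cX$ of a $\textsc{GrowBuffer}$ call is built up from its parent's list. Before the induction I would collect two structural facts. First, every subgraph appearing as an argument of $\textsc{BuildTree}$ or $\textsc{GrowBuffer}$ consists of currently-unassigned vertices, and any two such subgraphs are either vertex-disjoint or nested with the ancestor call's subgraph being the superset; hence, since $\tilde X$ is spent with respect to $H$ by a completed (and hence genuinely earlier) call $\tilde C \coloneqq \textsc{GrowBuffer}(\tilde\cS, \tilde\cX, \tilde H)$ with $\tilde H \supseteq H$ that already processed $\tilde X$, the call $\tilde C$ is a proper ancestor of $C \coloneqq \textsc{GrowBuffer}(\cS, \cX, H)$. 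Second, by construction every supernode on a $\textsc{GrowBuffer}$ list was initialized by an earlier call and is not seen by that call's subgraph; in particular $\tilde X$ is initialized above $\tilde C$ and $\tilde H$ does not see $\tilde X_{\tilde\cS}$.

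The heart of the proof is the \emph{core claim}: the final supernode $\tilde X$ is disjoint from $\tilde H$ and has no edge to $\tilde H$; consequently no recursion subgraph contained in $\tilde H$ ever sees $\tilde X$. For disjointness I apply \Cref{C:basic}(\ref{clm:future-assignment}) to $\tilde C$: every vertex of $\tilde H$ is ultimately assigned to a supernode initialized below $\tilde C$ or to a supernode of $\tilde\cS$ that $\tilde H$ sees, and $\tilde X$ is neither (it is initialized above $\tilde C$ and not seen by $\tilde H$ at that time), so no vertex of $\tilde H$ ever ends up in $\tilde X$. For the no-edge part I argue by contradiction: if some $w \notin \tilde H$ that lies in the final $\tilde X$ were adjacent to $\tilde H$, then by \Cref{I:adj-assignment-weak} vertex $w$ is already assigned at the moment $\tilde C$ fires, to some supernode $S$; as $S_{\tilde\cS}$ is then adjacent to and disjoint from $\tilde H$, we get that $\tilde H$ sees $S_{\tilde\cS}$; but vertices never change supernode and supernodes only grow, so $S$ must be $\tilde X$, contradicting that $\tilde H$ does not see $\tilde X_{\tilde\cS}$. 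Finally, any recursion subgraph $H^\ast \subseteq \tilde H$ is unassigned when it appears (hence disjoint from the then-current $\tilde X$) and, by the above, no vertex of $\tilde X$ lies in or is adjacent to $\tilde H \supseteq H^\ast$; so $H^\ast$ cannot see $\tilde X$.

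With the core claim in hand, I run the induction. Let $C_p$ be the parent of $C$ in the recursion tree, operating on a subgraph $H_p \supseteq H$; since $\tilde C$ is an ancestor of $C$ and $C_p$ is the parent of $C$, the call $\tilde C$ is an ancestor of (or equals) $C_p$, so $\tilde H \supseteq H_p$ and therefore, by the core claim, $H_p$ does not see $\tilde X$. If $C_p = \textsc{BuildTree}(\cS, H_p)$ (the base case), then $\cX$ is by construction exactly the set of supernodes seen by $H_p$ but not by $H$, which excludes $\tilde X$. If $C_p = \textsc{GrowBuffer}(\cS_p, \cX_p, H_p)$ and processes a supernode $X_p$, then $\cX = (\cX_p \setminus \{X_p\}) \cup N$, where $N$ is the set of supernodes seen by $H_p$ but not by $H$; we have $\tilde X \notin N$ because $H_p$ does not see $\tilde X$, and $\tilde X \notin \cX_p$ by the inductive hypothesis applied to $C_p$ (which is legitimate since $\tilde X$ is spent with respect to $H_p$, as $\tilde H \supseteq H_p$). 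Either way $\tilde X \notin \cX$, as desired.

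The step I expect to be the main obstacle is the core claim, and within it the no-edge assertion: it does not suffice to know that $\tilde X$ is currently cut off from $\tilde H$, because $\tilde X$ may expand later, and one must rule out that it ever becomes even adjacent to $\tilde H$. The key leverage is the monotonicity of the construction — vertices are assigned once and for all, and supernodes only grow — which forces any potential future witness of such adjacency to already belong to $\tilde X$ at the time $\tilde C$ was made, contradicting that $\tilde H$ did not see $\tilde X$ then. Everything else is bookkeeping over the two recipes by which a $\textsc{GrowBuffer}$ list is assembled.
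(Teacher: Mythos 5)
Your proposal is correct in its core idea and reaches the same underlying mechanism as the paper, but it is organized in the opposite direction. The paper argues \emph{forward} from the witness call $\tilde C$: using \Cref{clm:see-nesting} it shows that every subsequent call (recursive \textsc{GrowBuffer} calls spawned under $\tilde C$, and later \textsc{BuildTree} calls on subgraphs of $\tilde H$) operates on a subgraph that does not see $\tilde X$, so no list below $\tilde C$ ever re-acquires $\tilde X$. You instead prove a ``core claim'' (the final $\tilde X$ is disjoint from and non-adjacent to $\tilde H$, hence never seen from inside $\tilde H$) directly from \Cref{C:basic}(\ref{clm:future-assignment}) and \Cref{I:adj-assignment-weak}, and then run a \emph{bottom-up} induction along the parent chain of $C$, tracking how $\cX$ is assembled from $\cX_p$. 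Your core claim is essentially the contrapositive of \Cref{clm:see-nesting}, and your direct proof of it (the adjacent vertex $w$ must already be assigned when $\tilde C$ fires, hence to $\tilde X$, contradicting non-seeing) is in fact a bit more elementary than the paper's skeleton-path argument; the bottom-up bookkeeping is a legitimate alternative to the paper's top-down propagation.

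Two soft spots should be repaired. First, your structural preamble (``subgraphs of calls are disjoint or nested with the ancestor's subgraph the superset, hence $\tilde C$ is a proper ancestor of $C$'') is not literally true of the recursion tree: a \textsc{GrowBuffer} call made in the buffer phase of some \textsc{BuildTree} call and a later \textsc{BuildTree} child of that same call are \emph{siblings} with nested subgraphs, so the witness $\tilde C$ need not be a call-tree ancestor of $C$. What you actually need is only that $\tilde H \supseteq H_p$ (or $\tilde C = C_p$), and that does hold — but the clean justification is via \Cref{I:adj-assignment-weak}: each call's subgraph is a \emph{maximal} connected set of unassigned vertices at its time, $H_p$ consists of vertices still unassigned when $C_p$ (and hence, when the earlier $\tilde C$) is made, and $H_p$ meets $\tilde H$ because both contain $H$; also no call strictly between $C_p$ and $C$ has a subgraph containing $H$. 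Second, and more concretely: in the \textsc{GrowBuffer}-parent case you apply the inductive hypothesis to conclude $\tilde X \notin \cX_p$, but this fails in the reachable corner case $\tilde C = C_p$, i.e.\ $\tilde X = X_p$: there the only (earlier-than-$C$) witness of spent-ness with respect to $H_p$ is $C_p$ itself, the inductive hypothesis is inapplicable, and indeed $\tilde X \in \cX_p$. The conclusion survives only because $X_p$ is explicitly removed when forming $\cX = (\cX_p \setminus \set{X_p}) \cup N$; your write-up should split into $\tilde X = X_p$ (handled by the removal, with $\tilde X \notin N$ since $H_p$ does not see it) and $\tilde X \neq X_p$ (then $\tilde C$ is strictly earlier than $C_p$, so the inductive hypothesis legitimately applies). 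Relatedly, the whole statement is only true under the reading that the spent-ness witness is a call made \emph{before} $C$ (the paper's ``previous call''); your induction implicitly uses this, and it is worth making explicit, since with an arbitrary (possibly later) witness the claim is false.
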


\begin{proof}
    By definition of ``spent'', there is some call $\tilde{C} \coloneqq \textsc{GrowBuffer}(\tilde{\cS}, \tilde{\cX}, \tilde{H})$ where $\tilde{H} \supseteq H$,
    and $\tilde{X}$ is processed during $\tilde{C}$. 
    Notice that, because $\tilde{X}$ is in $\tilde{\cX}$, subgraph $\tilde{H}$ does not see $\tilde{X}_{\tilde{\cS}}$.  Observe that:
    
\begin{itemize}
    \item Call $\tilde{C}$ makes some calls to $\textsc{GrowBuffer}(\cS', \cX', H')$. For each of these calls made by $\tilde{C}$, notice that the set $\cX'$ contains only supernodes in $\tilde{\cX} \setminus \set{\tilde{X}}$ (the ``leftover'' ones from $\tilde{C}$) or supernodes in $\tilde{\cS}$ that can be seen by $\tilde{H}$ but not by $H'$ (those newly added ones). In particular, $\cX'$ does not contain $\tilde{X}$. Further, \emph{$H'$ does not see $\tilde{X}_{\cS'}$}. This follows from \Cref{clm:see-nesting}: if $H'$ could see $\tilde{X}_{\cS'}$, then (because $\tilde{X}$ was initialized above $\tilde{C}$, and $H' \supseteq \tilde{H}$), \Cref{clm:see-nesting} would imply that $\tilde{H}$ could see $\tilde{X}_{\tilde{\cS}}$, a contradiction.
    
    An inductive argument shows that, for every call to $\textsc{GrowBuffer}(\cS', \cX', H')$ made recursively as a result of $\tilde{C}$, the set $\cX'$ does not contain $\tilde{X}$.
     
    \item After the recursion from $\tilde{C}$ terminates, the algorithm calls $\textsc{BuildTree}$ on subgraphs of $\tilde{H}$, which may recursively result in more calls to $\textsc{BuildTree}$.  Let $\textsc{BuildTree}(\cS', H')$  be one of these calls, where $H' \subseteq \tilde{H}$.
    We claim that \emph{$H'$ does not see $\tilde{X}_{\cS'}$}. As in the previous bullet point, this follows from \Cref{clm:see-nesting}: if $H'$ could see $\tilde{X}_{\cS'}$, then \Cref{clm:see-nesting} would imply that $\tilde{H}$ could see $\tilde{X}_{\tilde{\cS}}$, a contradiction.

    This means that whenever $\textsc{BuildTree}(\cS', H')$ makes a call $\textsc{GrowBuffer}(\cS'', \cX'', H'')$, the set $\cX''$ does not include $\tilde{X}$; indeed, the set $\cX''$ only includes supernodes that are seen by $H'$.
\end{itemize}

\noindent
It follows from these two cases that, for every call to $\textsc{GrowBuffer}(\cS', \cX', H')$ with $H' \subseteq \tilde{H}$, the supernode $\tilde{X}$ is not in $\cX'$. In particular, the call $\textsc{GrowBuffer}(\cS, \cX, H)$ satisfies $H \subseteq \tilde{H}$, and so $\tilde{X}$ is not in $\cX$.
\end{proof}

The following claim, in conjunction with  \Cref{lem:seen-width}, implies that for any supernode $\eta$, at most $r-2$ supernodes can cause it to expand. We crucially rely on the fact that when supernode $X$ is cut off, we only expand supernodes initialized \emph{below} $X$; we do this because we only need to guarantee the supernode buffer property with respect to $\dom(X)$.
\begin{claim}
    \label{clm:valid-expansion}
    Suppose that $v$ is assigned to supernode $\eta$ during a call $C \coloneqq \textsc{GrowBuffer}(\cS, \cX, H)$, and let $X$ be the supernode in $\cX$ processed during $C$. Suppose that $\eta$ was initialized by a call $\hat{C} \coloneqq \textsc{BuildTree}(\hat{\cS}, \hat{H})$. Then $\hat{H}$ sees $X_{\hat{\cS}}$.
\end{claim}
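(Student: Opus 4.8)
The plan is to trace through the recursion tree the way supernode $X$ became available to be processed by the call $C$, and relate that to the chain of calls that led to the creation of $\eta$. First I would observe that since $v$ is assigned to $\eta$ during $C = \textsc{GrowBuffer}(\cS, \cX, H)$, the vertex $v$ lies in the buffer $\buff H_X$, so in particular $v \in H$; and since $\eta$ was initialized by $\hat C = \textsc{BuildTree}(\hat\cS, \hat H)$, at the moment $v$ is assigned we have $v \in \eta \subseteq \hat H$ (using that supernodes only grow and that, by \Cref{C:basic}(\ref{clm:grow-n-shrink})/(\ref{clm:future-assignment}), every vertex ever assigned to $\eta$ lies in the domain $\hat H$ of $\eta$ at initialization). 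So $H$ and $\hat H$ both contain $v$; since the recursion operates on connected components and ancestor–descendant subgraphs are nested, one of $C,\hat C$ is an ancestor of the other in the recursion tree. I would rule out the case $\hat C$ strictly below $C$: at the time $C$ is made all vertices of $H$ are unassigned (\Cref{I:adj-assignment-weak}), so $\eta$ cannot have been initialized strictly before $C$ runs; and $C$ is a \textsc{GrowBuffer} call, which never initializes supernodes. Hence $\hat C$ is above $C$ or $\hat C$ caused $C$ — in either case $\eta$ is ``initialized above $C$'' in the paper's terminology.

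Next I would use the fact that $X$ is processed during $C$. By \Cref{obs:processing-call} (or directly by the way $\cX$ is threaded through the recursion in the pseudocode of \textsc{GrowBuffer} and \textsc{BuildTree}), $X$ sits in $\cX$ only because at some earlier point a call — namely the $\textsc{BuildTree}$ call $\tilde C \coloneqq \textsc{BuildTree}(\tilde\cS,\tilde H)$ that is the closest ancestor of $C$ whose subgraph $\tilde H$ \emph{did} see $X$ — recursed on a component that no longer saw $X$, and this event cascaded down (through \textsc{GrowBuffer}'s recursive updates of $\cX'$ and \textsc{BuildTree}'s recomputation of what each component sees) until reaching $C$. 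Throughout this cascade, $X$ is continuously cut off: at every intermediate call $C^\star$ on subgraph $H^\star$ with $\tilde H \supseteq H^\star \supseteq H$, the subgraph $H^\star$ does not see $X_{\cS^\star}$, yet $X$ is ``carried along'' in the $\cX$-list. The key point to extract is: $X$ was initialized strictly above $\tilde C$ (it already exists when $\tilde C$ runs), and $\tilde H \supseteq H$, and $\tilde H$ sees $X$.

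Now I combine the two threads. We have $\eta$ initialized above $C$, and $\hat H$ (the domain of $\eta$ at initialization) contains $v \in H \subseteq \tilde H$. I claim $\hat C$ must lie on the path in the recursion tree between $\tilde C$ and $C$, i.e.\ $\tilde C$ is above $\hat C$ is above (or equals) $C$. Indeed $\hat H \ni v$ and $\tilde H \ni v$ forces $\hat C, \tilde C$ to be nested; if $\hat C$ were above $\tilde C$ then $\eta$ would already exist when $\tilde C$ runs, but $\tilde H \subseteq \hat H$ consists of unassigned vertices at that time only if $\eta \cap \tilde H = \varnothing$ — and since $v\in\tilde H$ will later be assigned to $\eta$, this would require $v$ unassigned at time $\tilde C$, which is consistent, so I actually need the sharper \Cref{I:adj-assignment-strong}: $\eta$ was initialized above $C$ during a call above $C$, and the chain $\tilde C \supseteq \cdots \supseteq C$ is exactly the sequence of ancestor calls relevant to $C$ after $X$ got cut off; $\hat C$ initializing $\eta$ must be one of these ancestor calls (it cannot be off to the side, since $v \in \hat H \cap H$). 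Having placed $\hat C$ between $\tilde C$ and $C$ with $\tilde H \supseteq \hat H \supseteq H \ni v$, I apply \Cref{clm:see-nesting} with the call $\hat C$, the supernode $X$ (initialized above $\hat C$ since it was initialized above $\tilde C$), and the subgraph $H \subseteq \hat H$: because $H$ is eventually adjacent to a vertex in $X$ — here I use that $v\in H$ is within $\Delta/r$ in $\dom_\cS(X)$ of the boundary $\bdry H_{\downarrow X}$, whose vertices are adjacent to $H$ and (by the buffer argument in \Cref{lem:buffer}) lie on paths into $X$, so $H$ is adjacent to $X$ — \Cref{clm:see-nesting} yields that $\hat H$ sees $X_{\hat\cS}$, which is exactly the claim.

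\medskip\noindent\textbf{Main obstacle.} The delicate part is the bookkeeping in the middle paragraph: pinning down precisely why $X$ being processed at $C$ forces the existence of the ancestor $\tilde C$ with $\tilde H \supseteq H$ seeing $X$, and why $\hat C$ is forced onto the recursion path between $\tilde C$ and $C$ rather than lying in a sibling branch. This is where \Cref{I:adj-assignment-strong} and the nesting structure of the recursion must be invoked carefully; the final application of \Cref{clm:see-nesting} is then routine, modulo the small observation (borrowed from the proof of \Cref{lem:buffer}) that proximity of $v$ to $\bdry H_{\downarrow X}$ in $\dom_\cS(X)$ certifies that $H$ is adjacent to $X$.
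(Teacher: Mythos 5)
There is a genuine gap at the crux of your argument: the final application of \Cref{clm:see-nesting}. You apply it with the call $\hat{C}$, the supernode $X$, and the subgraph $H' = H$, which requires that $H$ be adjacent (in the final configuration) to a vertex of $X$; you justify this by saying that $v \in H$ is close to $\bdry H_{\downarrow X}$ and that boundary vertices ``lie on paths into $X$,'' so ``$H$ is adjacent to $X$.'' This is false, and in fact can never happen. Since $X \in \cX$ at call $C$, the subgraph $H$ does not see $X_{\cS}$, so no vertex of $X_{\cS}$ is adjacent to $H$; by \Cref{I:adj-assignment-weak} every vertex adjacent to $H$ is already assigned at the time of $C$ (to some supernode other than $X$), and assignments are permanent, so no vertex adjacent to $H$ ever joins $X$; and by \Cref{C:basic}(\ref{clm:future-assignment}) applied to $C$, every vertex of $H$ is eventually assigned either to a supernode initialized below $C$ or to a supernode that $H$ sees --- $X$ is neither, so no vertex of $H$ ever joins $X$ either. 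Hence the final $X$ is disjoint from $H$ and from its neighborhood, and the hypothesis of \Cref{clm:see-nesting} with $H' = H$ simply cannot be met. What \Cref{lem:buffer}'s argument gives is only that paths in $\dom(X)$ from $H$ to $X$ cross the boundary set; it does not give adjacency of $H$ to $X$, and indeed the whole point of the buffer construction is to keep $H$'s unassigned remainder away from $X$.

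The paper's proof takes a different, contradiction-based route that you do not reproduce and that cannot be replaced by a direct adjacency argument of the kind you attempt (the adjacency of $\hat{H}$ to $X$ is essentially the statement being proved, not a usable hypothesis). Concretely, the paper first shows $\eta$ is initialized \emph{below} $X$ (by applying \Cref{C:basic}(\ref{clm:future-assignment}) to the call initializing $X$ and noting a vertex of $\bdry H_{\downarrow X} \subseteq \dom_{\cS}(X)$ is assigned to $\eta$, while $\eta \ne X$ because $H$ sees $\eta_{\cS}$ but not $X_{\cS}$), then shows $\hat{C}$ is above $C$ via \Cref{I:adj-assignment-strong} (so $\hat{H} \supseteq H$), and finally argues: if $\hat{H}$ did not see $X_{\hat{\cS}}$, then \Cref{obs:processing-call} would yield a $\textsc{GrowBuffer}$ call processing $X$ on a subgraph containing $\hat{H} \supseteq H$, making $X$ \emph{spent} with respect to $H$, whence \Cref{clm:spent} forces $X \notin \cX$ at call $C$ --- contradicting that $X$ was processed during $C$. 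Your proposal never invokes \Cref{clm:spent}, which is the indispensable tool here; the bookkeeping about $\tilde{C}$ and the placement of $\hat{C}$ on the ancestor path is not where the difficulty lies, and by itself it does not deliver the conclusion.
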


\begin{proof}
    We first show that $\eta$ is initialized below $X$. As $v$ is assigned to supernode $\eta$ during $C$, there is some vertex in $\bdry H_{\downarrow X} \subseteq \dom_{\cS}(X)$ that was assigned to $\eta$ (in $\cS$). Suppose that $X$ was initialized during the call $\bar{C} \coloneqq \textsc{BuildTree}(\bar{\cS}, \bar{H})$, and notice that $\dom_{\cS}(X) \subseteq \bar{H}$. Applying~\Cref{C:basic}(\ref{clm:future-assignment}) to call $\bar{C}$ shows that every vertex in $\bar{H}$ is (eventually) assigned to a supernode above $X$, or to $X$, or to a supernode below $X$.
    By definition, $\dom_{\cS}(X)$ contains the vertices in $\bar{H}$ that are \emph{not} assigned to supernodes above $X$ (in $\cS$). 
    Thus, every vertex in $\dom_{\cS}(X)$ is assigned to $X$ or to a supernode below $X$, and so either $\eta = X$ or $\eta$ is below $X$.
    It cannot be that $\eta = X$ (indeed, $H$ sees $\eta$ because $v \in \bdry H_{\downarrow X}$, and $H$ does not see $X$ because $X$ is in $\cX$), so $\eta$ is below $X$.

    We also observe that $C$ is below $\hat{C}$: because $H$ is adjacent to a vertex in $\eta$, \Cref{I:adj-assignment-strong} implies that $\eta$ was initialized above $C$. Thus, $\hat{H} \supseteq H$.

    Now, for the sake of contradiction suppose that $\hat{H}$ does not see $X_{\hat{\cS}}$. As $\eta$ is below $X$ and $\hat{H}$ does not see $X_{\hat{\cS}}$, \Cref{obs:processing-call} implies that there is some call $\tilde{C} \coloneqq \textsc{GrowBuffer}(\tilde{\cS}, \tilde{\cX}, \tilde{H})$ such that (1) $\tilde{H} \supseteq \hat{H}$, and (2) $X$ is processed by $\tilde{C}$. As $\hat{H} \supseteq H$, this means that $X$ is spent with respect to $H$, and \Cref{clm:spent} implies that $X$ is not in $\cX$, a contradiction.
\end{proof}

\begin{lemma}
    \label{lem:radius}
    Every supernode $\eta$ has radius $\Delta$ with respect to skeleton $T_\eta$.
\end{lemma}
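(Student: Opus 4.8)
The plan is to bound, for an arbitrary supernode $\eta$ and an arbitrary vertex $v \in \eta$, the distance $\dist_\eta(v, T_\eta)$ in the final induced subgraph $G[\eta]$ by $\Delta$, by tracing how $v$ came to be assigned to $\eta$. Let $\hat C \coloneqq \textsc{BuildTree}(\hat\cS, \hat H)$ be the call that initialized $\eta$. If $v \in T_\eta$ there is nothing to do, so suppose $v$ was assigned to $\eta$ during some $\textsc{GrowBuffer}$ call. The starting point is \Cref{clm:expansion}: whenever a vertex $u$ is assigned to $\eta$ during a call $C \coloneqq \textsc{GrowBuffer}(\cS, \cX, H)$ processing a supernode $X$, there is a boundary vertex $\tilde u \in \bdry H_{\downarrow X}$ that is itself already assigned to $\eta$ at the start of $C$ and satisfies $\dist_\eta(u, \tilde u) \le \Delta/r$. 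I would start from $u_0 \coloneqq v$ and iterate $u_{i+1} \coloneqq \tilde u_i$, obtaining a chain $v = u_0 \to u_1 \to \cdots \to u_d$; this chain is finite because each $u_{i+1}$ is assigned strictly before $u_i$, and it terminates at a vertex $u_d \in T_\eta$ (a vertex of $\eta$ not added via $\textsc{GrowBuffer}$ lies on the skeleton). Since $G[\eta]$ is connected (\Cref{C:basic}(\ref{C:connected})), the triangle inequality gives
\[
    \dist_\eta(v, T_\eta) \le \dist_\eta(u_0, u_d) \le \sum_{i=0}^{d-1} \dist_\eta(u_i, u_{i+1}) \le d \cdot \frac{\Delta}{r},
\]
so it would suffice to prove $d \le r - 2$.

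To bound $d$, for each $i \in \set{0, \dots, d-1}$ let $C_i \coloneqq \textsc{GrowBuffer}(\cS_i, \cX_i, H_i)$ be the call during which $u_i$ was assigned to $\eta$, and let $X_i \in \cX_i$ be the supernode processed during $C_i$. Because $u_i$ is assigned to $\eta$ during $C_i$ and $\eta$ was initialized by $\hat C$, \Cref{clm:valid-expansion} yields $X_i \in \seen{\hat H}$, and \Cref{lem:seen-width} bounds $\abs{\seen{\hat H}} \le r - 2$; so it would remain to show that $X_0, \dots, X_{d-1}$ are pairwise distinct. The key structural claim I would establish is that the subgraphs of these calls are nested, $H_0 \subseteq H_1 \subseteq \cdots \subseteq H_{d-1}$: since $u_{i+1} \in \bdry(H_i)_{\downarrow X_i}$ is adjacent to $H_i$ and already assigned when $C_i$ is invoked, while $u_{i+1}$ is still unassigned when the earlier call $C_{i+1}$ is invoked, at the moment $C_{i+1}$ is invoked the set $H_i \cup \set{u_{i+1}}$ is a connected set of unassigned vertices (assignments are never undone), hence contained in the maximal such component $H_{i+1}$ on which $C_{i+1}$ operates (\Cref{I:adj-assignment-weak}). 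Given the nesting, if $X_a = X_b$ with $a < b$, then $C_b$ --- invoked before $C_a$, operating on $H_b \supseteq H_a$, hence an ancestor of $C_a$ in the recursion --- processes $X_a$, so $X_a$ is spent with respect to $H_a$; \Cref{clm:spent} then forces $X_a \notin \cX_a$, contradicting that $C_a$ processes $X_a$. Hence the $X_i$ are distinct, $d \le r - 2$, and $\dist_\eta(v, T_\eta) \le (r-2)\Delta/r < \Delta$, which is the radius bound.

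The step I expect to be the main obstacle is making the nesting $H_0 \subseteq \cdots \subseteq H_{d-1}$ and the accompanying timing facts fully rigorous: that along the chain the calls occur in the chronological order $C_{d-1}, \dots, C_0$, that each $H_i$ is a maximal connected component of unassigned vertices at every earlier moment (so that ``$C_b$ is an ancestor of $C_a$'' is legitimate and \Cref{clm:spent} applies in the stated form), and that the boundary set $\bdry H_{\downarrow X}$ used by \Cref{clm:expansion} really consists of already-assigned vertices. These all follow from \Cref{I:adj-assignment-weak}, \Cref{I:adj-assignment-strong}, and \Cref{C:basic}, but they must be threaded together carefully enough to feed \Cref{clm:spent} exactly the right subgraph. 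This is precisely where the design choice of growing buffers against $\bdry H'_{\downarrow X}$ rather than against $X$ itself is essential: it is what makes ``each cut-off supernode can cause $\eta$ to expand at most once'' hold, and hence what keeps the total expansion of $\eta$ below $\Delta$.
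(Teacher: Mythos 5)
Your proposal is correct and is essentially the paper's own argument: it combines \Cref{clm:expansion}, \Cref{clm:valid-expansion} with \Cref{lem:seen-width}, and \Cref{clm:spent}, together with the same nesting/timing observation that the earlier call's subgraph strictly contains the later one's. The only difference is bookkeeping --- you unroll an explicit backward chain $v = u_0 \to \cdots \to u_d \in T_\eta$ and show the processed supernodes along it are pairwise distinct, whereas the paper packages the same count as an induction on the number of supernodes in $\seenhat{\hat H}$ that are ``spent'' with respect to the current subgraph; both yield a per-step expansion of $\Delta/r$ repeated at most $r-2$ times, hence radius at most $\Delta$.
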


\begin{proof}
    Let $\textsc{BuildTree}(\hat{\cS}, \hat{H})$ be the call that initialized $\eta$, and let \EMPH{$\seenhat{\hat{H}}$} denote the set of supernodes in $\hat{\cS}$  that can be seen by $\hat{H}$. In other words, by \Cref{clm:valid-expansion}, $\seenhat{\hat{H}}$ is the set of supernodes that can cause $\eta$ to expand.
    We prove the following statement by induction on $k$.
    
    \begin{quote}
        Let $v$ be a vertex assigned to $\eta$ during a call $C \coloneqq \textsc{GrowBuffer}(\cS, \cX, H)$. If there are at most $k$ supernodes in $\seenhat{\hat{H}}$ that are spent with respect to $H$, then $\dist_{\eta}(v, T_\eta) \le (k + 1)\cdot \Delta/r$.
    \end{quote}

    \noindent
    Let $X$ be the supernode processed during the call $C$. 
    Let \EMPH{$\tilde{v}$} be the closest vertex to $v$ in $\eta_{\cS}$ (with respect to $\dom_{\cS}(X)$), as defined in \Cref{clm:expansion}.

    \medskip\noindent
    \textbf{Inductive step ($k > 0$).} If $\tilde{v}$ is in $T_\eta$, then \Cref{clm:expansion} implies that $v$ is within distance $\Delta/r$ of $T_\eta$ (in the final subgraph $\eta$), satisfying the claim.
    Otherwise, $\tilde{v}$ was assigned to $\eta$ by some call $\tilde{C} \coloneqq \textsc{GrowBuffer}(\tilde{\cS}, \tilde{\cX}, \tilde{H})$.
    
    We now show that the number of supernodes spent with respect to $H$ is strictly greater than the number of supernodes spent with respect to $\tilde{H}$, aiming to apply the induction hypothesis on $\tilde{H}$. 
    \begin{itemize}
    \item \emph{Every supernode in $\seenhat{\hat{H}}$ that is spent with respect to $\tilde{H}$ is also spent with respect to $H$.} 
    Indeed, for every such supernode $\bar{X}$
    spent with respect to $\tilde{H}$, there is some call $\bar{C} \coloneqq \textsc{GrowBuffer}(\bar{\cS}, \bar{\cX}, \bar{H})$ such that $\bar{X} \supseteq \tilde{X}$ and $\bar{C}$ processes $\bar{X}$; as $\tilde{H} \supseteq H$, call $\bar{C}$ also serves as a witness that $\bar{X}$ is spent with respect to $H$. 
    \end{itemize}
    Now, consider the supernode $\tilde{X}$ that was processed during $\tilde{C}$. Observe that: 
    \begin{itemize}
    \item \emph{$\tilde{X}$ is not spent with respect to $\tilde{H}$.} 
    This follows from \Cref{clm:spent} and the fact that $\tilde{X} \in \tilde{\cX}$.

    \item \emph{$\tilde{X}$ \emph{is} spent with respect to $H$.} 
    First we argue that \emph{$\tilde{H} \supsetneq H$}. 
    Observe that call $\tilde{C}$ is above call $C$, because $\tilde{v}$ was already assigned when $C$ is called. Thus, vertices that are unassigned when $C$ is called are also unassigned when $\tilde{C}$ was called. In particular, when $\tilde{C}$ was called, every vertex in $H \cup \set{\tilde{v}}$ is unassigned. As $\tilde{H}$ is a maximal connected component of unassigned vertices and $\tilde{v}$ is adjacent to $H$ by definition, $\tilde{H} \supsetneq H$.

    The existence of call $\tilde{C}$ (which processes $\tilde{X}$) together with the fact that $\tilde{H} \supseteq H$ implies that $\tilde{X}$ is spent with respect to~$H$.
    \end{itemize}
    
    \noindent 
    Moreover, \Cref{clm:valid-expansion} implies that $\tilde{X}$ is in $\seenhat{\hat{H}}$, because a vertex was assigned to $\eta$ during a call to $\textsc{GrowBuffer}$ in which $\tilde{X}$ is processed. 
    We conclude that there is at least one more supernode in $\seenhat{\hat{H}}$ that is spent with respect to $H$ than those with respect to $\tilde{H}$. 
    Thus, we can apply the inductive hypothesis to conclude that $\tilde{v}$ is within distance $k\cdot \Delta/r$ of $T_\eta$. By \Cref{clm:expansion}, $v$ is within distance $\Delta/r$ of $\tilde{v}$, and so $v$ is within distance $(k+1) \cdot \Delta/r$ of $T_\eta$.

    \medskip\noindent
    \textbf{Base case ($k=0$)}: 
    In this case, we claim $\tilde{v}$ must be in $T_\eta$, and so $\dist_{\eta}(v, T_\eta) \le \Delta/r$. 
    Suppose otherwise. 
    Then $\tilde{v}$ is assigned by a call to $\textsc{GrowBuffer}$,
    and the argument above implies that there is at least one supernode in $\seenhat{\hat{H}}$ that is spent with respect to $H$. This contradicts our assumption that $k=0$.

    \medskip \noindent
    By \Cref{lem:seen-width}, there are at most $r-2$ supernodes in $\seenhat{\hat{H}}$, and so we conclude that every vertex in $\eta$ is within distance $\Delta$ of $T_\eta$.
    \end{proof}

We conclude:
\begin{theorem}
    Let $G$ be a $K_r$-minor-free graph, and let $\Delta$ be a positive number. Then $G$ admits a $(\Delta, \Delta/r, r-1)$-buffered cop decomposition.
\end{theorem}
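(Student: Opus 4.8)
The plan is to simply run the algorithm and then verify, clause by clause, that the partition tree $\cT$ output by $\textsc{BuildTree}(\varnothing, G)$ satisfies Definition~\ref{def:buffer-cop} with parameters $\Delta$, buffer $\gamma = \Delta/r$, and width $w = r-1$ --- all of the real work has already been done in the lemmas of this section.

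First I would check that the algorithm is well-defined, terminates, and genuinely produces a partition of $V(G)$ into vertex-disjoint \emph{connected} supernodes. Every recursive call to $\textsc{BuildTree}$ or $\textsc{GrowBuffer}$ strictly reduces the number of unassigned vertices (a newly initialized supernode contains a nonempty skeleton), so the recursion halts; Invariant~\ref{I:adj-assignment-weak} together with Claim~\ref{C:basic}(\ref{clm:future-assignment}) guarantees that upon termination every vertex has been assigned to exactly one supernode, and Claim~\ref{C:basic}(\ref{C:connected}) guarantees each supernode induces a connected subgraph. Thus $\cT$ is a buffered cop decomposition in the sense of the definition preceding Definition~\ref{def:buffer-cop}.

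It then remains to verify the four itemized properties. The \emph{supernode radius} property (radius at most $\Delta$) is exactly Lemma~\ref{lem:radius}. The \emph{shortest-path skeleton} property is Lemma~\ref{lem:shortest-path}, which gives an SSSP tree $T_\eta$ in $\dom(\eta)$ with at most $r-2 \le r-1 = w$ leaves. The \emph{tree decomposition} property is Lemma~\ref{lem:tree-decomposition}: the expansion $\hat{\cT}$ is a tree decomposition of $G$, and by Lemma~\ref{lem:seen-width} each supernode sees at most $r-2$ ancestor supernodes, so every bag $B_\eta$ contains at most $r-1 = w$ supernodes. Finally, for the \emph{supernode buffer} property, note that by construction of $\cT$ (child subtrees are attached to the root of the current tree) a supernode $X$ is an ancestor of $\eta$ in $\cT$ if and only if $X$ was initialized above $\eta$; hence Lemma~\ref{lem:buffer} applies and yields that whenever $\eta$ is not adjacent to $X$ in $G$, every $v \in \dom(\eta)$ satisfies $\dist_{\dom(X)}(v, X) > \Delta/r \ge \gamma$, which is exactly what the definition demands.

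There is no genuine obstacle here beyond bookkeeping; the only points requiring a moment's care are the off-by-one matchings ($r-2$ skeleton leaves and $r-1$ supernodes per bag, both bounded by $w = r-1$), the fact that the strict bound $>\Delta/r$ from Lemma~\ref{lem:buffer} implies the non-strict bound $\ge \gamma$ in Definition~\ref{def:buffer-cop}, and confirming once more that the ancestor relation in the partition tree $\cT$ coincides with the ``initialized above/below'' relation used throughout the analysis. Assembling these observations gives the theorem.
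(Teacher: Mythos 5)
Your proposal is correct and matches the paper's own treatment: the paper derives this theorem by simply combining Lemma~\ref{lem:shortest-path}, Lemma~\ref{lem:seen-width}, Lemma~\ref{lem:tree-decomposition}, Lemma~\ref{lem:buffer}, and Lemma~\ref{lem:radius}, exactly as you do. Your extra bookkeeping (termination, connectivity via Claim~\ref{C:basic}, the $r-2$ versus $r-1$ matching, and the strict-versus-non-strict buffer bound) is accurate and only makes the assembly more explicit.
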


%%%%%%%%%%%%%%%%%%%%%%%%%%%%%%%%%%%%%%%%%%%%%%%%%%%%%%%%%%%%
\section{Shortcut partition from buffered cop decomposition}
\label{sec:shortcut}

We first rephrase the definition of shortcut partition.  Let $G$ be a graph, let $\e$ be a number in $(0,1)$, and let $\cC$ be a partition of the vertices of $G$ into clusters of strong diameter $\e \cdot \diam(G)$.
Recall that the cluster graph $\check{G}$ is obtained from the original graph $G$ by contracting each cluster in $\cC$ into a single supernode. 
Let $P$ be an arbitrary path in $G$.
We define \EMPH{$\cost_{\cC}(P)$} to be the minimum hop-length of a path $\check{P}$ in $\check{G}$, where 
(1) $\check{P}$ is a path between the clusters containing the endpoints of $P$, and 
(2) $\check{P}$ only contains clusters with nontrivial intersection with $P$. 
When $\cC$ is clear from context, we omit the subscript and simply write \EMPH{$\cost(P)$}. 
Notice that $\cC$ is an $(\e, h)$-shortcut partition if, for every path $P$ in $G$, we have $\cost(P) \le \e h \cdot \Ceil{ \frac{\len{P}}{\e\cdot \diam(G)} }$; indeed, for any pair $u,v$ of vertices in $G$, applying this condition for any \emph{shortest} path $P$ between $u$ and $v$ 
yields $\cost(P) \le \e h \cdot \Ceil{ \frac{\len{P}}{\e\cdot \diam(G)} } = 
h \cdot \Ceil{ \frac{\delta_G(u,v)}{\e\cdot \diam(G)} }$, as required.

\medskip
In the rest of this section, we prove the following lemma:

\begin{lemma}
\label{lem:shortcut-cost}
    Let $G$ be a $K_r$-minor-free graph, and let $\Delta$ be a positive number. Then there is a partition $\cC$ of $G$ into connected clusters, such that (1) each cluster has strong diameter at most $4\Delta$,
    and (2) every path $P$ in $G$ with $\len{P} < \Delta/r$ has $\cost(P) \le r^{O(r)}$.
\end{lemma}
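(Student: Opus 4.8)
The plan is to derive the clustering $\cC$ from a buffered cop decomposition and then analyze short paths through the lens of the supernode buffer property. Concretely, first invoke the construction of \Cref{SS:buffered-cop} to obtain a $(\Delta, \Delta/r, r-1)$-buffered cop decomposition $\cT$ of $G$. Each supernode $\eta$ carries an SSSP skeleton $T_\eta$ with at most $r-2$ leaves, so $T_\eta$ decomposes into at most $r-2$ shortest paths emanating from its root. Cut each of these shortest paths into consecutive subpaths of length at most $\Delta$, partitioning $T_\eta$ into $O(r\cdot\mathrm{length}(T_\eta)/\Delta)$ connected \emph{skeleton pieces}, each of strong diameter at most $\Delta$. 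Assign every vertex $v$ of $\eta$ to the skeleton piece containing the vertex of $T_\eta$ closest to $v$ in $\eta$, breaking ties consistently; call the resulting vertex set a \emph{sub-supernode}. As in \Cref{C:basic}(\ref{C:connected}), a shortest path inside $\eta$ from $v$ to its closest skeleton vertex stays inside a single sub-supernode, so each sub-supernode is connected, and by the supernode-radius property it has strong diameter at most $\Delta + 2\Delta = 3\Delta$. The sub-supernodes refine $\cT$ into a tree; finally walk this refined tree top-down and greedily merge each sub-supernode into the (growing) cluster of its parent whenever the union still induces a subgraph of strong diameter at most $4\Delta$, and otherwise open a new cluster. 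This produces a partition $\cC$ into connected clusters of strong diameter at most $4\Delta$, establishing property (1).

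For property (2), fix a path $P$ with $\len{P} < \Delta/r$. The first step is to locate a \emph{topmost} supernode of $P$ in $\cT$: since $\hat{\cT}$ is a tree decomposition (\Cref{lem:tree-decomposition}), the domains of two incomparable supernodes are vertex-disjoint and have no edge between them, so a connected subpath of $G$ avoiding a supernode $\eta$ must lie entirely inside $\dom(\eta')$ for a single child $\eta'$ of $\eta$. Descending from the root, this yields a unique supernode $X$ with $P \subseteq \dom(X)$ and $P \cap X \neq \varnothing$; every supernode meeting $P$ is then $X$ or a descendant of $X$. The second step uses the buffer property: if $\eta$ is a descendant of $X$ meeting $P$ but not adjacent to $X$ in $G$, then by \Cref{lem:buffer} every vertex of $\dom(\eta)$ is at distance $> \Delta/r$ from $X$ in $\dom(X)$; but $P \subseteq \dom(X)$ meets both $X$ and $\dom(\eta)$, so a vertex of $P \cap \dom(\eta)$ is within $\len{P} < \Delta/r$ of $X$ inside $\dom(X)$, a contradiction. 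Hence every supernode meeting $P$ is either $X$ or a descendant of $X$ adjacent to $X$ in $G$; in particular, by construction of $\hat{\cT}$, the bag of every such supernode contains $X$.

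It remains to bound the number of clusters meeting $P$ by $r^{O(r)}$; since consecutive clusters along $P$ are adjacent in $\check{G}$, the sequence of clusters along $P$ is a walk in $\check{G}$ using only clusters intersecting $P$, so $\cost(P)$ is at most this number minus one. Within a single supernode $\eta$, the subpaths of $P$ inside $\eta$ have total length $< \Delta/r < \Delta$; a skeleton piece is a length-$\le\Delta$ subpath of a shortest path, and a vertex of any sub-supernode whose piece is touched by $P$ lies within $\Delta$ of that piece, so every touched piece has a point within $2\Delta$ of $P \cap \eta$. Because distances along a shortest path are additive, each of the $\le r-2$ shortest paths composing $T_\eta$ contributes only $O(1)$ such pieces, so $P$ meets $O(r)$ sub-supernodes, hence $O(r)$ clusters, of $\eta$. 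To finish, one bounds the number of supernodes meeting $P$, arguing recursively down $\cT$ starting from $X$: the subpaths of $P$ outside $X$ each lie in the domain of a child of $X$, and for each child $\eta$ that $P$ enters, the portion of $P$ inside $\dom(\eta)$ is again a union of subpaths of total length $< \Delta/r$ to which the same analysis applies. The number of children of $X$ that $P$ can enter, and the depth to which the recursion proceeds, are controlled by the width bound $r-1$ together with the $\textsc{GrowBuffer}$ step, which folds the region within $\Delta/r$ of $X$'s boundary into a bounded number of already-existing supernodes and thereby prevents a proliferation of tiny supernodes near $X$ that a short low-weight path could thread through. Unwinding an $O(r)$-deep recursion of width $O(r)$ gives $r^{O(r)}$ supernodes meeting $P$, hence $r^{O(r)}$ clusters, and the bound on $\cost(P)$ follows.

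The main obstacle is the last step: bounding the number of supernodes (and clusters) meeting a short path $P$. The subtlety is that the supernodes meeting $P$ are all adjacent to $X$ but incomparable ones are pairwise \emph{non}-adjacent, so $K_r$-minor-freeness alone does not bound their number; it is the combination of the $\textsc{GrowBuffer}$ merging behavior (small supernodes close to an already-built supernode are absorbed rather than created anew) and the width-$(r-1)$ property that must be leveraged, and it is to neutralize the remaining ``long thin chain of tiny supernodes'' configurations that the greedy merging into diameter-$4\Delta$ clusters is built into the construction. Making this recursion and its $r^{O(r)}$ accounting precise is the technical heart of the lemma.
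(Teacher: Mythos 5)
Your setup is on the right track and overlaps substantially with the paper: you build a $(\Delta,\Delta/r,r-1)$-buffered cop decomposition, cluster each supernode around its skeleton (the paper uses a $\Delta$-net on $T_\eta$ rather than length-$\Delta$ skeleton pieces, but both give connected clusters of strong diameter $\le 4\Delta$), you locate the topmost supernode containing $P$ in its domain (\Cref{clm:highest-supernode}), you invoke \Cref{lem:buffer} to conclude that every supernode met by $P$ is adjacent to that topmost supernode and hence carries it in its bag, and you bound by $O(r)$ the number of that supernode's clusters met by $P$ via the spacing of centers along the $\le r-1$ shortest paths of the skeleton (\Cref{C:split}). But the step you yourself flag as ``the technical heart'' is genuinely missing, and the route you sketch for it does not work. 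You propose to bound the \emph{number of supernodes (and clusters) intersected} by $P$ via a recursion of depth and width $O(r)$ in $\cT$, with the $\textsc{GrowBuffer}$ step and an ad hoc top-down merging of sub-supernodes supposedly preventing ``chains of tiny supernodes.'' Nothing in the decomposition bounds, as a function of $r$ alone, how many pairwise-incomparable descendants of the top supernode $X$ are adjacent to $X$; a path of length $<\Delta/r$ can weave along the boundary of a single cluster of $X$ and touch arbitrarily many distinct tiny supernodes (each adjacent to $X$, so the buffer property is silent), hence the quantity you aim to bound is not $r^{O(r)}$ in general. The merging step also has its own problem: a child supernode need not be adjacent to its parent in $G$ (it may have been cut off during $\textsc{GrowBuffer}$), so ``merge a sub-supernode into the cluster of its parent'' does not preserve connectedness; in any case this merging plays no role in the paper.

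What the paper does instead is bound $\cost(P)$ directly, exploiting that revisits to the \emph{same} cluster are free. It defines a path to be $k$-constrained if every supernode it meets has a bag with at most $k$ supernodes, and proves by induction on $k$ (\Cref{lem:cost}) that a $k$-constrained path of length $<\Delta/r$ has cost at most $(54r)^k$: split $P$ at a vertex of the topmost supernode $\eta$, decompose each half into at most $9r$ maximal segments $P_i$ whose two endpoints lie in the same cluster of $\eta$ (these get cost $0$, even though they may internally cross many other clusters --- this is exactly what your cluster-counting approach cannot absorb), interleaved with at most $9r$ segments avoiding $\eta$; since every supernode met by $P$ has $\eta$ in its bag and those interleaved segments do not meet $\eta$, they are $(k-1)$-constrained, and the recursion bottoms out after at most $r-1$ levels by the width bound (\Cref{lem:seen-width}). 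This ``constrainedness'' induction, with cost rather than cluster count as the measure, is the missing idea; without it your accounting cannot be completed.
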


We now show that \Cref{lem:shortcut-cost} implies \Cref{thm:shortcut-minor}, which we restate below.

\ShortcutMinor*
\begin{proof} 
Let $\cC$ be the partition guaranteed by \Cref{lem:shortcut-cost} for
$\Delta \coloneqq \frac{\e \cdot \diam(G)}{4}$. Every cluster of $\cC$ has strong diameter at most $4 \Delta = \e \cdot \diam(G)$. To prove that
$\cC$ is an $(\e, h)$-shortcut partition with $h = r^{O(r)}/\e = 2^{O(r \log r)}/\e$,
it suffices to show that $\cost(P) \le r^{O(r)} \cdot \left\lceil \frac{\len{P}}{\e \cdot \diam(G)} \right\rceil$, for an arbitrary path $P$ in $G$. 

We greedily partition $P$ into a sequence of $O \left( \left\lceil \frac{r \len{P}}{\Delta} \right \rceil\right)$ vertex-disjoint subpaths, where each subpath has length at most $\Delta/r$. That is, we can write $P$ as the concatenation $P_1 \circ e_1 \circ P_2 \circ e_2 \ldots \circ Q_\tau$ for some $ \tau = O \left( \left\lceil \frac{r \len{P}}{\Delta} \right \rceil\right)$, such that each $P_i$ has length at most $\Delta/r$. We can upper-bound the cost of $P$: 
$$\cost(P) \le \sum_{1 = 1}^\tau \cost(P_i) + \sum_{i=1}^{\tau-1} \cost(e_i).$$
Each edge has cost at most $1$, and (by \Cref{lem:shortcut-cost}) each subpath $P_i$ has cost at most $r^{O(r)}$. It follows that $\cost(P) \le r^{O(r)} \cdot \left\lceil \frac{\len{P}}{\e \cdot \diam(G)} \right\rceil$, which concludes the proof.
\end{proof}

\subsection{Construction} 

Let $\cT$ be a $(\Delta, \Delta/r, r-1)$-buffered cop decomposition for $G$. 
We partition each supernode $\eta$ into clusters as follows. Fix an arbitrary supernode $\eta$.

Let \EMPH{$N$} be a \EMPH{$\Delta$-net} of the skeleton $T_\eta$ of $\eta$, which is an SSSP tree in $\dom(\eta)$; that is, $N$ is a subset of vertices in $T_\eta$, such that (1) every vertex $v$ in $T_\eta$ satisfies $\dist_{T_\eta}(v, N) = \dist_{\dom(\eta)}(v,N) \le \Delta$, and (2) for every pair of vertices $x_1$ and $x_2$ in $N$, we have $\dist_{T_\eta}(x_1, x_2) 
= \dist_{\dom(\eta)}(x_1,x_2) > \Delta$. (The net $N$ can be constructed greedily.) For each net point in $N$, we initialize a separate cluster. 

We partition the rest of the vertices in $\eta$ based on their closest point in the net $N$. In more detail, consider each vertex $v$ in $\eta$ in increasing order of their distance to $N$.
Find the shortest path $P_v$ from $v$ to the closest point in $N$ (if there are multiple such paths, we fix $P_v$ arbitrarily). 
Let $v'$ be the vertex adjacent to $v$ in $P_v$. Set the cluster of $v$ to be the same as the cluster of $v'$. 
Observe that each cluster has a single net point in $N$, which we refer to as the \emph{center} of the cluster;
the centers of clusters constitute $N$.

\begin{lemma}
\label{C:minor-diameter}
    For each supernode, each of its clusters has strong diameter at most $4\Delta$.
\end{lemma}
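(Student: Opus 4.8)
\emph{Proof plan.}
The plan is to prove the stronger statement that for every cluster $C$, with center $c\in N$, and every vertex $v\in C$ we have $\dist_{G[C]}(v,c)\le 2\Delta$. This immediately yields the lemma: it shows that $G[C]$ is connected, and for any $v_1,v_2\in C$ the triangle inequality gives $\dist_{G[C]}(v_1,v_2)\le\dist_{G[C]}(v_1,c)+\dist_{G[C]}(c,v_2)\le 4\Delta$.

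The first step would be a uniform bound: every vertex $v$ of the supernode $\eta$ satisfies $\dist_\eta(v,N)\le 2\Delta$, where distances are measured inside $\eta$ (the setting of the construction, since we are partitioning the vertices of $\eta$). Indeed, the supernode radius property of the $(\Delta,\Delta/r,r-1)$-buffered cop decomposition (see \Cref{lem:radius}) gives $\dist_\eta(v,T_\eta)\le\Delta$, and the defining property of the $\Delta$-net gives $\dist_{T_\eta}(u,N)\le\Delta$ for every $u\in T_\eta$; since $T_\eta\subseteq\eta$, taking the vertex $u\in T_\eta$ closest to $v$ gives $\dist_\eta(v,N)\le\dist_\eta(v,u)+\dist_\eta(u,N)\le 2\Delta$. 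In particular each path $P_v$ selected by the construction has length $\dist_\eta(v,N)\le 2\Delta$.

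The heart of the proof is the inequality $\dist_{G[C]}(v,c)\le\dist_\eta(v,N)$, which I would establish by induction on the order in which the construction assigns vertices to clusters (equivalently, by strong induction on the finitely many values of $\dist_\eta(v,N)$). The base case is the net points: each $c\in N$ starts its own cluster as the center, and $\dist_{G[C]}(c,c)=0=\dist_\eta(c,N)$. For the inductive step take $v\notin N$, let $c^\ast$ be the closest net point to $v$ with $P_v$ a shortest $v$--$c^\ast$ path in $\eta$, and let $v'$ be the neighbor of $v$ on $P_v$. A short computation — using that edge weights are positive and that the $v'$--$c^\ast$ suffix of $P_v$ is itself a shortest path from $v'$ to $N$ — gives $\dist_\eta(v',N)=\dist_\eta(v,N)-\weight(v,v')<\dist_\eta(v,N)$, so $v'$ was processed strictly before $v$ and already lies in some cluster $C'$ with center $c'$; by construction $v$ joins $C'$, hence $C=C'$ and $c=c'$. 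Since $v,v'\in C$, the edge $vv'$ followed by a shortest $v'$--$c'$ path in $G[C']$ is a walk in $G[C]$, so by the induction hypothesis
\[
\dist_{G[C]}(v,c)\ \le\ \weight(v,v')+\dist_{G[C']}(v',c')\ \le\ \weight(v,v')+\dist_\eta(v',N)\ =\ \dist_\eta(v,N).
\]
Combined with the uniform bound from the first step, this gives $\dist_{G[C]}(v,c)\le 2\Delta$, and the lemma follows.

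The only subtle point — and the step I expect to need the most care — is the identity $\dist_\eta(v',N)=\dist_\eta(v,N)-\weight(v,v')$, which rests on the claim that the suffix of the chosen shortest path $P_v$ starting at $v'$ is again a shortest path from $v'$ to $N$: if $v'$ were strictly closer to $N$, prepending the edge $vv'$ would produce a $v$--$N$ path shorter than $P_v$, contradicting optimality, and the triangle inequality then forces the equality (in particular making $v'$ precede $v$ in the processing order). Everything else is routine bookkeeping. Note that $K_r$-minor-freeness is not used directly in this lemma; it enters only through the hypothesis that $\cT$ is a genuine buffered cop decomposition, via its supernode radius property.
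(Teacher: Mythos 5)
Your proposal is correct and follows essentially the same route as the paper: both argue by induction on $\dist_\eta(v,N)$ (equivalently, the processing order) that the cluster of $v$ contains a shortest path from $v$ to the net — you phrase this quantitatively as $\dist_{G[C]}(v,c)\le\dist_\eta(v,N)$ — and both then combine the supernode radius property with the $\Delta$-net property to get $\dist_\eta(v,N)\le 2\Delta$ and conclude the $4\Delta$ bound via the unique center. Your explicit justification of $\dist_\eta(v',N)=\dist_\eta(v,N)-\weight(v,v')$ is the same substructure fact the paper uses implicitly when invoking its induction hypothesis, so there is no substantive difference.
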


\begin{proof}
    Let $\eta$ be an arbitrary supernode and $N$ be the set of cluster centers of $\eta$.
    First, we claim by induction on $\dist_{\eta}(v, N)$ that for every  $v \in \eta$, the cluster $C_v$ that contains $v$ also contains a shortest path from $v$ to $N$.
    For the basis, the claim clearly holds if $v \in N$. For the induction step, suppose that $v \not \in N$.
    Let $P_v$ be the shortest path from $v$ to $N$ that is fixed in our construction, and let $v'$ be the vertex after $v$ in $P_v$. Hence, $v'$ is assigned to $C_v$. Since $\dist_{\eta}(v', N) < \dist_{\eta}(v, N)$, cluster $C_v$ contains a shortest path from $v'$ to $N$, denoted by $Q_v'$, by our induction hypothesis. Hence, the path $(v, v') \circ Q_{v'}$ is a shortest path from $v$ to $N$, which is contained in $C_v$. This completes the induction step.

    Consider an arbitrary cluster $C$ in $\eta$ and any vertex $v \in C$.
    By the supernode radius property, there is a vertex $v_T$ in $T_\eta$ such that $\dist_{\eta}(v, v_T) \le \Delta$; as $N$ is a $\Delta$-net, we have $\dist_{\eta}(v_T, N) \le \Delta$. 
    By the triangle inequality,
    $\dist_{\eta}(v, N) \le 2\Delta$.
    By the above claim, $C$ contains a shortest path from $v$ to $N$, and is thus of length at most $2\Delta$. 
    As $C$ contains a single cluster center,  the diameter of $C$ must be at most $4 \Delta$.
\end{proof}

We now bound the cost of a path $P$. We first prove that ``the highest supernode $\eta$ that $P$ intersects'' is well-defined, then show that $P$ intersects few clusters in $\eta$, and finally give an inductive argument to bound $\cost(P)$.

\begin{claim}
\label{clm:highest-supernode}
    Let $P$ be a path in $G$. 
    $P$ is contained in $\dom(\eta)$ for some supernode $\eta$ that $P$ intersects.
\end{claim}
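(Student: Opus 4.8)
The plan is to use the tree structure of the partition tree $\cT$ together with the fact that the domains $\dom(\eta)$ are nested along $\cT$. First I would observe that the domains form a laminar family: if $\eta_1$ is an ancestor of $\eta_2$ in $\cT$, then $\dom(\eta_2) \subseteq \dom(\eta_1)$ (this is immediate from the definition of $\dom$ as the union of supernodes in a rooted subtree), and if $\eta_1, \eta_2$ are incomparable in $\cT$ then $\dom(\eta_1)$ and $\dom(\eta_2)$ are vertex-disjoint. The disjointness is the key point and follows from the tree decomposition property (\Cref{lem:tree-decomposition}): in a tree decomposition, the bags containing any fixed vertex form a connected subtree, and since every supernode partitions $V$, each vertex lies in exactly one supernode; using part (2) of the tree-decomposition proof, any edge of $G$ has both endpoints in supernodes that are in an ancestor-descendant relationship, so no edge can join $\dom(\eta_1)$ to $\dom(\eta_2)$ for incomparable $\eta_1,\eta_2$. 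Since each of $\dom(\eta_1)$ and $\dom(\eta_2)$ is connected (it is a union of supernodes glued along the subtree, and each supernode is connected by \Cref{C:basic}(\ref{C:connected}); adjacent supernodes in $\cT$ share an edge by construction of the skeleton), and they are disjoint with no connecting edge, they are in different connected components of $G[\dom(\eta_1)\cup\dom(\eta_2)]$.

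Next I would let $\mathcal{A}$ be the set of supernodes $\eta$ such that $P$ is entirely contained in $\dom(\eta)$. This set is nonempty: the root $\rho$ of $\cT$ has $\dom(\rho) = G$ (every vertex is assigned, by \Cref{C:basic}(\ref{clm:grow-n-shrink}) and the fact that $\textsc{BuildTree}(\varnothing,G)$ assigns all vertices), so $\rho \in \mathcal{A}$. I claim $\mathcal{A}$ forms a downward-closed chain from the root: if $\eta \in \mathcal{A}$ and $\eta$ has two children $\eta_1,\eta_2$, then $\dom(\eta_1)$ and $\dom(\eta_2)$ are disjoint with no connecting edge, so the connected path $P$, being contained in $\dom(\eta) = \eta \,\cup\, \bigcup_i \dom(\eta_i)$, cannot have vertices in two distinct child-domains without passing through $\eta$ — but even so, $P\subseteq \dom(\eta_i)$ can hold for at most one child $\eta_i$. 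Hence among the children of any $\eta\in\mathcal{A}$, at most one lies in $\mathcal{A}$, so $\mathcal{A}$ is a root-to-node path in $\cT$. Let $\eta^\star$ be the lowest (deepest) node of this path.

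It remains to check that $P$ intersects $\eta^\star$ itself (not merely $\dom(\eta^\star)$). Suppose not: then every vertex of $P$ lies in $\dom(\eta^\star)\setminus\eta^\star = \bigcup_{\text{children }\eta_i\text{ of }\eta^\star}\dom(\eta_i)$. By the disjointness and lack of connecting edges between distinct child-domains, the connected path $P$ lies entirely inside $\dom(\eta_i)$ for a single child $\eta_i$ of $\eta^\star$. But then $\eta_i\in\mathcal{A}$, contradicting the maximality (lowest-ness) of $\eta^\star$ in the chain $\mathcal{A}$. Therefore $P$ does intersect $\eta^\star$, and $\eta \coloneqq \eta^\star$ witnesses the claim. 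The main obstacle, and the only place requiring care, is establishing cleanly that incomparable supernodes have disjoint domains joined by no edge of $G$ — but this is exactly (a mild repackaging of) the argument already given in the proof of \Cref{lem:tree-decomposition}, part (2), so it should go through without difficulty.
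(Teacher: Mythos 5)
Your proof is correct, but it takes a different route from the paper's. The paper works in the \emph{expansion} $\hat{\cT}$: it observes that for each supernode the bags containing it form a connected subtree, that adjacent supernodes share a bag, hence the bags meeting supernodes intersected by $P$ form a connected subtree $\hat{\cT}[P]$; taking the root bag $\bag{\eta}$ of that subtree immediately gives a supernode $\eta$ with $P\subseteq\dom(\eta)$ and $P\cap\eta\neq\varnothing$. You instead work directly in the partition tree $\cT$: you note that the domains form a laminar family, that incomparable supernodes have disjoint domains with no $G$-edge between them (this is the one substantive input, and it is exactly what part~(2) of the proof of \Cref{lem:tree-decomposition} establishes — every edge of $G$ joins supernodes in ancestor–descendant relation), and you take the deepest supernode $\eta^\star$ whose domain contains $P$; maximality forces $P$ to meet $\eta^\star$ itself. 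Both arguments are extremal-element arguments resting on the tree-decomposition property, but yours avoids the expansion $\hat{\cT}$ altogether and is arguably more self-contained, at the cost of spelling out the laminarity/no-crossing-edge facts that the paper gets for free from $\hat{\cT}$. One small caveat: your parenthetical claim that each $\dom(\eta_i)$ is connected because ``adjacent supernodes in $\cT$ share an edge by construction of the skeleton'' is not justified — a child in $\cT$ need not be adjacent in $G$ to its parent, since the parent can be cut off by $\textsc{GrowBuffer}$ before the child is initialized — but this connectivity is never actually used in your argument (all you need is that the connected path $P$ cannot cross between disjoint child domains without an edge), so the proof stands with that remark deleted.
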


\begin{proof}
    Let $\mathcal{\hat{T}}$ denote the expansion of the partition tree $\cT$. 
    For every supernode $\eta_1$, the tree decomposition property implies that the set of bags in $\mathcal{\hat{T}}$ containing $\eta_1$ induces a connected subtree of $\mathcal{\hat{T}}$, which we denote $\mathcal{\hat{T}}[\eta_1]$.
    Further, for every pair of supernodes $\eta_1$ and $\eta_2$ that are adjacent in $G$, there is some bag shared by both $\mathcal{\hat{T}}[\eta_1]$ and $\mathcal{\hat{T}}[\eta_2]$; it follows that $\mathcal{\hat{T}}[\eta_1] \cup \mathcal{\hat{T}}[\eta_2]$ is a connected subtree of $\mathcal{\hat{T}}$.
    As $P$ is connected, the set of bags containing \emph{any} supernode that $P$ intersects induces a connected subtree of $\mathcal{\hat{T}}$, which we denote $\mathcal{\hat{T}}[P]$.
    
    Let $\bag{\eta}$ denote the root bag of the subtree $\mathcal{\hat{T}}[P]$, where $\bag{\eta}$ is in one-to-one correspondence with the supernode $\eta$ in $\cT$. Bag $\bag{\eta}$ contains the supernode $\eta$, as well as some supernodes above $\eta$ in $\cT$. Path $P$ does not intersect any supernode above $\eta$, as otherwise $\mathcal{\hat{T}}[P]$ would include a bag above $\bag{\eta}$. Thus, $P$ is in $\dom(\eta)$. Further, $P$ intersects some supernode in $\bag{\eta}$, so $P$ intersects $\eta$.
\end{proof}

\begin{claim}
\label{C:split}
    If $P$ is a path of length less than $\Delta$, and $\eta$ is a supernode such that $P$ is contained in $\dom(\eta)$, then $P$ intersects at most $9 r$ clusters in $\eta$.
\end{claim}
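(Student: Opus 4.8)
The plan is to reduce the claim to a counting statement about \emph{net points}. Each cluster of $\eta$ has a unique center lying in the $\Delta$-net $N\subseteq T_\eta$, so it suffices to bound the number of distinct centers $a\in N$ for which the cluster $C_a$ satisfies $V(P)\cap C_a\ne\varnothing$ (if $P$ misses $\eta$ entirely the claim is trivial, so fix a reference vertex $w\in V(P)\cap V(\eta)$). First I would observe, via the induction in the proof of \Cref{C:minor-diameter}, that for any vertex $v$ assigned to a cluster with center $a$ the cluster contains a shortest $v$-to-$N$ path ending at $a$, so $a$ is a closest net point to $v$ and $\dist_\eta(v,a)=\dist_\eta(v,N)$. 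Combining the supernode radius property ($\dist_\eta(v,T_\eta)\le\Delta$, \Cref{lem:radius}) with the net-covering property ($\dist_{T_\eta}(\cdot,N)\le\Delta$) gives $\dist_\eta(v,a)=\dist_\eta(v,N)\le 2\Delta$. Picking for each relevant center $a$ a witness $v_a\in V(P)\cap C_a\subseteq V(\eta)$, and using that $P\subseteq\dom(\eta)$ has length $<\Delta$ so that $\dist_{\dom(\eta)}(v_a,w)<\Delta$, I get $\dist_{\dom(\eta)}(a,w)\le\dist_{\dom(\eta)}(a,v_a)+\dist_{\dom(\eta)}(v_a,w)<2\Delta+\Delta=3\Delta$ for every relevant center $a$.

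Next I would count how many such centers can fit into this $3\Delta$-ball around $w$. Root $T_\eta$ at the source vertex of its SSSP skeleton; by \Cref{lem:shortest-path} it has at most $r-2$ leaves, hence at most $r-2$ root-to-leaf paths, and every net point lies on at least one of them. Fix a root-to-leaf path $R$ and let $a_1,\dots,a_m$ be the relevant centers on $R$, ordered along $R$. Since they are colinear on a path of the tree, $\dist_{T_\eta}(a_1,a_m)=\sum_{i=1}^{m-1}\dist_{T_\eta}(a_i,a_{i+1})$; the net-separation property gives $\dist_{T_\eta}(a_i,a_{i+1})>\Delta$, so $\dist_{T_\eta}(a_1,a_m)>(m-1)\Delta$. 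On the other hand, the net definition asserts $\dist_{T_\eta}(a_1,a_m)=\dist_{\dom(\eta)}(a_1,a_m)\le\dist_{\dom(\eta)}(a_1,w)+\dist_{\dom(\eta)}(w,a_m)<6\Delta$. Hence $(m-1)\Delta<6\Delta$, so $m\le 6$. Summing over the at most $r-2$ root-to-leaf paths, the number of relevant centers is at most $6(r-2)\le 9r$, which is exactly the bound claimed.

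The main obstacle I expect is the mismatch of metrics: the net separation is phrased in the tree metric $\dist_{T_\eta}$, whereas the "all centers lie near $w$" fact naturally lives in the host metric $\dist_{\dom(\eta)}$ (tree distances can be much larger than host distances, so one cannot simply transfer the ball bound). The resolution is precisely the identity $\dist_{T_\eta}(x,y)=\dist_{\dom(\eta)}(x,y)$ for net points $x,y$ guaranteed by the construction of $N$, which lets the two estimates be chained; a secondary point to get right is that the relevant centers on a single root-to-leaf path are genuinely colinear, so that consecutive net-gaps add up along $R$. Everything else is triangle-inequality bookkeeping, and the case where $P$ does not meet $\eta$, or where $T_\eta$ is a single vertex, is handled trivially.
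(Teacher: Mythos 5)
Your proof is correct and takes essentially the same route as the paper: both decompose the SSSP skeleton into at most $r-2$ (resp.\ $r-1$) shortest root-to-leaf paths, let the $\Delta$-net separation make distances between centers on a common path accumulate additively, and play this off against the bound $\dist_\eta(v,\text{center})\le 2\Delta$ (the paper uses the cruder $4\Delta$ cluster diameter) together with $\|P\|<\Delta$. The only cosmetic difference is that you count centers directly within a $3\Delta$-ball around a reference vertex of $P$, while the paper argues by contradiction via pigeonhole (ten centers on one skeleton path would force $\|P\|\ge\Delta$); both give the claimed $9r$ bound.
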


\begin{proof}
    Suppose for contradiction that $P$ satisfies the conditions of the claim, yet it intersects at least $9r + 1$ clusters in $\eta$. 
    By the shortest-path skeleton property, the skeleton $T_\eta$ of $\eta$ is an SSSP tree in $\dom(\eta)$ with at most $r-1$ leaves, thus the vertices of $T_\eta$ can be partitioned into at most $r-1\le r$ shortest paths. Since each cluster in $\eta$ has its center chosen from one of at most $r$ shortest paths,
    $P$ intersects at least $10$ clusters with centers in the same shortest path, denoted by $Q$. 
    Let $C_u$ (respectively, $C_v$) be the first (resp., last) cluster (among those with centers in $Q$) that $P$ intersects, let $u$ (resp., $v$) be an intersection point of $P$ and $C_u$ (resp., $C_v$),  
    and let $c_u$ (resp., $c_v$) be the center of $C_u$ (resp., $C_v$). 
    Since $Q$ is a shortest path in $\dom(\eta)$ that intersects at least 10 centers and as the distance between any two cluster centers is at least $\Delta$, we have $\dist_{\dom(\eta)}(c_u, c_v) \geq 9\Delta$. By the triangle inequality, we have:
    \begin{equation*}
    \begin{aligned}
        ||P|| ~\geq~ \dist_{\dom(\eta)}(u, v) ~\geq~ \dist_{\dom(\eta)}(c_u, c_v) - \underbrace{(\dist_{\dom(\eta)}(c_u, u) + \dist_{\dom(\eta)}(c_v, v))}_{\leq 4\Delta + 4\Delta \quad\text{by \Cref{C:minor-diameter}}} 
        ~\geq~ 9\Delta -  8\Delta ~\ge~ \Delta,
    \end{aligned}
    \end{equation*}
    yielding a contradiction.
\end{proof}

We say that a path $P$ is \EMPH{$k$-constrained} if, for every supernode $\eta$ that $P$ intersects, there are at most $k$ supernodes in the bag $\bag{\eta}$ corresponding to $\eta$. 

\begin{lemma}
\label{lem:cost}
    Let $P$ be a $k$-constrained path with $\len{P} < \Delta/r$. Then $\cost(P) \le (54r)^{k}$.
\end{lemma}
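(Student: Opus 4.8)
The plan is to prove the lemma by strong induction on $k$. It is cleanest to prove the slightly more general statement that the bound holds for \emph{every} $K_r$-minor-free graph, \emph{every} $(\Delta,\Delta/r,r-1)$-buffered cop decomposition of it, and the associated cluster partition constructed in this section; this is exactly what lets the inductive step recurse into a sub-decomposition. Fix a $k$-constrained path $P$ with $\len P<\Delta/r$, and let $\eta$ be a supernode with $P\subseteq\dom(\eta)$ and $\eta\cap P\neq\varnothing$ (Claim~\ref{clm:highest-supernode}); we may take $\eta$ to be the highest supernode $P$ intersects.

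The heart of the argument is the following observation, which is precisely where the supernode buffer property is used: \emph{every supernode $\mu$ that $P$ intersects satisfies $\eta\in\bag\mu$.} Indeed, if $\mu\neq\eta$ then $\mu$ is a strict descendant of $\eta$; if moreover $\mu$ were not adjacent to $\eta$, the supernode buffer property would give $\dist_{\dom(\eta)}(v,\eta)\ge\Delta/r$ for every $v\in\dom(\mu)\supseteq\mu$, but $P$ contains a vertex of $\mu$ and a vertex of $\eta$ and lies inside $\dom(\eta)$, so the subpath of $P$ joining them has length $<\Delta/r$ -- a contradiction. Hence $\mu$ is adjacent to $\eta$, so $\eta\in\bag\mu$. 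In particular, when $k=1$ this forces $P$ to intersect only $\eta$ (any other supernode would push its bag size to $\ge 2$), so $P\subseteq\eta$; by Claim~\ref{C:split} $P$ meets at most $9r$ clusters of $\eta$, and these clusters induce a connected subgraph of $\check G$, so $\cost(P)\le 9r-1\le 54r$, settling the base case.

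For the inductive step $k\ge 2$, I use that $\dom(\eta)\setminus\eta$ is the disjoint union of the domains $\dom(\mu')$ over the children $\mu'$ of $\eta$, and that these domains are pairwise non-adjacent in $G$ (each is contained in a distinct connected component that arose when $\eta$ was processed during the construction). Thus $P$ splits into maximal subpaths that alternate between runs inside $\eta$ and runs inside a single child domain $\dom(\mu')$. Each such child-run $R$ has length $<\Delta/r$ and lies in $\dom(\mu')$; the restriction of $\cT$ to the subtree rooted at $\mu'$ is a valid $(\Delta,\Delta/r,r-1)$-buffered cop decomposition of the induced subgraph $\dom(\mu')$ with the inherited cluster partition, and in this restriction $R$ is $(k-1)$-constrained: every supernode $R$ meets has $\eta$ in its bag by the observation above, while $\eta$ lies outside the subtree, so the bag loses at least one element upon restriction. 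Since every cluster meeting $R$ lies in the subtree, $\cost(R)$ is unchanged by the restriction, so the induction hypothesis gives $\cost(R)\le(54r)^{k-1}$.

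Finally I assemble a short walk in $\check G$ from the start-cluster of $P$ to its end-cluster. The walk induced by $P$ alternates between blocks of $\eta$-clusters and blocks of child-clusters; I replace each child-block by the path of length $\le(54r)^{k-1}$ guaranteed above (paying two extra hops to reach the neighbouring $\eta$-clusters on either side), then shortcut the $\eta$-blocks: since $P$ meets at most $9r$ clusters of $\eta$ (Claim~\ref{C:split}), merging blocks that revisit a common $\eta$-cluster leaves at most $9r$ blocks, each shortenable to a path of length $\le 9r-1$. Counting the $\le 9r$ surviving $\eta$-blocks, the $\le 9r-1$ bridges between consecutive ones (each of length $\le(54r)^{k-1}+2$), and the two optional end-bridges (each $\le(54r)^{k-1}+1$) gives $\cost(P)\le 2\big((54r)^{k-1}+1\big)+9r(9r-1)+(9r-1)\big((54r)^{k-1}+2\big)=(9r+1)(54r)^{k-1}+81r^2+9r\le(54r)^k$ for $r\ge 1$ and $k\ge 2$, completing the induction. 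The main obstacle -- and the reason the induction on $k$ genuinely closes -- is the observation that every supernode met by a short path $P$ is adjacent to its topmost supernode $\eta$: this is what makes the recursive subpaths $R$ truly $(k-1)$-constrained, and it is the only place the new deterministic buffer property is invoked. The rest is the somewhat fiddly bookkeeping needed to keep the number of $\eta$-cluster blocks bounded by $O(r)$, so that the recursion multiplies the cost by only $O(r)$ per level.
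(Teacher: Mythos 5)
Your proof is correct and rests on the same three pillars as the paper's: locating the top supernode $\eta$ with $P \subseteq \dom(\eta)$ (\Cref{clm:highest-supernode}), invoking the supernode buffer property (\Cref{lem:buffer}) to conclude that every supernode met by $P$ has $\eta$ in its bag, and using \Cref{C:split} to cap the number of $\eta$-clusters at $9r$, so that each level of the induction multiplies the cost by only $O(r)$. Where you diverge is in how the induction is formalized and assembled. The paper splits $P$ at a vertex of $\eta$ and decomposes each half into subpaths $P_i$ (endpoints in a common $\eta$-cluster, cost $0$) and $P_{i:i+1}$ (disjoint from $\eta$), then recurses on the $P_{i:i+1}$ \emph{within the same instance}, declaring them $(k-1)$-constrained; under the literal definition of $k$-constrained the bags of the supernodes met by $P_{i:i+1}$ do not actually shrink (they merely all contain $\eta$, which $P_{i:i+1}$ avoids), so the paper's step implicitly uses a relative notion. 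Your route sidesteps this: by proving the statement for \emph{every} buffered cop decomposition and recursing into the restriction of $\cT$ to a child subtree on $G[\dom(\mu')]$, the bag of each supernode genuinely loses $\eta$, so ``$(k-1)$-constrained'' holds verbatim; the price is the (routine, but asserted rather than proved) verification that the restricted subtree with inherited clusters is again a valid instance, that child domains are pairwise non-adjacent (this follows from \Cref{C:basic}(\ref{clm:grow-n-shrink}) and the fact that recursive calls act on distinct components of unassigned vertices), and that $\cost$ does not increase under restriction. Your assembly also differs -- alternating $\eta$-runs and child-runs with a block-merging argument to keep the number of surviving blocks and bridges at most $9r$, rather than the paper's prefix-indexed $P_i$/$P_{i:i+1}$ scheme -- and your arithmetic $(9r+1)(54r)^{k-1}+81r^2+9r \le (54r)^k$ checks out for $k \ge 2$. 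In short: same approach and same use of the buffer property, but your restriction-based induction is a cleaner and more literally rigorous closing of the inductive step, at the cost of slightly more setup and a terser bookkeeping argument that would need to be spelled out (in particular the merging of $\eta$-blocks sharing a cluster and the claim that no merged block is revisited).
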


\begin{proof}
    The proof is by induction on $k$. The basis is trivially true, as only the empty path is $0$-constrained.
    We next prove the induction step.
    By \Cref{clm:highest-supernode},
    there is some supernode $\eta$ that $P$ intersects, such that $P$ is in $\dom(\eta)$.
    Choose an arbitrary vertex $v_{\eta}\in P\cap \eta$ and split $P$ at $v_\eta$ into two subpaths, $P'$ and $P''$; $v_{\eta}$ is an endpoint of both  $P'$ and $P''$. We will show $\cost(P') \le 27r \cdot (54r)^{k-1} = \frac{1}{2} (54r)^{k}$, so by symmetry, $\cost(P) = \cost(P') + \cost(P'') \le (54r)^{k}$.

\begin{figure}[ht!]
    \centering
    \includegraphics[width=0.7\textwidth]{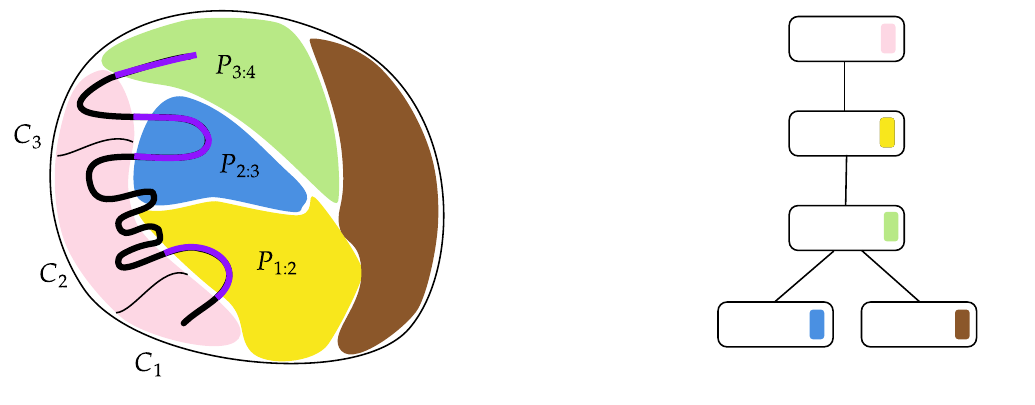}
    \caption{Path $P'$ with subpaths 
    $P_{1:2}$, $P_{2:3} $, and $P_{3:4} $
    highlighted in purple, where $\eta$ is the pink supernode.  }
    \label{fig:subpaths}
\end{figure}

    We partition $P'$ into a sequence of vertex-disjoint subpaths $P_1, P_{1:2}, P_2, P_{2:3}, \ldots$ as follows.
    Let \EMPH{$\cC[\eta]$} denote the set of clusters in $\eta$ that $P'$ intersects.
    Define \EMPH{$C_1$} to be the cluster (in $\cC[\eta]$) that contains $v_\eta$, define \EMPH{$P_1$} to be the maximal prefix of $P'$ that ends in a vertex in $C_1$, and define $\EMPH{$P[C_1:]$} \coloneqq P' \setminus P_1$ to be the suffix of $P'$ starting after $P_1$.
    For all $i \ge 1$ with $C_i \neq \varnothing$, we recursively define (see \Cref{fig:subpaths}):
    \begin{itemize}
        \item Define \EMPH{$C_{i+1}$} to be the first cluster in $\cC[\eta]$ that $P[C_i:]$ intersects. If $P[C_i:]$ intersects no clusters in $\cC[\eta]$, then define $C_{i+1} \coloneqq \varnothing$.
        
        \item Define \EMPH{$P_{i:i+1}$} to be the maximal prefix of $P[C_i:]$ that contains no vertex in $C_{i + 1}$. 
        If $C_{i+1} = \varnothing$, then set $P_{i:i+1} \coloneqq P[C_i:]$. Notice that $P_{i:i+1}$ may be the empty path if the first vertex on $P[C_i:]$ is in~$C_{i+1}$.
        
        \item Define \EMPH{$P_{i+1}$} to be the maximal subpath of $P[C_i:]$ with both endpoints in $C_{i+1}$. Notice that $P_{i+1}$ starts immediately after $P_{i:i+1}$.
        
        \item Define \EMPH{$P[C_{i+1}:]$} to be the suffix of $P[C_{i}:]$ that starts after $P_{i+1}$.
        Notice that $P[C_{i+1}:]$ contains no vertices in $C_{i+1}$.
    \end{itemize}
    By \Cref{C:split}, $\cC[\eta]$ contains at most $9r$ clusters%
    \footnote{\Cref{C:split} is stronger than what we need here}; thus, there are at most $9r$ subpaths $P_i$ and $9r$ subpaths $P_{i : i+1}$ defined by the above procedure. There are at most $18r$ edges in $P'$ that connect the subpaths.
    The cost of $P'$ is bounded by the sum of costs of the subpaths as well as the edges between the subpaths.
    \begin{itemize}
        \item Each edge has cost at most 1.
        \item Each subpath $P_i$ has cost 0, as either $P_i$ is empty or its endpoints are in the same cluster.
        \item As we argue next, \emph{each subpath $P_{i:i+1}$ has cost at most $(54r)^{k-1}$}.

        Observe that every supernode $\eta'$ that $P'$ intersects has $\eta$ in its bag $\bag{{\eta'}}$. Indeed, if $\bag{\eta'}$ did not contain $\eta$, then $\eta'$ and $\eta$ would not be adjacent by definition; as $\eta$ is above $\eta'$ in $\cT$ (by definition of $\eta$), the supernode buffer property implies that $\len{P} \ge \delta_{\dom(\eta)}(\eta',\eta) \ge \Delta/r$, a contradiction.
        Further, notice that $P_{i : i+1}$ does not intersect $\eta$. 
        Thus, as $P'$ is $k$-constrained, each subpath $P_{i:i+1}$ is $(k-1)$-constrained. 
        The inductive hypothesis implies that $\cost(P_{i:i+1}) \le (54r)^{k-1}$, as argued.
    \end{itemize}
    Since $k \ge 1$, we conclude that
    \begin{align*}
      \cost(P') ~\le~ 18r \cdot 1 + 9r \cdot 0 + 9r \cdot (54r)^{k-1} 
      ~\le~ 27r \cdot (54r)^{k-1}.
    \end{align*}
This proves the lemma.
\end{proof}

\noindent Noting that every path is $(r-1)$-constrained (as every bag contains at most $r-1$ supernodes), \Cref{C:minor-diameter} and \Cref{lem:cost} prove \Cref{thm:shortcut-minor}.

% \newpage

\section{Other Applications}\label{sec:other-apps}

In this section, we describe several applications of our shortcut partition mentioned in the introduction.
We will start with two direct applications (\Cref{sec:direct}) and then proceed to 
the application on the
embedding of apex-minor-free graphs (\Cref{sec:embedapex}).

\subsection{Direct Applications} \label{sec:direct}

\paragraph{Tree cover.} 
The authors of \cite{CCLMST23} provided a construction of $(1+\eps)$-tree cover for minor-free graphs via shortcut partition. Their definition of $(\e, h)$-shortcut partition is slightly different than our~\Cref{def:shortcut-partition}; it is strictly weaker. Recall that the low-hop property of \Cref{def:shortcut-partition} guarantees:
\begin{quote}
    For any pair of vertices $u$ and $v$ in $G$, there exists some shortest path $\pi$ in $G$ between $u$ and $v$, and a path $\check{\pi}$ in the cluster graph $\check{G}$ such that (in addition to other properties) $\check{\pi}$ only contains clusters that intersect $\pi$.
\end{quote}
The definition of~\cite{CCLMST23} (Definition 2.1) instead only guarantees: 
\begin{quote}
    For any pair of vertices $u$ and $v$ in $G$, there exists some path $\pi'$ in $G$ between $u$ and $v$ \ul{with $\len{\pi'} \le (1+\e)\delta_G(u,v)$}, and a path $\check{\pi}$ in the cluster graph $\check{G}$ such that (in addition to other properties) $\check{\pi}$ only contains clusters that intersect $\pi'$.
\end{quote}
This is a weaker guarantee. 
Also, as mentioned already, the definition of \cite{CCLMST23}
states that the hop-length of $\check{\pi}$ is at most $h$, regardless of $\dist_G(u,v)$, while our definition takes $\dist_G(u,v)$ into account,
allowing smaller hop-lengths for smaller distances.
Consequently, the shortcut partition we construct in~\Cref{thm:shortcut-minor} can be used in their construction of tree cover.

The tree cover construction of~\cite{CCLMST23} consists of two steps.%
\footnote{The more efficient tree cover construction for \emph{planar graphs} in \cite{CCLMST23} does not follow the two-step framework; instead, the authors exploited planarity to get a better (and indeed polynomial) dependency on $\eps$ in the size of the cover.} 
The first step 
is a reduction from a tree cover with multiplicative distortion $(1+\eps)$ to a tree cover with additive distortion $+\eps\Delta$, where $\Delta$ is the diameter, with a loss of a $O(\log(1/\eps))$ factor to the cover size. 
In the second step,
it is shown that an $(\eps,h)$-shortcut partition for minor-free graphs implies a tree cover of size $2^{O(h)}$ with additive distortion $+\eps \Delta$. 
Their result can be summarized as follows.

\begin{lemma}[Lemma 1.7 and Theorem 1.8 in~\cite{CCLMST23}]
\label{lem:cclmst-treecover}
    Let $G$ be a minor-free graph, and $\e \in (0,1)$. 
    If every subgraph of $G$ admits an $(\e, h)$-shortcut partition, then $G$ has a $(1+\eps)$-tree cover of size $2^{O(h)}$. 
\end{lemma}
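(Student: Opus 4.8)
The plan is to obtain this lemma by composing the two components of~\cite{CCLMST23} that the discussion above has already isolated, so that the only genuinely new step is a definitional check. Concretely, I would split the proof into (i) a \emph{scale reduction}: if every subgraph of $G$ admits a tree cover of size $s$ with \emph{additive} distortion $+\e'\cdot\diam(\cdot)$, then $G$ admits a $(1+\e)$-tree cover of size $O(s\cdot\log(1/\e))$; and (ii) an \emph{additive construction}: an $(\e,h)$-shortcut partition of $G$ and of all its subgraphs yields a tree cover of size $2^{O(h)}$ with additive distortion $+\e\cdot\diam(G)$. Chaining (ii) into (i) with $\e'=\Theta(\e^2/f(r))$ gives a $(1+\e)$-tree cover of size $2^{O(h)}\cdot O(\log(1/\e)) = 2^{O(h)}$, which is the claim.

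For (ii) I would form the cluster graph $\check G$ of the shortcut partition, build a tree cover of $\check G$ that captures, for every pair of clusters, the promised hop-$O(h)$ path between them --- this is where the $2^{O(h)}$ branching enters, since one enumerates the clusters that an $O(h)$-hop walk can visit --- and then recurse inside each cluster, whose strong diameter is a factor $\e$ smaller. The tree serving a pair $u,v$ concatenates a within-cluster recursive tree from $u$ to the boundary of its cluster, a hop-bounded path in $\check G$, and a within-cluster recursive tree into the cluster of $v$; the additive error telescopes, the $O(h)$ clusters along the $\check G$-path each contributing recursive additive error of order $\e\cdot(\e\cdot\diam)$, which sums to $O(\e^2 h\cdot\diam)=O(\e\cdot\diam)$ for fixed $r$. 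The size multiplies by $2^{O(h)}$ per recursion level, and the recursion is truncated once the current piece's diameter falls below the target resolution $\e\cdot\diam(G)$, so only $O(\log(1/\e))$ levels occur and the total size stays $2^{O(h)}$. Step (i) is the standard additive-to-multiplicative passage: decompose $G$ into a laminar family of pieces across geometric diameter scales, observe that each pair $u,v$ is served within relative error $O(\e)$ at the single scale whose piece diameter is $\Theta(\delta_G(u,v)/\e)$, and (via an idea such as bundling pieces at a common scale that are pairwise far apart into a single tree) keep the dependence on $\log(1/\e)$ rather than on the aspect ratio.

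The one point that is new here, as opposed to a verbatim citation, is to verify that our \Cref{def:shortcut-partition} is \emph{no weaker} than the notion \cite{CCLMST23} actually use in their construction: their low-hop property only requires a $(1+\e)$-approximate path $\pi'$ whose clusters support the hop-$\le h$ walk $\check\pi$, whereas ours demands this for an exact shortest path and additionally lets the hop-length shrink with $\delta_G(u,v)$. Since an exact shortest path is in particular a $(1+\e)$-approximate path, and a distance-dependent hop bound implies the uniform bound $h$, the shortcut partition of \Cref{thm:shortcut-minor} meets every hypothesis their construction invokes; their two-step argument therefore applies without change, and we may conclude by citing \cite[Lemma~1.7 and Theorem~1.8]{CCLMST23}. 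I expect the main obstacle in the underlying argument to be the recursion in step (ii): simultaneously keeping the number of levels at $O(\log(1/\e))$ rather than $O(\log(\text{aspect ratio}))$ and keeping the distortion additive rather than incurring a multiplicative $(1+\e)$ factor per level --- both handled by the scale truncation and the telescoping sketched above.
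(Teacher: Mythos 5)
Your proposal is correct and takes essentially the same route as the paper: the paper does not reprove the statement but, exactly as in your final paragraph, checks that \Cref{def:shortcut-partition} is at least as strong as Definition~2.1 of \cite{CCLMST23} (an exact shortest path is in particular an approximate one, and the distance-dependent hop bound implies the uniform bound $h$), and then cites their two-step construction --- the multiplicative-to-additive reduction losing $O(\log(1/\e))$ and the additive $2^{O(h)}$-size cover from a shortcut partition. Your additional sketch of how those cited components are proved internally is not needed for (and not part of) the paper's argument, but it does not affect the correctness of the citation-based proof.
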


\noindent
By \Cref{thm:shortcut-minor}, any $K_r$-minor-free graph has an $(\e, r^{O(r)}/\e)$-shortcut partition. 
This proves~\Cref{thm:tree-cover}.

\paragraph{Distance oracle.} 
As discussed in \Cref{subsec:app-intro}, we construct our distance oracle from our tree cover (\Cref{thm:tree-cover}): given a query pair $(u,v)$, query the distance $d_T(u,v)$ in $T$,
for each tree $T$ in the tree cover $\mathcal{T}$,
and return $\min_{T\in \mathcal{T}}d_T(u,v)$. Distance query in a tree is reduced to a lowest common ancestor (LCA) query. 
In the RAM model, there are LCA data structures \cite{HT84,BF00} with $O(n)$ space and $O(1)$ query time. 
In the pointer machine model, this can be carried out
with $O(n)$ space and $O(\log\log n)$ query time \cite{VanLeeuwen76}. 
\Cref{thm:app-distance-oracle} now follows.

\subsection{Embedding of apex-minor-free graphs} \label{sec:embedapex}

The authors of \cite{CCLMST23} show that any \emph{planar} graph $G$ with diameter $\Delta$ can be embedded into a graph of treewidth $O(\e^{-4})$ with distortion $+\e\Delta$. 
Their argument uses three properties of planar graphs, which carries over to any minor-free graph with these properties (\Cref{lem:cclmst-treewidth}). Loosely speaking, they show that if (P1) $G$ has an $(\e, h)$-shortcut partition, (P2) $G$ has an $+\e\Delta$ forest cover $\cF$ for $G$ that ``interacts nicely'' with the shortcut partition (\Cref{lem:embedding-properties}), and (P3) $G$ has the \emph{local-treewidth} property,
%(\Cref{lm:diameter-treewidth}), 
then $G$ can be embedded into a graph of treewidth $O(h \cdot \abs{\cF})$ with distortion $+\e \Delta$.

\Cref{thm:shortcut-minor} gives us a shortcut partition for minor-free graphs and hence (P1). Apex-minor-free graphs have the local treewidth property~\cite{Eppstein00,DH04,DH04B} (see \Cref{lm:diameter-treewidth} below) and hence satisfy (P3). The main goal of this section is to show (P2) (by proving~\Cref{lem:embedding-properties}), and we do so by applying the framework of~\cite{CCLMST23} to our shortcut partition to construct an appropriate forest cover.

\begin{lemma}[Diameter-treewidth property~\cite{DH04B}]\label{lm:diameter-treewidth} Let $G$ be a graph excluding a fixed apex graph as a minor. Let $D$ be its (unweighted) diameter. Then $\tw(G) = O(D)$. 
\end{lemma}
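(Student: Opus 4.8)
The plan is to prove the statement in the contrapositive-flavored way: assuming $G$ has no $A$-minor, where $A$ is the fixed excluded apex graph with $|V(A)|=r$, show $\tw(G)=O(D)$. It is convenient to first replace $A$ by a more structured excluded minor. Since $A$ is apex, $A-a$ is planar for some vertex $a$, and by standard planar grid-drawing bounds $A-a$ is a minor of the $t\times t$ grid $\Gamma_t$ for some $t=\mathrm{poly}(r)$; hence $A$ is a minor of $\widehat\Gamma_t := \Gamma_t$ with one extra vertex made adjacent to everything. By transitivity of the minor relation, every $A$-minor-free graph is $\widehat\Gamma_t$-minor-free, so it suffices to show that any $\widehat\Gamma_t$-minor-free graph has treewidth $O(D)$. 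Equivalently, it suffices to exhibit, in any graph of treewidth $\gg D$, a $\widehat\Gamma_t$-minor --- which is nothing but a $t\times t$ grid minor together with a connected subgraph, disjoint from the grid's branch sets, adjacent to all $t^2$ of them.

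Next I would run a BFS from an arbitrary vertex to obtain layers $L_0,\dots,L_D$; there are no edges between non-consecutive layers, and every BFS ball $L_{\le i}$ is connected. A standard gluing lemma for tree decompositions along such a layering reduces bounding $\tw(G)$ by $O(D)$ to bounding, by an absolute constant, the treewidth of $G$ restricted to any window $L_i\cup\cdots\cup L_{i+w}$ of $w=O(1)$ consecutive layers. Given such a window, contract the connected ball $L_{<i}$ to a single vertex $a$ and delete $L_{>i+w}$: this yields a minor of $G$ --- hence still $\widehat\Gamma_t$-minor-free --- of diameter $O(w)=O(1)$, in which $a$ is adjacent to all of $L_{i+1}$. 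If this window had treewidth above a suitable constant $c'(r)$, the polynomial grid-minor theorem would produce a grid minor of side $\gg t$ (and $\gg$ the diameter); using the tiny diameter, equivalently the wealth of short paths routing through $a$, I would collapse a widely spaced $t\times t$ sub-grid together with a single connected ``apex'' subgraph avoiding it into a $\widehat\Gamma_t$-minor --- a contradiction. Hence every window has bounded treewidth, and the gluing lemma gives $\tw(G)=O(D)$. (This is also where the published proof proceeds differently and more robustly: feed the bounded-diameter $\widehat\Gamma_t$-minor-free window into the Robertson--Seymour structure theorem, which decomposes it by clique-sums into pieces almost-embeddable in a bounded-genus surface with boundedly many apices and vortices; bounded diameter then forces each piece --- and the whole window --- to have bounded treewidth.)

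I expect the routing step in the grid-minor argument to be the main obstacle. A large grid minor alone does \emph{not} contain an apex-over-grid minor --- a planar grid is precisely the obstruction --- so one must genuinely use the bounded diameter to manufacture the apex, and this forces a careful choice of sub-grid so that the (short) routing paths to its branch sets are disjoint from the sub-grid and can be merged into one connected set; controlling the pairwise intersections of these paths is the combinatorial heart of the matter, and it is exactly what the structure theorem handles cleanly in the cited proof of Demaine and Hajiaghayi. A secondary but real technical point is the gluing lemma itself: since a single layer can be large, one cannot naively concatenate the tree decompositions of overlapping windows, and keeping the glued width within a constant factor requires exploiting both the connectivity of BFS balls and the absence of long-range edges.
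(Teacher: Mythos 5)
This lemma is not proved in the paper at all: it is the linear local treewidth theorem for apex-minor-free graphs, quoted directly from Demaine and Hajiaghayi \cite{DH04B}, whose proof goes through the Robertson--Seymour structure theorem. Your proposal is therefore an attempt at a new, essentially elementary proof of a known but substantial theorem, and as written it has two genuine gaps, both of which you yourself flag but do not close. First, the ``standard gluing lemma'' you invoke --- that if every window of $O(1)$ consecutive BFS layers (with the inner ball contracted to an apex) has treewidth bounded by a constant, then $\tw(G)=O(D)$ --- is not a standard lemma. It is, in essence, the statement that bounded local treewidth implies \emph{linear} local treewidth for minor-closed families, which was an open problem of Eppstein and is precisely the contribution of the cited paper; the only proof available (then and, to my knowledge, in any form you could cite) routes through the structure theorem rather than any direct concatenation of window decompositions. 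Naive gluing fails, as you note, and nothing in your sketch replaces it, so the hardest part of the theorem is hidden inside a step labeled ``standard'' and ``secondary.''

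Second, the routing step --- turning a huge grid minor in a bounded-diameter, $\widehat\Gamma_t$-minor-free graph into a grid-plus-apex minor --- is the heart of Eppstein's bounded (not linear) local treewidth theorem, and you explicitly leave it unresolved: the truncated short paths from a candidate apex vertex may all terminate on the same few branch sets, and controlling this requires a real argument, which in the literature is again handled via structural decompositions. Your preliminary reductions are fine (replacing the apex graph $A$ by a grid with a universal vertex, and contracting the BFS ball to get a constant-diameter minor), but with both central steps missing the proposal is a plan rather than a proof, and even the plan's division of labor understates where the difficulty lies: filling in step one at the claimed strength would already reprove the cited theorem. If you only need the lemma for this paper, the honest move is to do what the authors do and cite \cite{DH04B} (or the later layered-treewidth results of Dujmović, Morin, and Wood, which also rely on the structure theorem), rather than suggest an elementary derivation exists.
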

We note that the big-O in \Cref{lm:diameter-treewidth} hides the dependency on the size of the minor; it is the Robertson-Seymour constant. As a result, the dependency on the minor of our \Cref{thm:add-apex-minor} also has a Robertson-Seymour constant. 
  
\medskip
We recall the construction of the~\cite{CCLMST23} tree cover for completeness. 
A forest $F$ is \EMPH{dominating} if $d_F(u,v)\geq d_G(u,v)$ for every two vertices $u,v\in V(G)$. In the definition of the tree cover for $G$, we allow \emph{Steiner} vertices, i.e., which do not belong to $G$, in a tree. Here the forest we construct will contain no Steiner vertices, meaning that $V(F)\subseteq V(G)$. In this case, we say that $F$ is \EMPH{Steiner-free}. 
Let $\mathcal{C}$ be a clustering of $G$. 
Let $\check{G}$ be the \emph{cluster graph} obtained from $G$ by contracting clusters in $\mathcal{C}$; $\check{G}$ is a simple unweighted graph. 
Let $\check{F}$ be a forest, subgraph of $\check{G}$ such that every tree in $\check{F}$ is rooted at some node. We define the \EMPH{star expansion} of $\check{F}$ to be a forest $F$ of $G$ obtained by applying the following to each tree $\check{T}\in \check{F}$ (see \Cref{fig:forest-correspodence}):
\begin{quote}
    Let $V_T$ denote the set of vertices (of $G$) that belong to clusters in $\check{T}$. Choose an arbitrary vertex $r$ in the cluster that is the root of $\check{T}$. Let   $T$ be a star rooted at $r$ connected to every vertex in $V_T$. We assign each edge $(r,u)$ of $T$ a weight $d_G(r,u)$. We then add $T$ to $F$.
\end{quote}

\begin{figure}[h!]
    \centering
    \includegraphics[width=0.9\textwidth]{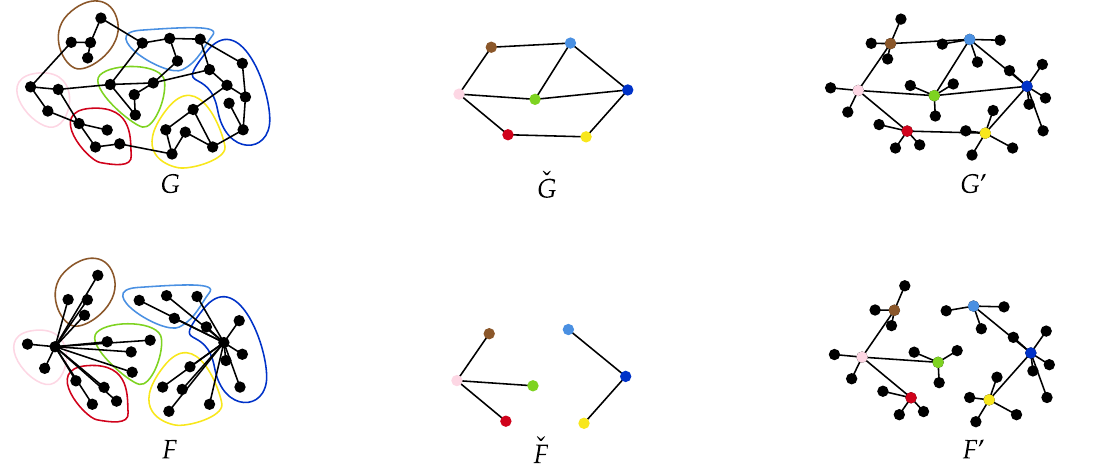}
    \caption{An illustration of $G$, $\check{G}$, $G'$, $F$, $\check{F}$, and $F'$. }
    \label{fig:forest-correspodence}
\end{figure}

Let $\check{\cF}$ be a collection of rooted forests of $\check{G}$, each of which is a subgraph of $\check{G}$, called a \emph{spanning forest cover}. The \emph{star expansion} of  $\check{\cF}$ is a collection of rooted forests, denoted by $\mathcal{F}$, obtained by taking star expansion of each rooted forest $\check{F}$ in $\check{\cF}$.  We note that a forest in $\check{\cF}$ might not be a subgraph of $G$, but it is Steiner-free. (That is, each forest might contain Steiner edges---edges not in $G$---but not Steiner vertices.) The following lemma was proven in~\cite{CCLMST23}.

\begin{lemma}[Adapted from Theorem 2.2 and Theorem 2.5 in  \cite{CCLMST23}] 
\label{lem:refined-treecover} 
Let $G$ be an edge-weighted minor-free graph with diameter $\Delta$, and let $\e \in (0,1)$. 
Suppose that $G$ has an $(\e, h(\e))$-shortcut partition $\mathcal{C}$. 
Let $\check{G}$ be the cluster graph obtained from $G$ by contracting clusters in $\mathcal{C}$.  
Then there exists a spanning forest cover $\check{\cF}$ of $\check{G}$ such that its star expansion $\cF$ satisfies the following properties: 

\begin{itemize}
       \item \textnormal{[Root preservation.]} For every pair of vertices $u$ and $v$ in $G$, there is a tree $T$ in some forest of $\mathcal{F}$ such that (1) $\dist_T(u,v) \le \dist_G(u,v) + \e \Delta$, and (2) a shortest path from $u$ to $v$ in $T$ passes through the root of $T$.
       \item  \textnormal{[Size.]} $\cF$ contains $2^{O(h(\e))}$ forests. 
    \end{itemize}
\end{lemma}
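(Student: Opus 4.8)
The plan is to follow the construction of~\cite{CCLMST23} underlying their Theorems~2.2 and~2.5, checking that it goes through with the (stronger) notion of shortcut partition used in this paper. The argument splits into a reduction --- showing that it suffices to build a suitable spanning forest cover of the cluster graph $\check{G}$ --- and the construction of that cover.

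\textbf{Reduction.} I claim it suffices to construct a spanning forest cover $\check{\cF}$ of $\check{G}$ with $2^{O(h)}$ forests such that: for every pair of vertices $u,v\in G$, some tree $\check{T}$ in some forest of $\check{\cF}$ contains both the cluster $C_u$ of $u$ and the cluster $C_v$ of $v$, and the root cluster $C^*$ of $\check{T}$ meets a shortest $u$--$v$ path $\pi$ in $G$. Granting this, the star expansion $\cF$ of $\check{\cF}$ has the root-preservation property. Let $T$ be the star obtained from $\check{T}$, with root vertex $r$ chosen once and for all in $C^*$; property~(2) is automatic since $T$ is a star, so every $u$--$v$ path in $T$ passes through $r$. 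For property~(1), pick a point $p\in C^*\cap\pi$; since $p$ lies on a shortest $u$--$v$ path and $C^*$ has strong diameter at most $\e\cdot\diam(G)=\e\Delta$, the triangle inequality gives
\[
\dist_T(u,v)\;=\;\dist_G(r,u)+\dist_G(r,v)\;\le\;2\dist_G(r,p)+\dist_G(p,u)+\dist_G(p,v)\;\le\;2\e\Delta+\dist_G(u,v).
\]
Running the construction with parameter $\e/2$ then yields additive error $+\e\Delta$; as $h(\e/2)=O(h(\e))$ in the relevant regime, the size bound $|\cF|=|\check{\cF}|=2^{O(h)}$ is unaffected. (The same computation works verbatim if ``$\pi$ is a shortest path'' is weakened to ``$\len{\pi}\le(1+\e)\dist_G(u,v)$'', which is why the construction is also compatible with the weaker shortcut partition of~\cite{CCLMST23}, using $\dist_G(u,v)\le\Delta$.)

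\textbf{Construction of $\check{\cF}$.} Since $\diam(G)=\Delta$, every pair $u,v$ has $\dist_G(u,v)\le\Delta$, so by the low-hop property there is a path $\check{\pi}$ in $\check{G}$ from $C_u$ to $C_v$ of hop-length at most $2h$ all of whose clusters meet a shortest $u$--$v$ path $\pi$. Taking a middle cluster $D^*$ of $\check{\pi}$, we have $\dist_{\check{G}}(D^*,C_u),\dist_{\check{G}}(D^*,C_v)\le h$, and $D^*$ meets $\pi$. Hence it is enough to build $\check{\cF}$ so that for every cluster $D$ of $\check{G}$, some forest of $\check{\cF}$ has a tree rooted at $D$ whose vertex set contains the hop-radius-$h$ ball of $\check{G}$ around $D$. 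Since $\check{G}$ is obtained from $G$ by contracting clusters, it is itself $K_r$-minor-free and unweighted, so one can produce such a family of spanning forests by the hierarchical covering of $\check{G}$ at hop-scale $O(h)$ used in~\cite{CCLMST23}; tracking the number of forests gives the $2^{O(h)}$ bound. I would present this following~\cite{CCLMST23} and re-verify only the points that interact with our definition --- that the hop bound is the one in \Cref{def:shortcut-partition} (so hop-balls of radius $\Theta(h)$ capture all relevant cluster pairs), and that property~(2) of the low-hop condition is stated with respect to an exact shortest path (so the root $D^*$ genuinely meets $\pi$).

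\textbf{Main obstacle.} The delicate point is the interaction between the two demands on $\check{\cF}$: each covering tree must simultaneously capture a hop-ball of radius $\Theta(h)$ in $\check{G}$ \emph{and} be rooted at a cluster that lies (spatially, in $G$) on the shortest path of every pair it serves --- the second demand being forced because edge weights in $G$ are unbounded, so proximity in $\check{G}$-hop-distance says nothing about proximity in $\dist_G$. Reconciling this with a forest count of $2^{O(h)}$, rather than the $\Omega(n)$ one gets by naively rooting at an endpoint $C_u$, is the technical heart of the argument, and is precisely where the structure of the shortcut hop-paths together with the minor-freeness of $\check{G}$ is used. The remaining ingredients --- the triangle-inequality verification above, the $\Delta$-net/star-expansion bookkeeping, and the observation that $\check{G}$ is $K_r$-minor-free --- are routine.
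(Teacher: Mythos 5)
The paper never actually proves \Cref{lem:refined-treecover}: it imports it from \cite{CCLMST23} as a black box, and the only content supplied locally is (a) the observation that \Cref{def:shortcut-partition} is strictly stronger than the shortcut-partition definition of \cite{CCLMST23}, so their Theorems 2.2/2.5 apply verbatim, and (b) the remark that part (2) of root preservation is immediate because $\cF$ is a star expansion. Your ``Reduction'' paragraph is exactly that content: property (2) is free for stars, and the triangle inequality through a vertex $p$ of the root cluster lying on the (exact or $(1+\e)$-approximate) $u$--$v$ path gives $\dist_T(u,v)\le \dist_G(u,v)+2\e\Delta$, after which one rescales $\e$. Up to there you are aligned with the paper's treatment.

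The problem is your ``Construction of $\check{\cF}$''. The sufficient condition you reduce to --- for every cluster $D$, some forest of $\check{\cF}$ contains a tree rooted at $D$ whose vertex set covers the hop-radius-$h$ ball around $D$ --- is unattainable with $2^{O(h)}$ forests, so it cannot be the route of \cite{CCLMST23}. Concretely, if $\check{G}$ has a cluster $D_0$ adjacent to $m$ other clusters (hop-diameter $2$, so $h$ can be constant), every tree covering the radius-$h$ ball of one of those $m$ clusters must contain $D_0$; trees within one spanning forest are vertex-disjoint, so each forest hosts at most one such tree, forcing $m=\Omega(n)$ forests. The actual construction cannot prescribe a fixed hub per pair (such as the midpoint cluster of its low-hop path); the freedom to serve a pair from some cluster intersecting its low-hop path, organized hierarchically, is what makes a $2^{O(h)}$ count possible --- and that is exactly the counting you yourself flag as the ``technical heart'' and defer to the citation. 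So, judged as a standalone proof, the proposal is incomplete at precisely that step, and the intermediate ball-covering reduction should be deleted because it is false as a target; judged the way the paper handles the lemma (citation plus the star-forest remark and the definitional comparison), your reduction paragraph already suffices and the construction sketch adds nothing but the flawed claim.
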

We remark that the second half of the root preservation property was not explicitly stated in~\cite{CCLMST23}; however, it is immediate from the fact that $\cF$ is a set of star forests.
 
\medskip
We now show the following structural result for apex-minor-free graphs (see \Cref{fig:forest-correspodence} for illustration).

\begin{lemma}
\label{lem:embedding-properties}
    Let $G$ be an edge-weighted graph with diameter $\Delta$ excluding a fixed apex graph as a minor, and let $\e \in (0,1)$. 
    Then there is a partition $\cC$ of the vertices of $G$ into clusters with strong diameter $\e \Delta$, and a set $\mathcal{F}$ of $2^{O(1/\e)}$ forests with the same vertex set as $G$, that satisfy the following properties:
    \begin{itemize}
        \item \textnormal{[Low-hop.]} For every pair of vertices $u$ and $v$, there is a path between $u$ and $v$ that intersects at most $h$ clusters, for some $h = O(1/\e)$.
        \item \textnormal{[Root preservation.]} For every pair of vertices $u$ and $v$ in $G$, there is a tree $T$ in some forest of $\mathcal{F}$ such that (1) $\dist_T(u,v) \le \dist_G(u,v) + \e \Delta$, and (2) a shortest path from $u$ to $v$ in $T$ passes through the root of $T$.
    \end{itemize}
    For each cluster $C$ in $\cC$, choose an arbitrary vertex $v_C$ to be the \emph{center vertex}, and define the \emph{star} $S_C$ to be a star connecting $v_C$ to every other point in $C$. 
    Define \EMPH{$G'$} to be the graph obtained by replacing every supernode $C$ in cluster graph $\check{G}$ with the star $S_C$: every edge between two clusters in $\check{G}$ is replaced with an edge in $G'$ between the centers of the two clusters. Notice $G'$ has the same vertex set as $G$.
    
    \begin{itemize}
        \item \textnormal{[Contracted treewidth.]} 
        Graph $G'$ has treewidth $O(h)$.
        \item \textnormal{[Forest correspondence.]} For every forest $F$ in $\mathcal{F}$, there is a corresponding spanning forest $F'$ (a subgraph of $G'$) such that: For every tree $T$ in $F$, there is a tree $T'$ in $F'$ such that $V(T) = V(T')$ and $\mathrm{root}(T) = \mathrm{root}(T')$.
    \end{itemize}
\end{lemma}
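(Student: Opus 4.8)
The plan is to obtain $\cC$ and $\cF$ by combining \Cref{thm:shortcut-minor} with \Cref{lem:refined-treecover}, and then read off the two structural properties [Contracted treewidth] and [Forest correspondence] from the construction. Since $G$ excludes a fixed apex graph $A$ as a minor and $A$ is a minor of $K_{|V(A)|}$, the graph $G$ is $K_r$-minor-free for $r = |V(A)| = O(1)$; hence \Cref{thm:shortcut-minor} gives an $(\e, h_0)$-shortcut partition $\cC$ with strong diameter $\e\Delta$ and $h_0 = 2^{O(r\log r)}/\e = O(1/\e)$. Feeding this \emph{same} $\cC$ into \Cref{lem:refined-treecover} produces a spanning forest cover $\check{\cF}$ of $\check{G}$ whose star expansion $\cF$ has size $2^{O(h_0)} = 2^{O(1/\e)}$, has the same vertex set as $G$, and satisfies the [Root preservation] property verbatim. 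For [Low-hop], apply the low-hop guarantee of \Cref{def:shortcut-partition} to an arbitrary pair $u,v$: it yields a path $\check{\pi}$ in $\check{G}$ of hop-length $O(h_0)$ between the clusters containing $u$ and $v$, and lifting $\check{\pi}$ through the clusters it visits (each connected) produces a path in $G$ meeting at most $h \coloneqq O(h_0) = O(1/\e)$ clusters.

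It remains to handle the two properties concerning $G'$. For [Contracted treewidth], note that $G'$ is obtained from $\check{G}$ by attaching to each center $v_C$ the remaining vertices of cluster $C$ as pendant leaves. Now $\check{G}$ is a minor of $G$, hence excludes $A$ as a minor; and attaching pendant vertices to an apex-minor-free graph keeps it apex-minor-free --- after replacing $A$ by a $2$-connected apex graph $A^*$ having $A$ as a minor (which $G$, being $A$-minor-free, still excludes), no pendant leaf can lie in a branch set of a model of $A^*$ in $G'$, so any such model would restrict to a model of $A^*$ in $\check{G}$, a contradiction. Thus $G'$ is apex-minor-free, so by the diameter-treewidth property \Cref{lm:diameter-treewidth} it suffices to bound the \emph{unweighted} diameter of $G'$. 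For vertices $x,y$ of $G'$ (whose vertex set is $V(G)$), the shortcut partition again supplies a path $\check{\pi}$ in $\check{G}$ of hop-length $O(h_0)$ between the clusters of $x$ and $y$; since any two vertices of a star $S_C$ are within $2$ hops (through $v_C$) and each edge of $\check{G}$ is an edge between centers in $G'$, we can lift $\check{\pi}$ to a walk in $G'$ from $x$ to $y$ of $O(h_0)$ hops. Hence $G'$ has unweighted diameter $O(h_0) = O(h)$, and $\tw(G') = O(h)$.

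For [Forest correspondence], fix $F \in \cF$, the star expansion of some $\check{F} \in \check{\cF}$, and let $\check{T}$ be a tree of $\check{F}$ rooted at a cluster $C_{\mathrm{root}}$. By construction the corresponding tree $T$ of $F$ is the star rooted at the vertex $r \in C_{\mathrm{root}}$ chosen in the star expansion, whose vertex set $V_T$ is the set of all vertices lying in clusters of $\check{T}$. Build $T'$ in $G'$ as follows: on the centers $\{v_C : C \in \check{T}\}$ take the tree with an edge $v_C v_{C'}$ for each edge $CC'$ of $\check{T}$ (a subgraph of $G'$ isomorphic to $\check{T}$), and then for each $C$ in $\check{T}$ add all star edges of $S_C$, attaching the remaining vertices of $C$ as leaves. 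The result $T'$ is a subtree of $G'$ with vertex set exactly $V_T$; re-rooting it at $r \in V_T$ gives $V(T') = V(T)$ and $\mathrm{root}(T') = \mathrm{root}(T)$. Since the trees of $\check{F}$ are vertex-disjoint, so are the sets $V_T$ and the trees $T'$; collecting these $T'$ over all trees of $\check{F}$ yields the desired spanning forest $F'$ of $G'$.

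The main obstacle is [Contracted treewidth]: one must verify that the ``star blow-up'' $G'$ stays apex-minor-free --- a routine but non-trivial pendant-vertex argument --- and, the real crux, bound its unweighted diameter by $O(1/\e)$, which is exactly where the shortcut partition enters, via the fact that each cluster-star has radius $1$ in $G'$ and inter-cluster edges join centers. Only then does \Cref{lm:diameter-treewidth} apply. The remaining items are bookkeeping on top of \Cref{thm:shortcut-minor} and \Cref{lem:refined-treecover}; the one subtlety is to fix a \emph{single} shortcut partition $\cC$ (and the $\check{\cF}$ it induces) at the start, so that all of [Low-hop], [Root preservation], [Contracted treewidth] and [Forest correspondence] speak about the same $\cC$, $\cF$, and $G'$.
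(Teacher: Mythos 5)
Your proposal is correct and, for three of the four properties (low-hop, root preservation, forest correspondence), it is essentially the paper's own argument: invoke \Cref{thm:shortcut-minor} to get a single $(\e,O(1/\e))$-shortcut partition $\cC$, feed it to \Cref{lem:refined-treecover}, and map each $\check{F}$ into $G'$ by replacing clusters with stars and cluster-edges with center--center edges. The one place you genuinely diverge is [Contracted treewidth]. The paper applies \Cref{lm:diameter-treewidth} to $\check{G}$ only (which is a minor of $G$, hence excludes the same apex graph, and has unweighted diameter at most $h$ by the low-hop property), and then extends the resulting tree decomposition to $G'$ directly: each non-center vertex is a pendant leaf, so one adds a bag $\set{u,v}$ per leaf, which cannot increase the width. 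You instead argue that $G'$ itself is apex-minor-free --- by passing to a $2$-connected apex graph $A^*$ containing the excluded apex graph $A$ as a minor and pruning pendant leaves out of any $A^*$-model --- bound the unweighted diameter of $G'$ by $O(1/\e)$ via the cluster path plus two star hops, and then apply \Cref{lm:diameter-treewidth} to $G'$ directly. Both routes work; the paper's is more elementary in that it needs no minor-model argument and no auxiliary graph $A^*$ (and sidesteps the question of whether $G'$ excludes $A$ itself, which may fail if $A$ has a degree-one vertex). Two small points to tighten in your write-up: the claim ``no pendant leaf can lie in a branch set of a model of $A^*$'' is too strong as stated --- a leaf can sit in a branch set together with its unique neighbor; the correct statement is that leaves can be deleted from branch sets (a singleton-leaf branch set would force a vertex of degree at most one in $A^*$, contradicting $2$-connectivity), after which the model lives in $\check{G}$. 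Also, you should note explicitly that a $2$-connected apex graph $A^*$ with $A \preceq A^*$ always exists (e.g., triangulate $A \setminus \set{a}$ and join the apex $a$ to all of it); this is routine but is the hinge of your treewidth step.
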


\begin{proof}
    By \Cref{thm:shortcut-minor}, there is a partition $\cC$ of $G$ that is an $(\e, h)$-shortcut partition, for $h = O(1/\e)$. By \Cref{lem:refined-treecover}, we can construct a spanning forest cover $\check{\mathcal{F}}$ of $\check{G}$ and its star expansion $\cF$ satisfying the root preservation property. Furthermore, $\cF$ has $2^{O(1/\eps)}$ forests. We now show the other three properties of the theorem.

    \medskip\noindent
    \textbf{[Low-hop.]} The low-hop property follows directly from the statement of \Cref{thm:shortcut-minor}.

    \medskip\noindent
    \textbf{[Contracted treewidth.]} Let $\check{G}$ denote the graph obtained by contracting every cluster in $\cC$ into a supernode. 
    By the low-hop property, the (unweighted) diameter of $\check{G}$ is at most $h$. Graph $\check{G}$ excludes the same minors as $G$, so \Cref{lm:diameter-treewidth} implies that $\check{G}$ has treewidth $O(h)$. 
    Now notice that we can obtain $G'$ from $\check{G}$ by creating new (degree-1) vertices and adding an edge between each new vertex and a supernode in $\check{G}$.
    We construct a tree decomposition for $G'$, starting from the tree decomposition for $\check{G}$: 
    For each new vertex $v$ attached to an existing supernode $u$, create a bag containing $u$ and $v$, and add it as a child to an arbitrary bag containing $u$ in the tree decomposition of $\check{G}$. 
    This procedure does not change the treewidth; it is still $O(h)$.

    \medskip\noindent
    \textbf{[Forest correspondence.]} By definition, each forest $F\in \cF$ is a star expansion of a forest $\check{F} \in \check{\cF}$. To get the forest correspondence property, we simply transform  $\check{F}$ into a forest $F'$ on $G'$ in the natural way: For each tree $\check{T}$ in $\check{F}$, replace every vertex $C$ in $\check{T}$ with the corresponding star $S_C$ in $G'$, and replace every edge in $\check{T}$ with the corresponding edge between star centers in $G'$. We claim that $F'$ is a forest. Indeed, this transformation maps each tree $\check{T}$ to a tree $T'$ in $G'$, because $T'$ is connected and has one more vertex than it has edges. Recall that for each tree $\check{T}\in \check{F}$, there is a corresponding tree $T\in F$, which is obtained by star expansion. By construction, $T$ and $T'$ has the same vertex set. We then can set the root of $T'$ to be the same as the root of $T$.

    Further, the trees $T'$ are vertex disjoint because the clusters $\cC$ are vertex disjoint, and there are no edges between the trees $T'$. Thus, $F'$ is a spanning forest of $G'$.
\end{proof}

The following reduction is implicit in~\cite{CCLMST23}.%
\footnote{There are small differences in the phrasing of \cite{CCLMST23}. 
(1) They do not explicitly state the \emph{contracted treewidth} property; rather, they prove it as a lemma using properties of planar graphs. 
(2) They do not explicitly state the \emph{forest correspondence} property; rather, they state a different condition (the ``disjoint cluster'' condition), that they then use to prove the forest correspondence property in a lemma. The ``disjoint cluster'' condition is used only to prove the forest correspondence property.}

\begin{lemma}[\cite{CCLMST23}, Section 7]
\label{lem:cclmst-treewidth}
    Let $G$ be an edge-weighted minor-free graph with diameter $\Delta$, and let $\e$ be a number in $(0,1)$. 
    Suppose there is a partition $\cC$ of $G$ into clusters of strong diameter $\e \Delta$, together with a set of forests $\cF$, such that $\cC$ and $\cF$ satisfy  
    the low-hop property with parameter $h$, the root preservation property, 
    the contracted treewidth property, and the forest correspondence property. 
    Then $G$ can be embedded deterministically into a graph with treewidth $O(h \cdot \abs{\cF})$ and distortion $+O(\e \Delta)$.
\end{lemma}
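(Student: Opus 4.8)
The plan is to realize the target host graph $H$ \emph{explicitly} on the vertex set $V(G)$ itself — so the embedding is the identity map and uses no Steiner vertices — by taking the graph $G'$ of \Cref{lem:embedding-properties} and adding a small number of extra edges, then to certify $\tw(H)=O(h\cdot\abs{\cF})$ from the tree decomposition supplied by the \emph{contracted treewidth} property augmented via the \emph{forest correspondence} property, and finally to bound the distortion by routing every query through a single star of $\cF$ obtained from the \emph{root preservation} property. Concretely, I would let $H$ be $G'$ together with, for every forest $F\in\cF$ and every tree $T\in F$ with root $r$ (recall each such $T$ is a star centered at $r$, by the construction of $\cF$ in \Cref{lem:refined-treecover}), the edge $(r,u)$ of weight $\dist_G(r,u)$ for every leaf $u$ of $T$. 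No new vertices are created, and since $G'$ is connected, $\dist_H$ is finite on all pairs.

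For the distortion I would prove the two one-sided bounds. Every edge of $H$ joins two vertices $x,y$ of $G$ and has weight at least $\dist_G(x,y)$: a star edge $(r,u)$ has weight exactly $\dist_G(r,u)$, and each edge of $G'$ carries weight at least the corresponding $G$-distance by construction of $G'$; hence $\dist_H(x,y)\ge\dist_G(x,y)$ for all pairs, by the triangle inequality. For the reverse direction, given $u$ and $v$ the \emph{root preservation} property yields a star $T\in F$ for some $F\in\cF$ with $\dist_T(u,v)\le\dist_G(u,v)+\e\Delta$ whose shortest path from $u$ to $v$ passes through the center $r$ of $T$; since $T$ is a star, $\dist_T(u,v)=\dist_G(u,r)+\dist_G(r,v)$, and since the two edges $(u,r)$ and $(r,v)$ lie in $H$ we obtain $\dist_H(u,v)\le\dist_G(u,r)+\dist_G(r,v)=\dist_T(u,v)\le\dist_G(u,v)+\e\Delta$. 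Thus $H$ embeds $G$ with additive distortion $+\e\Delta=+O(\e\Delta)$.

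For the treewidth I would start from a tree decomposition $\mathcal{D}'$ of $G'$ of width $O(h)$, which exists by the \emph{contracted treewidth} property, and augment it as follows: for each forest $F\in\cF$ the \emph{forest correspondence} property supplies a spanning forest $F'$ of $G'$ whose trees match the stars of $F$ in vertex set and root; I then add, to every bag $B$ of $\mathcal{D}'$ and for every vertex $w\in B$, the root of the unique tree of $F'$ that contains $w$. The width is then $O(h)+\abs{\cF}\cdot O(h)=O(h\cdot\abs{\cF})$, because a single bag meets at most $\abs{B}=O(h)$ trees of any fixed spanning forest $F'$ and so gains at most $O(h)$ vertices per forest. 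I expect the main obstacle to be verifying that this augmented object is genuinely a tree decomposition of $H$. Edge coverage of the new star edges is immediate: if $u$ is a leaf of a tree $T'$ of $F'$, then every bag containing $u$ now also contains $\mathrm{root}(T')$. The delicate point is the connectivity condition — that the bags containing any fixed vertex induce a connected subtree. For an original vertex of $G'$ this is inherited from $\mathcal{D}'$. For a vertex $r$ that is the root of a tree $T'$ of some $F'$, the bags to which $r$ is added on account of $F$ are exactly those meeting the connected subgraph $T'$, which form a connected subtree of $\mathcal{D}'$ and already contain every bag originally holding $r$; taking the union over all forests of such subtrees, each containing the original connected subtree of $r$, keeps the set connected. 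This is also where the design choice to add \emph{exact-length} star edges — needed for the $+\e\Delta$ distortion — is reconciled with keeping the decomposition small, since the new adjacencies are routed through the trees $T'$ that live inside $G'$ rather than through $G$ or $\check{G}$. The \emph{low-hop} hypothesis enters only indirectly, as the source of the contracted treewidth bound.
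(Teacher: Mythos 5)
Your proposal is correct and matches the intended argument: the paper itself gives no proof of \Cref{lem:cclmst-treewidth} (it is cited as implicit in \cite{CCLMST23}, Section~7), and the hypotheses were abstracted precisely to support the construction you give --- host graph $G'$ plus root-to-vertex edges of weight $\dist_G(r,u)$ from each tree of each forest, distortion via the root preservation property, and treewidth via augmenting the $O(h)$-width decomposition of $G'$ with the roots supplied through the forest correspondence property, exactly as you argue. Two minor points to make explicit: assign the $G'$-edges of $H$ weights at least the $G$-distances of their endpoints (the paper leaves $G'$ unweighted), and note that your upper bound uses that the trees of $\cF$ are dominating star forests with exact $\dist_G$ edge weights, which holds for the $\cF$ produced by \Cref{lem:embedding-properties} (the only way the lemma is invoked) though it is not literally stated among the four hypotheses.
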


Combining \Cref{lem:embedding-properties} and \Cref{lem:cclmst-treewidth} proves \Cref{thm:add-apex-minor}.

%------------------------------------------
%  References
%------------------------------------------

\paragraph{Acknowledgement.~} Hung Le and Cuong Than are supported by the NSF CAREER Award No. CCF-2237288 and an NSF Grant No. CCF-2121952.

\small
\bibliographystyle{alphaurl}
\bibliography{main}

\end{document}